\newtheorem{corollary}{Corollary}
\newtheorem{lemma}{Lemma}
\newtheorem{theorem}{Theorem}
\theoremstyle{definition}
\newtheorem{definition}{Definition}
\theoremstyle{remark}
\newtheorem*{remark}{Remark}
\DeclareMathOperator{\diag}{diag}
\DeclareMathOperator{\poly}{poly}
\DeclareMathOperator{\Span}{span}
\DeclareMathOperator{\tr}{tr}
\DeclareMathOperator{\trunc}{trunc}
\begin{document}

\begin{CJK*}{UTF8}{}

\title{Computing energy density in one dimension}

\CJKfamily{gbsn}

\author{Yichen Huang (黄溢辰)\\
Department of Physics, University of California, Berkeley, Berkeley, California 94720, USA\\
yichenhuang@berkeley.edu}

\maketitle

\end{CJK*}

\begin{abstract}

We study the problem of computing energy density in one-dimensional quantum systems. We show that the ground-state energy per site or per bond can be computed in time (i) independent of the system size and subexponential in the desired precision if the ground state satisfies area laws for the Renyi entanglement entropy (this is the first rigorous formulation of the folklore that area laws imply efficient matrix-product-state algorithms); (ii) independent of the system size and polynomial in the desired precision if the system is gapped. As a by-product, we prove that in the presence of area laws (or even an energy gap) the ground state can be approximated by a positive semidefinite matrix product operator of bond dimension independent of the system size and subpolynomial in the desired precision of local properties.

\end{abstract}

\section{Introduction} \label{intro}

The quantum PCP conjecture \cite{AAV13} is an assertion about the computational complexity of the ground-state energy per bond on general interaction graphs \cite{BH13}. It is one of the most prominent open problems for theoretical computer scientists in the emerging field of quantum Hamiltonian complexity \cite{Osb12}. Furthermore, computing energy density on regular lattices is a fundamental problem in condensed matter physics. Here we present a comprehensive study of this problem in one-dimensional (1D) quantum systems.

Consider a Hamiltonian $H=\sum_{i=1}^{n-1}H_i$ on a chain of $n$ spins (qudits), where $H_i$ acts on the spins $i$ and $i+1$ (nearest-neighbor interaction). Without loss of generality, assume $\|H_i\|\le1$ and $\lambda(H_i)=0$, where $\lambda(\cdots)$ denotes the ground-state energy of a Hamiltonian. We would like to compute the energy density $\lambda(H)/n$ to precision $\delta$, where $\delta=\Theta(1)$ is an (arbitrarily small) constant. We begin with a general result.

\begin{lemma} \label{lem0}
$\lambda(H)/n\pm\delta$ can be computed in time $2^{O(1/\delta)}$ with probability at least $0.999$.
\end{lemma}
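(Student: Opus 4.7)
The plan is to exploit the locality of $H$ in one dimension: because each $H_i$ couples only two adjacent sites, a window of $\ell=\Theta(1/\delta)$ consecutive sites already contains enough information to estimate the bulk density $\lambda(H)/n$ up to precision $\delta$. First I partition the chain into consecutive blocks $W_1,W_2,\ldots,W_k$ of $\ell$ sites each (with $k\approx n/\ell$), and let $H_W:=\sum_{i:\{i,i+1\}\subseteq W}H_i$ denote the Hamiltonian restricted to a window $W$.

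The core structural step is a two-sided sandwich between $\lambda(H)$ and $S:=\sum_{j=1}^k\lambda(H_{W_j})$. For the lower bound, decompose $H=\sum_j H_{W_j}+\sum_{i\in\text{boundary}}H_i$; the normalization $\lambda(H_i)=0$ forces each $H_i$ to be positive semidefinite, so dropping the $k-1$ boundary terms only decreases the ground-state energy, giving $\lambda(H)\ge S$. For the upper bound, let $|\phi_j\rangle$ be a ground state of $H_{W_j}$ and plug the product $|\phi_1\rangle\otimes\cdots\otimes|\phi_k\rangle$ into the variational principle; bounding each boundary contribution by $\|H_i\|\le 1$ yields $\lambda(H)\le S+(k-1)$. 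Dividing by $n$ gives $|\lambda(H)/n-S/n|\le 1/\ell$.

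It remains to estimate $S/n=\frac{1}{k\ell}\sum_j\lambda(H_{W_j})$ without touching all $\Theta(n)$ windows. The plan is to draw $m$ windows uniformly at random, exactly diagonalize each $H_{W_j}$ (a Hermitian matrix of dimension $d^\ell$) in time $2^{O(\ell\log d)}=2^{O(1/\delta)}$, and average the values $\lambda(H_{W_j})/\ell\in[0,1]$. The expectation of this random variable equals $S/n$, so Hoeffding's inequality guarantees that $m=O(\log(1/\epsilon)/\delta^2)$ samples suffice to estimate $S/n$ within $\delta/2$ with failure probability at most $\epsilon=0.001$. Taking $\ell=\lceil 2/\delta\rceil$ and combining with the sandwich yields a $\delta$-accurate estimate of $\lambda(H)/n$ in total time $2^{O(1/\delta)}$.

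The only step requiring any care is the sandwich: I anticipate no conceptual obstacle, but one should treat the tail of the chain (when $n$ is not a multiple of $\ell$) as a single smaller block and verify that the edge effects still contribute $O(n/\ell)$ rather than $O(n)$. The corner case $n<\ell$ is handled by diagonalizing the entire Hamiltonian directly, which still fits within the $2^{O(1/\delta)}$ time budget.
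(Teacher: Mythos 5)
Your proposal is correct and follows essentially the same route as the paper: partition into blocks of $\Theta(1/\delta)$ sites, drop the positive semidefinite boundary terms to sandwich $\lambda(H)$ between $\sum_j\lambda(H_{W_j})$ and that sum plus the number of dropped terms, then estimate the block average by sampling $O(1/\delta^2)$ blocks, exactly diagonalizing each, and applying a Chernoff/Hoeffding bound. The only cosmetic difference is that you certify the upper bound of the sandwich with a product state whereas the paper uses the operator-norm bound $\|H-H'\|\le n\delta/2$; these give the same estimate.
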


\begin{proof}
We divide the system into $n\delta/2$ blocks, each of which consists of $2/\delta$ spins (we assume both $n\delta/2$ and $2/\delta$ are integers for convenience). In particular, define $H'=\sum_{i=1}^{n\delta/2}H_i'$ with $H_i'=\sum_{j=1}^{2/\delta-1}H_{2i/\delta+j}$ acting on the spins $2i/\delta+1,2i/\delta+2,\ldots,2(i+1)/\delta$. Let $i_1,i_2,\ldots,i_m$ with $m=O(1/\delta^2)$ be independent uniform random variables on $\{1,2,\ldots,n\delta/2\}$. The Chernoff bound implies
\begin{eqnarray}
&&\|H-H'\|=\left\|\sum_{i=1}^{n\delta/2}H_{2i/\delta}\right\|\le n\delta/2\Rightarrow0\le\lambda(H)/n-\lambda(H')/n\le\delta/2\nonumber\\
&&\Rightarrow\Pr\left(\left|\lambda(H)/n-\frac{\delta}{2m}\sum_{k=1}^m\lambda(H_{i_k}')\right|\ge\delta\right)\le\Pr\left(\left|\lambda(H')/n-\frac{\delta}{2m}\sum_{k=1}^m\lambda(H_{i_k}')\right|\ge\delta/2\right)\nonumber\\
&&\le\exp(-\Omega(m\delta^2))\le0.001.
\end{eqnarray}
As each $\lambda(H_{i_k}')$ can be computed by exact diagonalization, and the running time of the algorithm is $m\exp(O(1/\delta))=\exp(O(1/\delta))$.
\end{proof}

\begin{remark}
This algorithm is an efficient polynomial-time randomized approximation scheme (EPRAS) but not a fully polynomial-time randomized approximation scheme (FPRAS). Here, a randomized $f(1/\delta)\poly(n)$-time algorithm is an EPRAS ($f$ can be any function), and a randomized $\poly(n/\delta)$-time algorithm is an FPRAS. As computing $\lambda(H)$ to precision $\delta=1/\poly(n)$ is QMA-complete \cite{AGIK07, AGIK09}, an FPRAS in general 1D quantum systems is believed to be impossible.
\end{remark}

The algorithm of Lemma \ref{lem0} is inefficient both in theory and in practice because its running time grows exponentially with $1/\delta$. In this paper, we improve the scaling in $1/\delta$ in the presence of area laws for entanglement or even an energy gap.

\section{Computing energy density in the presence of area laws} \label{area}

A folklore in condensed matter physics is that in 1D a (ground) state can be efficiently represented by a matrix product state (MPS) if it satisfies an area law for entanglement; then, it is likely that the MPS representation of the ground state can be found efficiently using the (heuristic) DMRG algorithm \cite{Whi92, Whi93}. In this section, we give the first rigorous formulation of this folklore. In particular, we show that the ground-state energy density $\lambda(H)/n$ can be computed to precision $\delta$ in time independent of $n$ and subexponential in $1/\delta$ if the ground state satisfies area laws for the Renyi entanglement entropy (a function $f(x)$ is subexponential if $f(x)=o(\exp(x^c))$ for any constant $c>0$). As a by-product, we prove that in the presence of area laws the ground state can be approximated by a positive semidefinite matrix product operator (MPO) of bond dimension independent of $n$ and subpolynomial in the precision of local properties.

We use the Renyi entanglement entropy because an area law for (or even logarithmic divergence of) the Renyi entanglement entropy $R_\alpha(0<\alpha<1)$ implies efficient MPS representations \cite{VC06}. In contrast, an area law for the von Neumann entanglement entropy does not necessarily imply efficient MPS representations (see \cite{SWVC08} for a counterexample), although the von Neumann entanglement entropy is the most popular measure of entanglement (for pure states) in quantum information and condensed matter theory.

\subsection{Preliminaries} \label{pre}

MPS is a data structure underlying DMRG-like algorithms.

\begin{definition} [Matrix product state (MPS) \cite{PVWC07, FNW92}]
Let $d=\Theta(1)$ be the local dimension of each spin, and $\{|j_k\rangle\}_{j_k=1}^d$ be the computational basis of the Hilbert space of the spin $k$. Suppose $\{D_k\}_{k=1}^{n+1}$ with $D_1=D_{n+1}=1$ is a sequence of positive integers. An MPS with open boundary conditions takes the form
\begin{equation} \label{MPS}
|\psi\rangle=\sum_{j_1,j_2,\ldots,j_n=1}^dA_{j_1}^{[1]}A_{j_2}^{[2]}\cdots A_{j_n}^{[n]}|j_1j_2\cdots j_n\rangle,
\end{equation}
where $A_{j_k}^{[k]}$ is a matrix of size $D_k\times D_{k+1}$. Define $D=\max\{D_k\}_{k=1}^{n+1}$ as the bond dimension of the MPS $|\psi\rangle$.
\end{definition}

\begin{remark}
Any state can be expressed exactly as an MPS of bond dimension $D=\exp(O(n))$, and an MPS representation is efficient if $D=\poly(n)$.
\end{remark}

Any MPS can be transformed into the so-called canonical form \cite{PVWC07} with the same bond dimension such that
\begin{equation} \label{canonical}
\sum_{j=1}^dA_j^{[k]}A_j^{[k]\dag}=I,~\sum_{j=1}^dA_j^{[k]\dag}\Lambda^{[k]}A_j^{[k]}=\Lambda^{[k+1]},
\end{equation}
where $\Lambda^{[k]}=\diag\{(\lambda_1^{[k]})^2,(\lambda_2^{[k]})^2,\ldots\}$ with $\lambda_1^{[k]}\ge\lambda_2^{[k]}\ge\cdots>0$ the Schmidt coefficients across the cut $k-1|k$ in nonascending order. In the remainder of this section, we will use the canonical condition (\ref{canonical}) extensively.

MPO is the operator analog of MPS.

\begin{definition} [Matrix product operator (MPO)]
Let $\{\sigma_{j_k}\}_{j_k=1}^{d^2}$ be a basis of the space of operators on the spin $k$. Suppose $\{D_k\}_{k=1}^{n+1}$ with $D_1=D_{n+1}=1$ is a sequence of positive integers. An MPO with open boundary conditions takes the form
\begin{equation}
K=\sum_{j_1,j_2,\ldots,j_n=1}^{d^2}\left(A_{j_1}^{[1]}A_{j_2}^{[2]}\cdots A_{j_n}^{[n]}\right)\sigma_{j_1}\otimes\sigma_{j_2}\otimes\cdots\otimes\sigma_{j_n},
\end{equation}
where $A_{j_k}^{[k]}$ is a matrix of size $D_k\times D_{k+1}$. Similarly, define $D=\max\{D_k\}_{k=1}^{n+1}$ as the bond dimension of the MPO $K$.
\end{definition}

\begin{remark}
Any operator can be expressed exactly as an MPO of bond dimension $D=\exp(O(n))$.
\end{remark}

\begin{definition} [Entanglement entropy]
The Renyi entanglement entropy $R_\alpha(0<\alpha<1)$ of a bipartite (pure) quantum state $\rho_{AB}$ is defined as
\begin{equation}
R_\alpha(\rho_A)=(1-\alpha)^{-1}\log\tr\rho_A^\alpha,
\end{equation}
where $\rho_A=\tr_B\rho_{AB}$ is the reduced density matrix. The von Neumann entanglement entropy is defined as
\begin{equation}
S(\rho_A)=-\mathrm{tr}(\rho_A\log\rho_A)=\lim_{\alpha\rightarrow1^-}R_\alpha(\rho_A).
\end{equation}
\end{definition}

\begin{remark}
As $R_\alpha$ is a monotonically decreasing function of $\alpha$, an area law for $R_\alpha$ implies that for $R_\beta$ if $\alpha\le\beta$.
\end{remark}

Fixing a cut, the truncation error can be upper bounded as follows.

\begin{lemma}[\cite{VC06}] \label{lem1}
Let $\lambda_1\ge\lambda_2\ge\cdots>0$ with $\sum_j\lambda_j^2=1$ be the Schmidt coefficients across the cut in nonascending order. Then,
\begin{equation}
\varepsilon_D:=\sum_{j\ge D+1}\lambda_j^2\le\exp((1-\alpha)(R_\alpha-\log D)/\alpha).
\end{equation}
\end{lemma}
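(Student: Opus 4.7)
The plan is to prove the bound by combining two elementary facts: (i) the Renyi entropy with $\alpha<1$ controls the $\alpha$-moment $\sum_j \lambda_j^{2\alpha}$, and (ii) the tail beyond index $D$ can be bounded using the nonascending ordering of the Schmidt coefficients. Write $p_j := \lambda_j^2$, so that $\sum_j p_j = 1$ and, by definition of the Renyi entropy, $\sum_j p_j^\alpha = e^{(1-\alpha)R_\alpha}$.

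First I would bound $p_{D+1}$ from above. Since $p_1 \ge p_2 \ge \cdots$ and $\alpha > 0$, the sequence $p_j^\alpha$ is also nonascending, so
\begin{equation}
D\, p_{D+1}^\alpha \le \sum_{j=1}^{D} p_j^\alpha \le \sum_j p_j^\alpha = e^{(1-\alpha)R_\alpha},
\end{equation}
which after raising to the $1/\alpha$ power gives $p_{D+1} \le D^{-1/\alpha}\, e^{(1-\alpha)R_\alpha/\alpha}$.

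Next I would bound the tail $\varepsilon_D$ by splitting each term as $p_j = p_j^{1-\alpha} p_j^\alpha$ (this is the key trick: the factor $p_j^{1-\alpha}$ is small for small $p_j$, and the remaining factor $p_j^\alpha$ is summable with budget $e^{(1-\alpha)R_\alpha}$). Using the monotonicity bound $p_j \le p_{D+1}$ for $j > D$:
\begin{equation}
\varepsilon_D = \sum_{j \ge D+1} p_j^{1-\alpha}\, p_j^\alpha \le p_{D+1}^{1-\alpha} \sum_{j \ge D+1} p_j^\alpha \le p_{D+1}^{1-\alpha}\, e^{(1-\alpha)R_\alpha}.
\end{equation}

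Finally I would substitute the bound on $p_{D+1}$: the exponent becomes $(1-\alpha)^2 R_\alpha/\alpha + (1-\alpha)R_\alpha = (1-\alpha)R_\alpha/\alpha$, and the power of $D$ becomes $-(1-\alpha)/\alpha$, yielding exactly $\exp\bigl((1-\alpha)(R_\alpha - \log D)/\alpha\bigr)$. There is no real obstacle here; the only judgment call is the splitting $p_j = p_j^{1-\alpha}p_j^\alpha$, which is natural once one observes that the desired bound has the form (tail-level)$^{1-\alpha}$ times (Renyi budget). A one-line alternative would use H\"older's inequality on the tail, but the elementary splitting above is essentially equivalent and keeps the dependence on $\alpha$ transparent.
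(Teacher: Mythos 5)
Your proof is correct: the bound $p_{D+1}\le D^{-1/\alpha}e^{(1-\alpha)R_\alpha/\alpha}$ from the ordering, followed by the splitting $p_j=p_j^{1-\alpha}p_j^\alpha$ on the tail, gives exactly the stated exponent. The paper itself gives no proof of this lemma (it is quoted from \cite{VC06}), and your argument is the standard one underlying that reference, so there is nothing to add.
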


Minimizing the energy over MPS of constant bond dimension can be done efficiently by dynamic programming (or the transfer matrix method in statistical mechanics).

\begin{lemma} [\cite{SC10, AAI10}] \label{lem2}
Let $\mathcal{MPS}_D$ be the set of all normalized MPS of bond dimension $D$, and $H$ be a local Hamiltonian on a chain of $n$ spins. Then $\min_{|\psi\rangle\in\mathcal{MPS}_D}\langle\psi|H|\psi\rangle$ can be computed to precision $1/\poly(n)$ in time $n^{O(D^2)}$.
\end{lemma}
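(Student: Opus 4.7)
The plan is to cast the minimization as a dynamic program (DP) over an $\epsilon$-net of canonical MPS tensors, leveraging both the canonical form (\ref{canonical}) and the 2-local structure of $H=\sum_iH_i$.

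First, I would restrict attention to MPS in canonical form. By (\ref{canonical}) every $A_j^{[k]}$ satisfies $\|A_j^{[k]}\|\le 1$ and $\tr\Lambda^{[k]}=1$, so each local tensor and each Schmidt matrix lives in a bounded $O(D^2)$-dimensional region (a product of Stiefel manifolds and a simplex). A standard volume argument yields an $\epsilon$-net of cardinality $N=(1/\epsilon)^{O(D^2)}$ on this region.

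Next, I would sweep left to right. The DP state at position $k$ is a pair $(A^{[k]},\Lambda^{[k]})$ drawn from the $\epsilon$-net, and the table entry stores the minimum of $\sum_{i<k}\langle\psi|H_i|\psi\rangle$ attainable by any consistent choice of $A^{[1]},\ldots,A^{[k-1]}$. A transition to site $k+1$ enumerates the $N$ candidate tensors $A^{[k+1]}$; for each, I compute the local energy $\langle\psi|H_k|\psi\rangle$ as a function of $(A^{[k]},A^{[k+1]},\Lambda^{[k]})$, update $\Lambda^{[k+1]}=\sum_jA_j^{[k]\dag}\Lambda^{[k]}A_j^{[k]}$, snap the result $(A^{[k+1]},\Lambda^{[k+1]})$ to the nearest net point, and relax the successor entry. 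Each of the $n$ sweeps handles $N\cdot N$ transitions, giving total runtime $n(1/\epsilon)^{O(D^2)}$.

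The precision analysis reduces to showing that the snapping errors accumulate only additively in $n$. A direct Lipschitz bound on the map $(A^{[1]},\ldots,A^{[n]})\mapsto\langle\psi|H|\psi\rangle$, combined with the observation that $\Lambda\mapsto\sum_jA_j^{\dag}\Lambda A_j$ is a trace-preserving completely positive map (hence contractive in trace norm, courtesy of the canonical condition), should yield a cumulative energy perturbation of $O(n\epsilon)$. Choosing $\epsilon=1/\poly(n)$ then delivers precision $1/\poly(n)$ and the claimed runtime $n^{O(D^2)}$.

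The main obstacle is exactly this error-propagation step. Without the canonical form the tensors are unbounded, the $\epsilon$-net has no uniform size, and small perturbations at early sites can compound multiplicatively across the chain, forcing $\epsilon=\exp(-\Omega(n))$ and destroying the polynomial dependence on $n$. Establishing linear-in-$n$ error accumulation through the contractivity of the canonical update map is therefore the technical crux of the proof; the DP framework and epsilon-net counting are essentially routine once that bound is in hand.
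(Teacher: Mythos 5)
The paper does not prove Lemma~\ref{lem2} itself; it is imported from \cite{SC10, AAI10}. Your proposal is essentially the argument of those references: a left-to-right dynamic program over an $\epsilon$-net of boundary data, using the fact that in canonical form the local energy $\langle\psi|H_k|\psi\rangle$ depends only on $(\Lambda^{[k]},A^{[k]},A^{[k+1]})$, and using contractivity of the trace-preserving map $\Lambda\mapsto\sum_jA_j^{\dag}\Lambda A_j$ to keep the snapping errors additive (a careful count gives $O(n^2\epsilon)$ rather than $O(n\epsilon)$, but this is still absorbed by $\epsilon=1/\poly(n)$, yielding runtime $n(1/\epsilon)^{O(D^2)}=n^{O(D^2)}$). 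Two points are worth tightening. First, after a snap the updated matrix $\sum_jA_j^{[k]\dag}\Lambda^{[k]}A_j^{[k]}$ is a general positive trace-one matrix, not a diagonal one, so the DP state should be the left boundary operator itself (still $O(D^2)$ real parameters, so the net size and runtime are unchanged) rather than literally the canonical pair with $\Lambda^{[k]}$ diagonal; otherwise one must insert a gauge transformation at every step that also acts on the neighboring tensor. Second, the precision claim needs the two-sided comparison: every canonical MPS of bond dimension $D$ shadows a net path whose DP value is within $\poly(n)\epsilon$ of its energy, \emph{and} every net path rounds back to an actual, approximately normalized MPS of bond dimension $D$ whose Rayleigh quotient is within $\poly(n)\epsilon$ of the table value, so that the DP cannot report a value spuriously below $\min_{|\psi\rangle\in\mathcal{MPS}_D}\langle\psi|H|\psi\rangle$. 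Neither point changes the structure of your argument or the claimed bound.
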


\subsection{Approximation of local properties}

As errors may accumulate while truncating each bond, it is unlikely that the ground-state wave function can be approximated up to a small constant error (measured by fidelity) by an MPS of bond dimension independent of the system size, even if there is an energy gap. In this subsection, we prove that in the presence of area laws the ground state can be approximated by a positive semidefinite MPO of bond dimension independent of the system size and subpolynomial in the precision of local properties (a function $f(x)$ is subpolynomial if $f(x)=o(x^c)$ for any constant $c>0$, i.e., it grows slower than any power of $x$).

\begin{lemma} \label{lem3}
There exists a positive semidefinite MPO $\rho_D$ of bond dimension $D^2$ such that
\begin{equation} \label{t1}
\left|\langle\psi|\hat O|\psi\rangle-\tr\left(\rho_D\hat O\right)\right|\le O(\sqrt{\varepsilon_D}),~\varepsilon_D:=\max_k\sum_{j\ge D+1}\left(\lambda^{[k]}_j\right)^2
\end{equation}
for any local operator $\hat O$ with bounded norm, where $\{\lambda^{[k]}_j\}_{j\ge1}$ are the Schmidt coefficients of $|\psi\rangle$ across the cut $k-1|k$ in nonascending order.
\end{lemma}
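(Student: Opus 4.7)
My approach is to construct $\rho_D$ by truncating the canonical-form MPS of $|\psi\rangle$ at every bond to the top $D$ Schmidt vectors, obtaining an MPS $|\psi_D\rangle$ of bond dimension $D$, and then setting $\rho_D = |\psi_D\rangle\langle\psi_D|$. This is manifestly positive semidefinite, and because an MPO of the form $|\phi\rangle\langle\phi|$ has bond dimension equal to the square of the MPS bond dimension of $|\phi\rangle$, the MPO bond dimension of $\rho_D$ is $D^2$.

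The approximation bound is a statement about reduced density matrices on the support $R = [a,b]$ of $\hat O$, where $|R| = O(1)$. Using the canonical conditions (\ref{canonical}) iteratively to absorb the tensors outside $R$, I would first derive the explicit expression
\begin{equation*}
\rho_R^{(\psi)}[\vec j, \vec j'] = \tr\bigl(\Lambda^{[a]} A^{[a]}_{j_a} \cdots A^{[b]}_{j_b} A^{[b]\dag}_{j'_b} \cdots A^{[a]\dag}_{j'_a}\bigr),
\end{equation*}
which depends only on the $|R|$ tensors inside $R$ and the boundary spectrum $\Lambda^{[a]}$. The analogous expression for the reduced density of $\rho_D$ has the same form with truncated tensors $\tilde A^{[k]}$ in place of $A^{[k]}$ and with an effective left boundary operator $\tilde L_{a-1}$ obtained by iterating the truncated recursion $L_k = \sum_j \tilde A^{[k]\dag}_j L_{k-1} \tilde A^{[k]}_j$ starting from $L_0 = 1$.

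I would then bound $\|\rho_R^{(\psi)} - \rho_R^{(\rho_D)}\|_1$ by a telescoping argument: swap $\Lambda^{[a]}$ for $\tilde L_{a-1}$, then replace each $A^{[k]}$ by $\tilde A^{[k]}$ for $k = a, a+1, \ldots, b$, one term at a time. Each of the $O(|R|)$ tensor swaps contributes $O(\sqrt{\varepsilon_D})$ via Cauchy--Schwarz applied to the canonical condition $\sum_j A^{[k]}_j A^{[k]\dag}_j = I$ together with Lemma \ref{lem1}, yielding the desired bound once combined with $|\langle\psi|\hat O|\psi\rangle - \tr(\rho_D \hat O)| \le \|\rho_R^{(\psi)} - \rho_R^{(\rho_D)}\|_1 \|\hat O\|$.

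The main obstacle, and the heart of the analysis, is the boundary estimate $\|\tilde L_{a-1} - \tilde\Lambda^{[a]}\|_1 = O(\varepsilon_D)$. A naive iteration of the truncated recursion introduces an $O(\varepsilon_D)$ perturbation at each of the $a-1$ steps, threatening an $n$-dependent accumulation that would destroy the bound. The key to avoiding this is that the transfer map $X \mapsto \sum_j A^{[k]\dag}_j X A^{[k]}_j$ is trace-preserving and completely positive (hence contractive in trace norm on Hermitian operators), and that the canonical-form truncation confines the discarded weight to the orthogonal complement of the relevant top-$D$ Schmidt subspace at each cut. Carefully tracking how these confined errors propagate through the contractive recursion, and applying symmetric reasoning on the right boundary, should yield a total bound independent of $n$.
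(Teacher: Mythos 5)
Your construction is genuinely different from the paper's, and the difference is exactly where the gap lies. The paper does not take $\rho_D=|\psi_D\rangle\langle\psi_D|$ for a hard-truncated pure MPS; it builds the mixed MPO (\ref{mpo}) whose tensors (\ref{def}) \emph{retain} the tail Schmidt weight $\sum_{s\ge D+1}\Lambda^{[k]}_s$ at every bond, rerouting it through the distinguished index $\alpha=\alpha'=1$ with weights $(x^{[k]}_s)^2$ and decohering it against the kept block. The payoff is the pair of exact sum rules $\sum_{\beta,\beta',i}B^{[k]}_{i,i}\delta_{\beta\beta'}=\delta_{\alpha\alpha'}$ and $\sum_{\alpha,\alpha',i}B^{[k]}_{i,i}\delta_{\alpha\alpha'}\tilde\Lambda^{[k]}_\alpha=\delta_{\beta\beta'}\tilde\Lambda^{[k+1]}_\beta$: the environments on either side of the support of $\hat O$ are pinned down exactly by iterating these identities, with \emph{zero} accumulated error, and the whole $O(\sqrt{\varepsilon_D})$ bound reduces to a purely local comparison at the two or three tensors inside the support (the terms $r_1,r_2,r_0-s_0,s_1$). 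You correctly identify that your whole proof hinges on the boundary estimate $\|\tilde L_{a-1}-\tilde\Lambda^{[a]}\|_1=O(\varepsilon_D)$, but for simultaneous hard truncation this estimate fails. Contractivity of the transfer map prevents \emph{amplification} of an existing discrepancy; it does nothing about the fresh error injected at each bond. Writing $\Lambda^{[k+1]}-\tilde L_k=\sum_jA_j^{[k]\dag}(\Lambda^{[k]}-\tilde L_{k-1})A_j^{[k]}+(M-P^{[k+1]}MP^{[k+1]})$ with $M=\sum_jA_j^{[k]\dag}\tilde L_{k-1}A_j^{[k]}$, the second term has a trace deficit of up to $\tr((1-P^{[k+1]})\Lambda^{[k+1]})=\Theta(\varepsilon_D)$ at each of the $a-1$ bonds left of the support, and since $|1-\tr\tilde L_{a-1}|$ lower-bounds the trace-norm distance, the recursion yields $\Theta(n\varepsilon_D)$, not $O(\varepsilon_D)$. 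This is the same linear-in-length loss that appears explicitly in the paper's own bond-truncation lemma in Section 3, where it is tolerable only because merely $O(1/\delta)$ bonds are truncated there.

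Renormalizing $|\psi_D\rangle$ does not obviously rescue the argument, because the discrepancy is at the operator level and not only in the trace, and the paper flags exactly this as the reason for abandoning pure states: ``as errors may accumulate while truncating each bond, it is unlikely that the ground-state wave function can be approximated\ldots by an MPS of bond dimension independent of the system size.'' It is telling that even Theorem \ref{thm2} extracts from $\rho_D$ only a pure MPS with small \emph{energy} (an extensive quantity for which a per-bond error of $O(\sqrt{\varepsilon_D})$ is acceptable), and that different pure components $\rho_D^{p_1,q_1,\ldots,p_n,q_n}$ may be needed for different observables; no single pure MPS of $n$-independent bond dimension is claimed to reproduce all local expectation values. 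To repair your proof you would either need a genuinely new argument showing that the environment error of simultaneous hard truncation does not accumulate --- which would be a stronger result than Lemma \ref{lem3} --- or switch to a construction that, like (\ref{def}), preserves the trace exactly at every bond, which is precisely the paper's mixed MPO.
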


\begin{proof}
We express $|\psi\rangle$ exactly as an MPS (\ref{MPS}) of bond dimension $\exp(O(n))$ in the canonical form (\ref{canonical}), and define an MPO $\rho_D$ of bond dimension $D^2$ as
\begin{equation} \label{mpo}
\rho_D=\sum_{i_1,j_1,i_2,j_2,\ldots,i_n,j_n=1}^dB_{i_1,j_1}^{[1]}B_{i_2,j_2}^{[2]}\cdots B_{i_n,j_n}^{[n]}|i_1i_2\cdots i_n\rangle\langle j_1j_2\cdots j_n|.
\end{equation}
In the bulk, $B_{i_k,j_k}^{[k]}$ is a matrix of size $D^2\times D^2$; at the boundaries, $B_{i_1,j_1}^{[1]}$ and $B_{i_n,j_n}^{[n]}$ are row and column vectors of length $D^2$, respectively. Without loss of generality, we will only work in the bulk in the remainder of this proof. The rows of $B_{i,j}^{[k]}$ are labeled by two indices $\alpha,\alpha'$ for $\alpha,\alpha'=1,2,\ldots,D$, and the columns of $B_{i,j}^{[k]}$ are labeled by $\beta,\beta'$ for $\beta,\beta'=1,2,\ldots,D$. Let $\Lambda^{[k]}_j:=(\lambda^{[k]}_j)^2$ and $S:=\{1,D+1,D+2,\ldots\}$. Define $x^{[k]}_s=\sqrt{\Lambda^{[k]}_s/\sum_{t\in S}\Lambda^{[k]}_t}$ if $s\in S$ and $x^{[k]}_s=1$ otherwise. The matrix elements of $B^{[k]}_{i,j}$ are given by
\begin{eqnarray} \label{def}
&&B^{[k]}_{i,j}(\alpha=\alpha'=1;\beta=\beta'=1)=\sum_{s,t\in S}A^{[k]}_i(s,t)A^{[k]*}_j(s,t)\left(x^{[k]}_s\right)^2,\nonumber\\
&&B^{[k]}_{i,j}(\alpha=\alpha'=1;\beta+\beta'\ge3)=\sum_{s\in S}A^{[k]}_i(s,\beta)A^{[k]*}_j(s,\beta')\left(x^{[k]}_s\right)^2,\nonumber\\
&&B^{[k]}_{i,j}(\alpha+\alpha'\ge3;\beta=\beta'=1)=\sum_{t\in S}A^{[k]}_i(\alpha,t)A^{[k]*}_j(\alpha',t)x^{[k]}_\alpha x^{[k]}_{\alpha'},\nonumber\\
&&B^{[k]}_{i,j}(\alpha+\alpha'\ge3;\beta+\beta'\ge3)=A^{[k]}_i(\alpha,\beta)A^{[k]*}_j(\alpha',\beta')x^{[k]}_\alpha x^{[k]}_{\alpha'},
\end{eqnarray}
where (e.g.) $A^{[k]}_i(s,t)$ is the element in the $s$th row and $t$th column of $A^{[k]}_i$. Define $\tilde\Lambda^{[k]}_s=\sum_{t\in S}\Lambda^{[k]}_t$ if $s\in S$ and $\tilde\Lambda^{[k]}_s=\Lambda^{[k]}_s$ otherwise. The matrices $B^{[k]}_{i,j}$'s satisfy a pair of conditions similar to (\ref{canonical}):
\begin{equation}
\sum_{\beta,\beta',i}B^{[k]}_{i,i}(\alpha,\alpha';\beta,\beta')\delta_{\beta\beta'}=\delta_{\alpha\alpha'},~\sum_{\alpha,\alpha',i}B^{[k]}_{i,i}(\alpha,\alpha';\beta,\beta')\delta_{\alpha\alpha'}\tilde\Lambda^{[k]}_{\alpha}=\delta_{\beta\beta'}\tilde\Lambda^{[k+1]}_{\beta}.
\end{equation}
Indeed,
\begin{eqnarray}
&&\sum_{\beta,\beta',i}B^{[k]}_{i,i}(\alpha=\alpha'=1;\beta,\beta')\delta_{\beta\beta'}=\sum_{s\in S,t,i}A^{[k]}_i(s,t)A^{[k]*}_i(s,t)\left(x^{[k]}_s\right)^2=\sum_{s\in S}\left(x^{[k]}_s\right)^2=1,\\
&&\sum_{\beta,\beta',i}B^{[k]}_{i,i}(\alpha+\alpha'\ge3;\beta,\beta')\delta_{\beta\beta'}=\sum_{t,i}A^{[k]}_i(\alpha,t)A^{[k]*}_i(\alpha',t)x^{[k]}_\alpha x^{[k]}_{\alpha'}=\delta_{\alpha\alpha'}x^{[k]}_\alpha x^{[k]}_{\alpha'}=\delta_{\alpha\alpha'},\\
&&\sum_{\alpha,\alpha',i}B^{[k]}_{i,i}(\alpha,\alpha';\beta=\beta'=1)\delta_{\alpha\alpha'}\tilde\Lambda^{[k]}_{\alpha}=\sum_{t\in S,s,i}A^{[k]}_i(s,t)A^{[k]*}_i(s,t)\left(x^{[k]}_s\right)^2\tilde\Lambda^{[k]}_s\nonumber\\
&&=\sum_{t\in S,s,i}A^{[k]}_i(s,t)A^{[k]*}_i(s,t)\Lambda^{[k]}_s=\tilde\Lambda^{[k+1]}_1,\\
&&\sum_{\alpha,\alpha',i}B^{[k]}_{i,i}(\alpha,\alpha';\beta+\beta'\ge3)\delta_{\alpha\alpha'}\tilde\Lambda^{[k]}_{\alpha}=\sum_{s,i}A^{[k]}_i(s,\beta)A^{[k]*}_i(s,\beta')\left(x^{[k]}_s\right)^2\tilde\Lambda^{[k]}_s\nonumber\\
&&=\sum_{s,i}A^{[k]}_i(s,\beta)A^{[k]*}_i(s,\beta')\Lambda^{[k]}_s=\delta_{\beta\beta'}\tilde\Lambda^{[k+1]}_\beta.
\end{eqnarray}
Assume without loss of generality that $\hat O$ is a $2$-local operator on the spins $k-1$ and $k$. Let $\{|i^{[k]}\rangle\}_{i=1}^d$ be the computational basis of the Hilbert space of the spin $k$. The matrix elements of the reduced density matrix $\rho^{[k-1,k]}_D$ of $\rho_D$ on the support of $\hat O$ are
\begin{eqnarray}
&&\rho^{[k-1,k]}_D(j,j';i,i'):=\left\langle j^{[k-1]}j'^{[k]}\right|\rho_D^{[k-1,k]}\left|i^{[k-1]}i'^{[k]}\right\rangle\nonumber\\
&&=\sum_{s',t'\in S,s,t\ge1}A^{[k-1]}_i(s,t')A^{[k-1]*}_j(s,t')\Lambda^{[k-1]}_sA^{[k]}_{i'}(s',t)A^{[k]*}_{j'}(s',t)\left(x^{[k]}_{s'}\right)^2\nonumber\\
&&+\sum_{\alpha+\beta\ge3,s,t}A^{[k-1]}_i(s,\alpha)A^{[k-1]*}_j(s,\beta)\Lambda^{[k-1]}_sA^{[k]}_{i'}(\alpha,t)A^{[k]*}_{j'}(\beta,t)x^{[k]}_\alpha x^{[k]}_\beta\nonumber\\
&&=s_0^{j,j';i,i'}+s_0^{j,j';i,i'},
\end{eqnarray}
where
\begin{eqnarray}
&&s_0^{j,j';i,i'}=\sum_{s,t,\alpha,\beta}A^{[k-1]}_i(s,\alpha)A^{[k-1]*}_j(s,\beta)\Lambda^{[k-1]}_sA^{[k]}_{i'}(\alpha,t)A^{[k]*}_{j'}(\beta,t)x^{[k]}_\alpha x^{[k]}_\beta,\\
&&s_1^{j,j';i,i'}=\sum_{s',t'\in S,s'+t'\ge3,s,t}A^{[k-1]}_i(s,t')A^{[k-1]*}_j(s,t')\Lambda^{[k-1]}_sA^{[k]}_{i'}(s',t)A^{[k]*}_{j'}(s',t)\left(x^{[k]}_{s'}\right)^2.
\end{eqnarray}
Let
\begin{equation}
L^{i,i'}_{s,t,s',t'}:=A^{[k-1]}_i(s,t')\sqrt{\Lambda^{[k-1]}_s}A^{[k]}_{i'}(s',t)x^{[k]}_{s'}
\end{equation}
such that
\begin{eqnarray}
&&s_1^{j,j';i,i'}=\sum_{s',t'\in S,s'+t'\ge3,s,t}L^{i,i'}_{s,t,s',t'}L^{j,j'*}_{s,t,s',t'}\Rightarrow\left|s_1^{j,j';i,i'}\right|\le\frac{1}{2}\sum_{s',t'\in S,s'+t'\ge3,s,t}\left|L^{i,i'}_{s,t,s',t'}\right|^2+\left|L^{j,j'}_{s,t,s',t'}\right|^2\nonumber\\
&&\Rightarrow\sum_{i,i',j,j'}\left|s_1^{j,j';i,i'}\right|\le\sum_{s',t'\in S,s'+t'\ge3,s,t,i,i',j,j'}L^{i,i'}_{s,t,s',t'}L^{i,i'*}_{s,t,s',t'}\nonumber\\
&&=d^2\sum_{s',t'\in S,s'+t'\ge3,s,t,i,i'}A^{[k-1]}_i(s,t')\Lambda^{[k-1]}_sA^{[k-1]*}_i(s,t')A^{[k]}_{i'}(s',t)A^{[k]*}_{i'}(s',t)\left(x^{[k]}_{s'}\right)^2\nonumber\\
&&=d^2\sum_{s',t'\in S,s'+t'\ge3}\Lambda^{[k]}_{t'}\left(x^{[k]}_{s'}\right)^2=\frac{d^2\sum_{s',t'\in S,s'+t'\ge3}\Lambda^{[k]}_{t'}\Lambda^{[k]}_{s'}}{\sum_{r\in S}\Lambda^{[k]}_r}\le2d^2\sum_{s\ge D+1}\Lambda^{[k]}_s\le2d^2\varepsilon_D.
\end{eqnarray}
Let $P=\diag\{1,1,\ldots,1,0,0,\ldots\}$ with $\tr P=D$ and $P_1=\diag\{1,0,0,\ldots\}\le P$. The matrix elements of the reduced density matrix $\rho^{[k-1,k]}$ of $|\psi\rangle\langle\psi|$ on the support of $\hat O$ are
\begin{eqnarray}
&&\rho^{[k-1,k]}(j,j';i,i'):=\left\langle j^{[k-1]}j'^{[k]}\right|\rho^{[k-1,k]}\left|i^{[k-1]}i'^{[k]}\right\rangle=\tr\left(A_j^{[k-1]\dag}\Lambda^{[k-1]}A_i^{[k-1]}A_{i'}^{[k]}A_{j'}^{[k]\dag}\right)\nonumber\\
&&=r_0^{j,j';i,i'}+r_1^{j,j';i,i'}+r_2^{j,j';i,i'},
\end{eqnarray}
where
\begin{eqnarray}
&&r_0^{j,j';i,i'}=\tr\left(A_j^{[k-1]\dag}\Lambda^{[k-1]}A_i^{[k-1]}PA_{i'}^{[k]}A_{j'}^{[k]\dag}P\right),\nonumber\\
&&r_1^{j,j';i,i'}=\tr\left(A_j^{[k-1]\dag}\Lambda^{[k-1]}A_i^{[k-1]}PA_{i'}^{[k]}A_{j'}^{[k]\dag}(1-P)\right),\nonumber\\
&&r_2^{j,j';i,i'}=\tr\left(A_j^{[k-1]\dag}\Lambda^{[k-1]}A_i^{[k-1]}(1-P)A_{i'}^{[k]}A_{j'}^{[k]\dag}\right).
\end{eqnarray}
Using the Cauchy-Schwarz inequality,
\begin{eqnarray}
&&\sum_{i,i',j,j'}\left|r_1^{i,i';j,j'}\right|=\sum_{i,i',j,j'}\left|\tr\left(\sqrt{\Lambda^{[k-1]}}A_i^{[k-1]}PA_{i'}^{[k]}A_{j'}^{[k]\dag}(1-P)A_j^{[k-1]\dag}\sqrt{\Lambda^{[k-1]}}\right)\right|\nonumber\\
&&\le\sqrt{\sum_{i,i',j,j'}\tr\left(A_{i'}^{[k]\dag}PA_i^{[k-1]\dag}\Lambda^{[k-1]}A_i^{[k-1]}PA_{i'}^{[k]}\right)}\nonumber\\
&&\times\sqrt{\sum_{i,i',j,j'}\tr\left(A_{j'}^{[k]\dag}(1-P)A_j^{[k-1]\dag}\Lambda^{[k-1]}A_j^{[k-1]}(1-P)A_{j'}^{[k]}\right)}\nonumber\\
&&\le\sqrt{d^2\tr(P\Lambda^{[k]}P)d^2\tr((1-P)\Lambda^{[k]}(1-P))}=d^2\sqrt{\sum_{s\le D}\Lambda^{[k]}_s\sum_{s\ge D+1}\Lambda^{[k]}_s}\le d^2\sqrt{\sum_{s\ge D+1}\Lambda^{[k]}_s}\nonumber\\
&&\le d^2\sqrt{\varepsilon_D}.
\end{eqnarray}
Similarly,
\begin{equation}
\sum_{i,i',j,j'}\left|r_2^{j,j';i,i'}\right|\le\sqrt{d^2\tr\Lambda^{[k]}d^2\tr((1-P)\Lambda^{[k]}(1-P))}=d^2\sqrt{\sum_{s\ge D+1}\Lambda^{[k]}_s}\le d^2\sqrt{\varepsilon_D}.
\end{equation}
Using the Cauchy-Schwarz inequality,
\begin{eqnarray}
&&r_0^{j,j';i,i'}=\sum_{\alpha,\beta\le D,s,t}A^{[k-1]}_i(s,\alpha)A^{[k-1]*}_j(s,\beta)\Lambda^{[k-1]}_sA^{[k]}_{i'}(\alpha,t)A^{[k]*}_{j'}(\beta,t)\nonumber\\
&&\Rightarrow\sum_{i,i',j,j'}\left|r_0^{j,j';i,i'}-s_0^{j,j';i,i'}\right|\nonumber\\
&&\le\sum_{\alpha,\beta\le D,s,t,i,i',j,j'}\left|A^{[k-1]}_i(s,\alpha)A^{[k-1]*}_j(s,\beta)\Lambda^{[k-1]}_sA^{[k]}_{i'}(\alpha,t)A^{[k]*}_{j'}(\beta,t)\right|\left(1-x^{[k]}_\alpha x^{[k]}_\beta\right)\nonumber\\
&&\le\sum_{\alpha,\beta\le D,s,t,i,i',j,j'}\left|A^{[k-1]}_i(s,\alpha)A^{[k-1]*}_j(s,\beta)\Lambda^{[k-1]}_sA^{[k]}_{i'}(\alpha,t)A^{[k]*}_{j'}(\beta,t)\right|\left(1-x^{[k]}_\alpha+1-x^{[k]}_\beta\right)\nonumber\\
&&=\left(1-x^{[k]}_1\right)\sum_{i,i',j,j'}\left|A^{[k-1]\dag}_j\Lambda^{[k-1]}A^{[k-1]}_iPA^{[k]}_{i'}A^{[k]\dag}_{j'}P_1\right|+\left|A^{[k-1]\dag}_j\Lambda^{[k-1]}A^{[k-1]}_iP_1A^{[k]}_{i'}A^{[k]\dag}_{j'}P\right|\nonumber\\
&&\le\left(1-x^{[k]}_1\right)\left(d^2\sqrt{\Lambda^{[k]}_1\sum_{s\le D}\Lambda^{[k]}_s}+d^2\sqrt{\Lambda^{[k]}_1\sum_{s\le D}\Lambda^{[k]}_s}\right)\le2d^2\sqrt{\Lambda^{[k]}_1}\left(1-x^{[k]}_1\right)\nonumber\\
&&=2d^2\lambda^{[k]}_1\left(1-\lambda^{[k]}_1\Big/\sqrt{\left(\lambda^{[k]}_1\right)^2+\varepsilon_D}\right)=2d^2\varepsilon_D\Big/\left(\lambda^{[k]}_1+\varepsilon_D/\lambda^{[k]}_1+\sqrt{\left(\lambda^{[k]}_1\right)^2+\varepsilon_D}\right)\nonumber\\
&&\le2d^2\varepsilon_D\Big/\left(2\lambda^{[k]}_1+\varepsilon_D/\lambda^{[k]}_1\right)\le d^2\sqrt{2\varepsilon_D}.
\end{eqnarray}
Finally, (\ref{t1}) follows from
\begin{eqnarray}
&&\sum_{i,i'j,j'}\left|\rho^{[k-1,k]}(j,j';i,i')-\rho^{[k-1,k]}_D(j,j';i,i')\right|\nonumber\\
&&\le\sum_{i,i',j,j'}\left|r_1^{j,j';i,i'}\right|+\left|r_2^{j,j';i,i'}\right|+\left|r_0^{j,j';i,i'}-s_0^{j,j';i,i'}\right|+\left|s_1^{j,j';i,i'}\right|\nonumber\\
&&\le d^2\sqrt{\varepsilon_D}+d^2\sqrt{\varepsilon_D}+d^2\sqrt{2\varepsilon_D}+2d^2\varepsilon_D=O(\sqrt{\varepsilon_D})\nonumber\\
&&\Rightarrow\left\|\rho^{[k-1,k]}-\rho^{[k-1,k]}_D\right\|=O(\sqrt{\varepsilon_D}).
\end{eqnarray}
as $d=\Theta(1)$ is an absolute constant.
\end{proof}

\begin{theorem} \label{thm1}
Suppose $|\psi\rangle$ satisfies an area law for the Renyi entanglement entropy $R_\alpha$ (across any cut). Then there exists an MPO $\rho$ of bond dimension $\exp(2R_\alpha)O(1/\delta)^{4\alpha/(1-\alpha)}$ such that
\begin{equation}
\left|\langle\psi|\hat O|\psi\rangle-\tr\left(\rho\hat O\right)\right|\le\delta
\end{equation}
for any local operator $\hat O$ with bounded norm.
\end{theorem}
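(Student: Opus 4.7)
The plan is to simply bolt Lemma \ref{lem1} onto Lemma \ref{lem3}: the latter already produces, for every $D$, a positive semidefinite MPO $\rho_D$ of bond dimension $D^2$ whose local marginals are within $O(\sqrt{\varepsilon_D})$ of those of $|\psi\rangle$, and the area-law hypothesis is exactly what is needed to turn $\varepsilon_D$ into something that decays polynomially in $D$ at a rate controlled by $\alpha$. The only work is to pick $D$ large enough that $O(\sqrt{\varepsilon_D})\le\delta$ and then read off the bond dimension.

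First I would apply Lemma \ref{lem1} at every cut $k-1\,|\,k$. Because the area law asserts that $R_\alpha$ is bounded by the same constant (which I continue to call $R_\alpha$) independently of $k$ and $n$, the uniform bound
\[
\varepsilon_D=\max_k\sum_{j\ge D+1}\bigl(\lambda^{[k]}_j\bigr)^2\le\exp\!\bigl((1-\alpha)(R_\alpha-\log D)/\alpha\bigr)
\]
holds, hence $\sqrt{\varepsilon_D}\le\exp\!\bigl((1-\alpha)R_\alpha/(2\alpha)\bigr)\,D^{-(1-\alpha)/(2\alpha)}$. Substituting into the conclusion of Lemma \ref{lem3} gives, for any local $\hat O$ of bounded norm,
\[
\bigl|\langle\psi|\hat O|\psi\rangle-\tr(\rho_D\hat O)\bigr|\le C\,e^{(1-\alpha)R_\alpha/(2\alpha)}\,D^{-(1-\alpha)/(2\alpha)}
\]
for some absolute constant $C$ coming from the $O(\sqrt{\varepsilon_D})$ in Lemma \ref{lem3}.

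Next I would set this last expression equal to $\delta$ and solve for $D$. Rearranging yields $D=\Theta\!\bigl(e^{R_\alpha}(1/\delta)^{2\alpha/(1-\alpha)}\bigr)$, so choosing the MPO $\rho:=\rho_D$ supplied by Lemma \ref{lem3} at this value of $D$ produces a positive semidefinite MPO of bond dimension
\[
D^2=\exp(2R_\alpha)\,O(1/\delta)^{4\alpha/(1-\alpha)},
\]
which is exactly the bound claimed. I do not expect any genuine obstacle: Lemma \ref{lem3} has already done the heavy lifting of controlling the error by a single-cut truncation error, and the area-law hypothesis is used only through Lemma \ref{lem1}. The one bookkeeping point to watch is the factor of two in the exponent that comes from the square root in Lemma \ref{lem3} combining with the $1/\alpha$ scaling in Lemma \ref{lem1}; tracking these faithfully gives the $4\alpha/(1-\alpha)$ exponent and the $\exp(2R_\alpha)$ prefactor in the stated bond dimension.
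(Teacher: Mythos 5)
Your proposal is correct and is essentially identical to the paper's own proof, which likewise obtains Theorem \ref{thm1} by combining Lemma \ref{lem1} with Lemma \ref{lem3}, setting $\delta=O(\sqrt{\varepsilon_D})\le O(\exp((1-\alpha)(R_\alpha-\log D)/(2\alpha)))$ and solving for $D^2$. The bookkeeping you describe (the square root halving the exponent and the final squaring of $D$ for the MPO bond dimension) matches the paper exactly.
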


\begin{proof}
This is an immediate consequence of Lemmas \ref{lem1}, \ref{lem3}. In particular,
\begin{equation}
\delta=O(\sqrt{\epsilon_D})\le O(\exp((1-\alpha)(R_\alpha-\log D)/(2\alpha)))\Rightarrow D^2=\exp(2R_\alpha)O(1/\delta)^{4\alpha/(1-\alpha)}.
\end{equation}
\end{proof}

\begin{remark}
If $|\psi\rangle$ satisfies area laws for the Renyi entanglement entropy $R_{\forall\alpha}$, then the bond dimension is independent of $n$ and subpolynomial in $1/\delta$.
\end{remark}

\subsection{Algorithm and analysis}

\begin{theorem} \label{thm2}
Suppose the ground state $|\psi\rangle$ of a local Hamiltonian $H$ on a chain of $n$ spins satisfies an area law for the Renyi entanglement entropy $R_\alpha$ (across any cut). Then there exists an MPS $|\phi\rangle$ of bond dimension $D=\exp(R_\alpha)O(1/\delta)^{2\alpha/(1-\alpha)}$ such that
\begin{equation} \label{t2}
\langle\phi|H|\phi\rangle\le\lambda(H)+n\delta.
\end{equation}
\end{theorem}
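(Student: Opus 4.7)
The plan is to combine Lemma~\ref{lem3} with a convex-decomposition argument. First, apply Lemma~\ref{lem3} term-by-term to $H=\sum_{k=1}^{n-1} H_k$ in order to bound $\tr(\rho_D H)$; then decompose the positive semidefinite MPO $\rho_D$ as a convex mixture of MPS of bond dimension $D$, and extract a pure MPS of low energy by convexity.

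For the first step, each $H_k$ is 2-local with $\|H_k\|\le 1$, so Lemma~\ref{lem3} yields $|\tr(\rho_D H_k)-\langle\psi|H_k|\psi\rangle|\le O(\sqrt{\varepsilon_D})$ for each $k$. Summing gives
$$\tr(\rho_D H)\le\lambda(H)+O(n\sqrt{\varepsilon_D}).$$
Applying Lemma~\ref{lem3} once more with $\hat O=I_{[k,k+1]}$ (identity on two adjacent sites, which has norm $1$ and is trivially 2-local) shows that $\tr\rho_D=1\pm O(\sqrt{\varepsilon_D})$, so after normalizing $\sigma_D:=\rho_D/\tr\rho_D$ one still has $\tr(\sigma_D H)\le\lambda(H)+O(n\sqrt{\varepsilon_D})$, since any multiplicative $1+O(\sqrt{\varepsilon_D})$ rescaling of a quantity of magnitude $O(n)$ changes it by at most $O(n\sqrt{\varepsilon_D})$.

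The crux is to write $\sigma_D=\sum_a p_a|\phi_a\rangle\langle\phi_a|$ with each $|\phi_a\rangle$ an MPS of bond dimension $\le D$. Inspection of (\ref{def}) shows that whenever a bond label is ``trash'' ($\alpha=1$) on both the ket and bra sides, the sum over $s\in S$ uses the same index on the two sides, while non-trash labels $\alpha\in\{2,\ldots,D\}$ enter as a pure tensor product. This structure suggests a local purification $B^{[k]}_{i,j}(\alpha,\alpha';\beta,\beta')=\sum_{\mu}M^{[k]}_{i}(\alpha,\beta;\mu)\,M^{[k]*}_{j}(\alpha',\beta';\mu)$, in which the ancilla $\mu$ ranges over $S$ together with a distinguished ``coherent'' value and records the trash value (when the bond lies in the trash sector). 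The resulting $|\Psi\rangle$ on the chain together with one ancilla per bond is an MPS of bond dimension $D$, and $\rho_D=\tr_{\mathrm{anc}}|\Psi\rangle\langle\Psi|$. Measuring the ancillas in the computational basis produces the desired classical mixture: a trash-value outcome at bond $k$ collapses that bond to dimension $1$, while a coherent outcome leaves it at dimension $\le D$, so every post-measurement $|\phi_a\rangle$ is an MPS of bond dimension $\le D$.

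By convexity, $\tr(\sigma_D H)=\sum_a p_a\langle\phi_a|H|\phi_a\rangle$, so at least one $|\phi_{a^*}\rangle$ satisfies $\langle\phi_{a^*}|H|\phi_{a^*}\rangle\le\lambda(H)+O(n\sqrt{\varepsilon_D})$. Substituting $\sqrt{\varepsilon_D}=O(\delta)$ via Lemma~\ref{lem1}, exactly as in the proof of Theorem~\ref{thm1}, gives $D=\exp(R_\alpha)O(1/\delta)^{2\alpha/(1-\alpha)}$ and matches (\ref{t2}) after absorbing the $O(\cdot)$ into $\delta$. The main obstacle is verifying the local purification in the decomposition step: the ``mixed'' blocks of (\ref{def}) (cases $\alpha+\alpha'\ge 3$ but with one of $\alpha,\alpha'$ still equal to $1$) must be reconciled with the single factorization $B^{[k]}=\sum_\mu M^{[k]}\otimes M^{[k]*}$ using the definitions of $x^{[k]}_s$ and $S$; this is a careful but mechanical case analysis, while every other step follows directly from earlier lemmas.
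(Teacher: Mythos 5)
Your proposal is correct and follows essentially the same route as the paper: bound $\tr(\rho_D H)$ via Lemma~\ref{lem3}, decompose $\rho_D$ into a convex combination of pure MPS of bond dimension $D$, and extract one by averaging; the ``local purification'' you defer is exactly what the paper supplies explicitly by writing $B^{[k]}_{i,j}=\sum_{p,q\ge0}C^{[k]}_{i,j,p,q}$ with each $C^{[k]}_{i,j,p,q}$ a tensor product of a $D\times D$ ket factor and its conjugate, and the mixed blocks you flag do work out (only the $p=0$ branch contributes when exactly one of $\alpha,\alpha'$ equals $1$). The one bookkeeping difference is that the incoherent trash sums on the two sides of a cut are independent, so the ancilla at a bond is really a pair of indices $(q_k,p_{k+1})$ rather than a single recorded trash value.
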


\begin{proof}
Recall that $\rho_D$ (\ref{mpo}) is an MPO of bond dimension $D^2$ such that for any local operator $\hat O$ with bounded norm,
\begin{equation} \label{ener}
\tr\left(\rho_D\hat O\right)\le\langle\psi|\hat O|\psi\rangle+\delta\Rightarrow\tr(\rho_DH)\le\langle\psi|H|\psi\rangle+n\delta.
\end{equation}
The matrices defined in (\ref{def}) can be expressed as a sum of many terms: $B^{[k]}_{i,j}=\sum_{p,q\ge0}C^{[k]}_{i,j,p,q}$, where $C^{[k]}_{i,j,p,q}$ is a matrix of size $D^2\times D^2$ (except at the boundaries). In particular, the matrix elements of $C^{[k]}_{i,j,p,q}$ are given by
\begin{eqnarray}
&&C^{[k]}_{i,j,p=0,q=0}(\alpha,\alpha';\beta,\beta')=A^{[k]}_i(\alpha,\beta)A^{[k]*}_j(\alpha',\beta')x^{[k]}_\alpha x^{[k]}_{\alpha'},\\
&&C^{[k]}_{i,j,p=0,q\ge1}(\alpha,\alpha';\beta=\beta'=1)=A^{[k]}_i(\alpha,D+q)A^{[k]*}_j(\alpha',D+q)x^{[k]}_\alpha x^{[k]}_{\alpha'},\\
&&C^{[k]}_{i,j,p\ge1,q=0}(\alpha=\alpha'=1;\beta,\beta')=A^{[k]}_i(D+p,\beta)A^{[k]*}_j(D+p,\beta')\left(x^{[k]}_{D+p}\right)^2,\\
&&C^{[k]}_{i,j,p\ge1,q\ge1}(\alpha=\alpha'=1;\beta=\beta'=1)=A^{[k]}_i(D+p,D+q)A^{[k]*}_j(D+p,D+q)\left(x^{[k]}_{D+p}\right)^2,\\
&&C^{[k]}_{i,j,p=0,q\ge1}(\alpha,\alpha';\beta+\beta'\ge3)=C^{[k]}_{i,j,p\ge1,q=0}(\alpha+\alpha'\ge3;\beta,\beta')\nonumber\\
&&=C^{[k]}_{i,j,p\ge1,q\ge1}(\alpha,\alpha';\beta+\beta'\ge3)=C^{[k]}_{i,j,p\ge1,q\ge1}(\alpha+\alpha'\ge3;\beta,\beta')=0.
\end{eqnarray}
Thus, the MPO $\rho_D$ can be expressed as a sum of many MPO of bond dimension $D^2$:
\begin{equation}
\rho_D=\sum_{p_1,q_1,p_2,q_2,\ldots,p_n,q_n}\rho_D^{p_1,q_1,p_2,q_2,\ldots,p_n,q_n},
\end{equation}
where
\begin{equation}
\rho_D^{p_1,q_1,p_2,q_2,\ldots,p_n,q_n}=\sum_{i_1,j_1,i_2,j_2,\ldots,i_n,j_n=1}^d\left(C_{i_1,j_1,p_1,q_1}^{[1]}C_{i_2,j_2,p_2,q_2}^{[2]}\cdots C_{i_n,j_n,p_n,q_n}^{[n]}\right)|i_1i_2\cdots i_n\rangle\langle j_1j_2\cdots j_n|.
\end{equation}
We observe that each $\rho_D^{p_1,q_1,p_2,q_2,\ldots,p_n,q_n}$ is the density matrix of an unnormalized (pure) MPS of bond dimension $D$. Hence (\ref{ener}) implies that one of them (still denoted by $\rho_D^{p_1,q_1,p_2,q_2,\ldots,p_n,q_n}$) satisfies
\begin{equation}
\tr(\rho_D^{p_1,q_1,p_2,q_2,\ldots,p_n,q_n}H)/\tr\rho_D^{p_1,q_1,p_2,q_2,\ldots,p_n,q_n}\le\langle\psi|H|\psi\rangle+n\delta,
\end{equation}
and we complete the proof by normalizing $\rho_D^{p_1,q_1,p_2,q_2,\ldots,p_n,q_n}$.
\end{proof}

Let $\tilde O(x):=O(x\poly\log x)$ hide a polylogarithmic factor.

\begin{theorem} \label{thm3}
Suppose the ground state $|\psi\rangle$ of a local Hamiltonian $H$ on a chain of $n$ spins satisfies an area law for the Renyi entanglement entropy $R_\alpha$ (across any cut). Then $\lambda(H)/n\pm\delta$ can be computed in time $\exp(\exp(2R_\alpha)\tilde O(1/\delta)^{4\alpha/(1-\alpha)})$ with probability at least $0.999$.
\end{theorem}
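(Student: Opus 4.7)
The plan is to combine the randomized block-sampling scheme of Lemma~\ref{lem0} with the MPS energy-minimization algorithm of Lemma~\ref{lem2}, applied independently on each sampled block of size $L=2/\delta$ rather than on the full chain. Theorem~\ref{thm2} will be used not constructively but as an existence certificate: it furnishes a global MPS of bond dimension $D=\exp(R_\alpha)O(1/\delta)^{2\alpha/(1-\alpha)}$ with nearly optimal energy, which will control the per-block minima from above. No global MPS is ever constructed, which is what decouples the running time from $n$.

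I would first decompose $H'=\sum_{i=1}^{n\delta/2}H'_i$ exactly as in Lemma~\ref{lem0} and set, for each block,
\[
E_i:=\min_{|\phi\rangle\in\mathcal{MPS}_D}\langle\phi|H'_i|\phi\rangle,
\]
computable to precision $1/\poly(L)$ in time $L^{O(D^2)}=(2/\delta)^{O(D^2)}$ via Lemma~\ref{lem2}. The technical heart is the two-sided bound
\[
\lambda(H)-n\delta/2\leq\sum_{i=1}^{n\delta/2}E_i\leq\lambda(H)+n\delta.
\]
The lower inequality is immediate: $E_i\geq\lambda(H'_i)$, and $\sum_i\lambda(H'_i)=\lambda(H')\geq\lambda(H)-n\delta/2$ as in Lemma~\ref{lem0}.

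The upper inequality is the main obstacle. To establish it, I would fix an MPS $|\Phi\rangle$ from Theorem~\ref{thm2} with $\langle\Phi|H|\Phi\rangle\leq\lambda(H)+n\delta$, put it in canonical form, and argue blockwise that $E_i\leq\langle\Phi|H'_i|\Phi\rangle$. The key structural observation is that, using the orthonormality of left and right Schmidt vectors across the two cuts bounding block $i$, the reduced density matrix of $|\Phi\rangle$ on that block decomposes as a convex combination
\[
\rho^{[i]}_\Phi=\sum_{\alpha,\beta=1}^D p_{\alpha\beta}|\tilde\Phi_{\alpha\beta}\rangle\langle\tilde\Phi_{\alpha\beta}|
\]
of $D^2$ normalized pure states, each obtained from the block tensors of $|\Phi\rangle$ by fixing the two outer boundary indices to $\alpha,\beta$; each $|\tilde\Phi_{\alpha\beta}\rangle$ is itself an element of $\mathcal{MPS}_D$ on the block. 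Hence $\langle\Phi|H'_i|\Phi\rangle=\tr(\rho^{[i]}_\Phi H'_i)$ is a convex average of block-MPS energies, at least one of which is bounded by the average and therefore serves as an admissible competitor in the definition of $E_i$. Summing over $i$ and using $H\geq H'$ completes the upper bound.

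Finally I would mirror Lemma~\ref{lem0}: draw $m=O(1/\delta^2)$ block indices $i_1,\ldots,i_m$ independently and uniformly from $\{1,\ldots,n\delta/2\}$, compute each $E_{i_k}$ via Lemma~\ref{lem2}, and output $(\delta/2m)\sum_{k=1}^mE_{i_k}$. Since $E_i\in[0,2/\delta]$, Hoeffding's inequality converts the two-sided bound above into an $O(\delta)$-accurate estimate of $\lambda(H)/n$ with probability at least $0.999$. The total running time is $m\cdot L^{O(D^2)}=\exp(O(D^2\log(1/\delta)))=\exp(\exp(2R_\alpha)\tilde O(1/\delta)^{4\alpha/(1-\alpha)})$, matching the claimed bound. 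The step I expect to require the most care is the convex-decomposition argument for $\rho^{[i]}_\Phi$, since it is the sole bridge that lets the nonconstructive global certificate from Theorem~\ref{thm2} control a sum of local minimizations performed independently on each block.
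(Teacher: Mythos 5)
Your proposal is correct and follows essentially the same route as the paper: a global low-bond-dimension MPS from Theorem~\ref{thm2} serves as a nonconstructive certificate, its restriction to each block yields a bond-dimension-$D$ block MPS whose energy is at most $\langle\Phi|H_i'|\Phi\rangle$, Lemma~\ref{lem2} performs the per-block minimization, and the sampling of Lemma~\ref{lem0} removes the $n$-dependence. Your convex decomposition of the block reduced density matrix into $D^2$ boundary-indexed pure MPS is just a repackaging of the paper's two sequential Schmidt-decomposition averaging steps, so the arguments coincide.
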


\begin{proof}
We divide the system into $n\delta/2$ blocks, each of which consists of $2/\delta$ spins. In particular, define $H'=\sum_{i=1}^{n\delta/2}H_i'$ with $H_i'=\sum_{j=1}^{2/\delta-1}H_{2i/\delta+j}$ acting on the spins $2i/\delta+1,2i/\delta+2,\ldots,2(i+1)/\delta$. Theorem \ref{thm2} implies an MPS $|\phi\rangle$ of bond dimension $D=\exp(R_\alpha)O(1/\delta)^{2\alpha/(1-\alpha)}$ such that
\begin{equation}
\langle\phi|H|\phi\rangle\le\lambda(H)+n\delta/2.
\end{equation}
The Schmidt decomposition of $|\phi\rangle$ across the cut $2i/\delta|2i/\delta+1$ implies
\begin{equation} \label{sch}
|\phi\rangle=\sum_{j=1}^D\lambda_j|L_j\rangle|R_j\rangle\Rightarrow\langle\phi|H_i'|\phi\rangle=\sum_{j=1}^D\lambda_j^2\langle R_j|H_i'|R_j\rangle\Rightarrow\exists k_i,~\mathrm{s.~t.}~\langle R_{k_i}|H_i'|R_{k_i}\rangle\le\langle\phi|H_i'|\phi\rangle.
\end{equation}
As $|R_{k_i}\rangle$ is an MPS of bond dimension $D$, the Schmidt decomposition across the cut $2(i+1)/\delta|2(i+1)/\delta+1$ implies
\begin{eqnarray}
&&|R_{k_i}\rangle=\sum_{j=1}^D\lambda_j'|L_j'\rangle|R_j'\rangle\Rightarrow\langle R_{k_i}|H_i'|R_{k_i}\rangle=\sum_{j=1}^D\lambda_j^2\langle L_j'|H_i'|L_j'\rangle\nonumber\\
&&\Rightarrow\exists k_i',~\mathrm{s.~t.}~\left\langle L_{k_i'}'\right|H_i'\left|L_{k_i'}'\right\rangle\le\langle R_{k_i}|H_i'|R_{k_i}\rangle\le\langle\phi|H_i'|\phi\rangle\nonumber\\
&&\Rightarrow\lambda(H)-n\delta/2\le\lambda(H')=\sum_{i=1}^{n\delta/2}\lambda(H_i')\le\sum_{i=1}^{n\delta/2}\left\langle L_{k_i'}'\right|H_i'\left|L_{k_i'}'\right\rangle\le\sum_{i=1}^{n\delta/2}\langle\phi|H_i'|\phi\rangle\nonumber\\
&&=\langle\phi|H'|\phi\rangle\le\langle\phi|H|\phi\rangle\le\lambda(H)+n\delta/2.
\end{eqnarray}
As $|L_{k_i'}'\rangle$ is an MPS of bond dimension $D$, Lemma \ref{lem2} implies that a number between $\lambda(H_i')$ and $\langle L_{k_i'}'|H_i'|L_{k_i'}'\rangle$ can be computed in time $(1/\delta)^{O(D^2)}$. Finally, a randomized algorithm with running time $O(1/\delta^2)(1/\delta)^{O(D^2)}=\exp(\exp(2R_\alpha)\tilde O(1/\delta)^{4\alpha/(1-\alpha)})$ follows from the proof of Lemma \ref{lem0}.
\end{proof}

\begin{remark}
If $|\psi\rangle$ satisfies area laws for the Renyi entanglement entropy $R_{\forall\alpha}$, then the running time is independent of $n$ and subexponential in $1/\delta$.
\end{remark}

\section{Computing energy density in the presence of an energy gap}

\subsection{Introduction}

As an energy gap implies area laws for the Renyi entanglement entropy $R_{\forall\alpha}$ \cite{Hua14}, all results in Section \ref{area} apply. In particular, to approximate local properties the bond dimension of the MPO is subpolynomial in $1/\delta$ (Theorem \ref{thm1}), and to compute the ground-state energy density the running time of the algorithm is subexponential in $1/\delta$ (Theorem \ref{thm3}). Indeed, the scaling in $1/\delta$ can be obtained precisely.

Let $|\Psi_0\rangle$ and $\epsilon$ be the ground state and the energy gap of $H$, respectively. Since we are interested in the regime $1/\delta$ is large, we assume $1/\delta\gg1/\epsilon$. Some $\epsilon$-dependent subpolynomial (e.g., $2^{\tilde O(\epsilon^{-1/4}\log^{3/4}(1/\delta))}$) and constant (e.g., $2^{2^{\tilde O(1/\epsilon)}}$) factors will appear below. If not dominant (e.g., accompanied with $\poly(1/\delta)$), depending on the context they may be neglected or kept for simplicity or clarity, respectively.

Let $\tilde\Omega(x):=\Omega(x/\poly\log x)$ hide a polylogarithmic factor. As a by-product of the proof of the area law for entanglement, the truncation error can be upper bounded as follows.

\begin{lemma} [\cite{Hua14}] \label{lem4}
Let $\{\lambda_i\}$ be the Schmidt coefficients of $|\Psi_0\rangle$ across a cut. Then,
\begin{equation}
D=2^{\tilde O(1/\epsilon+\epsilon^{-1/4}\log^{3/4}(1/\varepsilon_D))},~\varepsilon_D:=\sum_{i\ge D+1}\lambda_i^2.
\end{equation}
\end{lemma}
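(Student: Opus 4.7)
The plan is to use the approximate-ground-state-projector (AGSP) framework that underlies modern 1D area-law proofs. An AGSP across the given cut is an operator $K$ with three key parameters: a shrinking factor $\Delta<1$ such that $K$ leaves $|\Psi_0\rangle$ approximately invariant while contracting its orthogonal complement by $\Delta$; an operator Schmidt rank $d$ across the cut; and a bounded operator norm. Given such a $K$ together with a product state $|\phi\rangle=|\phi_L\rangle\otimes|\phi_R\rangle$ of overlap $\mu=|\langle\phi|\Psi_0\rangle|$ with $|\Psi_0\rangle$, the Arad-Kitaev-Landau-Vazirani bootstrap guarantees that $|\Psi_0\rangle$ is well approximated by a state of Schmidt rank $D$ whenever $\log D$ is a suitable multiple of $\log d/\log(1/\Delta)$ times $\log(1/\varepsilon_D)+\log(1/\mu)$.

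First I would construct $K=p_L(H)$ as a Chebyshev polynomial of degree $L$ applied to $H$ rescaled so that the ground-state eigenvalue sits at a spectral extremum and the rest of the spectrum is separated by the gap $\epsilon$. Standard Chebyshev estimates then give $\Delta\le\exp(-\tilde\Omega(L\sqrt\epsilon))$. The naive Schmidt rank of $p_L(H)$ across the cut is $\exp(O(L))$, which only yields a weak area law; the key refinement is to replace $H$ by a coarse-grained version in which each local term is first truncated to its low-energy subspace of polynomial dimension, and to show via a Chernoff-type concentration argument (on how many excited modes actually contribute to the action of $p_L(H)$ within the allowed error budget) that this truncation brings the Schmidt rank of $K$ down to $\exp(\tilde O(L^{2/3}))$ while preserving $\Delta$ up to constants. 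Separately, I would lower-bound the product-state overlap $\mu\ge2^{-2^{\tilde O(1/\epsilon)}}$ using Hastings' exponential decay of correlations in gapped systems (equivalently, a Gaussian energy filter of width $\sim 1/\sqrt\epsilon$ applied to a random product state); this is the source of the additive $\tilde O(1/\epsilon)$ correction in the exponent of $D$.

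Feeding the triple $(\Delta,d,\mu)$ into the AKLV bootstrap and optimizing the polynomial degree $L$ (together with the number of iterations of $K$, if needed) balances the sublinear growth of $\log d$ against the growth of $\log(1/\Delta)$ linear in $L\sqrt\epsilon$, yielding the claimed $\log D=\tilde O(\epsilon^{-1/4}\log^{3/4}(1/\varepsilon_D))+\tilde O(1/\epsilon)$. The main obstacle, and what distinguishes this bound from earlier weaker area laws, is the trimming step that brings the Schmidt rank from the trivial $\exp(O(L))$ down to $\exp(\tilde O(L^{2/3}))$: this requires a careful multi-step truncation of local Hamiltonian terms together with a robustness analysis showing that each step costs only a constant factor in $\Delta$, and is the technical heart of \cite{Hua14}. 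The additive $\tilde O(1/\epsilon)$ term, by contrast, is a by-now standard consequence of the gap via the product-state overlap estimate.
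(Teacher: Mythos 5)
The paper does not prove this lemma at all: it is imported verbatim from \cite{Hua14} as a by-product of the area-law proof there, so there is no in-paper argument to compare against. Your sketch correctly identifies the framework used in that reference --- a truncated-Chebyshev AGSP across the cut combined with the Arad--Kitaev--Landau--Vazirani bootstrapping, with the polynomial degree $L$ optimized against the entanglement rank so as to yield a stretched-exponential tail for $\varepsilon_D$ equivalent to $\log D=\tilde O(\epsilon^{-1/4}\log^{3/4}(1/\varepsilon_D))$ --- so at the level of a roadmap you are on the right track.

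However, two steps as you state them would not deliver the lemma. First, the additive $\tilde O(1/\epsilon)$ in the exponent of $D$ enters through $\log(1/\mu)$, where $\mu$ is the product-state overlap; you claim only $\mu\ge2^{-2^{\tilde O(1/\epsilon)}}$, which would turn that additive term into $2^{\tilde O(1/\epsilon)}$ and destroy the stated bound. The correct estimate, $\mu\ge2^{-\tilde O(1/\epsilon)}$, does not come from decay of correlations but from the AGSP bootstrapping lemma itself: once one has an AGSP with entanglement rank $d$ and shrinking factor $\Delta$ satisfying $d\Delta\le1/2$, there is automatically a product state with overlap at least $1/\sqrt{2d}$, and $\log d=\tilde O(1/\epsilon)$ at that operating point. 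Second, the reduction of the entanglement rank of $p_L(H)$ from $\exp(O(L))$ to a sublinear power of $L$ is not a ``Chernoff-type concentration argument''; it is a deterministic combinatorial count of the monomials in the Chebyshev expansion after coarse-graining a window of $s$ sites around the cut and truncating the left and right Hamiltonians to energy $t$, followed by optimizing $s$ and $t$ against the shrinking factor $\Delta=\exp(-\Omega(L\sqrt{\epsilon/t}))$. Since your writeup asserts rather than derives the exponents at this step (and the claimed $\exp(\tilde O(L^{2/3}))$ rank does not obviously reproduce the $\epsilon^{-1/4}\log^{3/4}$ scaling once fed back through the optimization), the technical heart of the lemma is still missing; as a reconstruction of \cite{Hua14} your proposal is a plausible outline, not a proof.
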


Then, it is straightforward to obtain a pair of corollaries.

\begin{corollary}
There exists a positive semidefinite MPO $\rho$ of bond dimension $2^{\tilde O(1/\epsilon+\epsilon^{-1/4}\log^{3/4}(1/\delta))}$ such that
\begin{equation}
\left|\langle\psi|\hat O|\psi\rangle-\tr\left(\rho\hat O\right)\right|\le\delta
\end{equation}
for any local operator $\hat O$ with bounded norm.
\end{corollary}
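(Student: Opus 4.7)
The plan is to simply combine Lemma \ref{lem3} and Lemma \ref{lem4}, just as Theorem \ref{thm1} combined Lemma \ref{lem3} with Lemma \ref{lem1}; the only difference is that we now feed in the sharper (gap-dependent) truncation bound of Lemma \ref{lem4} instead of the Renyi-entropy bound of Lemma \ref{lem1}.

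First I would invoke Lemma \ref{lem3} for the ground state $|\Psi_0\rangle$ to produce, for each integer $D$, a positive semidefinite MPO $\rho_D$ of bond dimension $D^2$ such that $|\langle\Psi_0|\hat O|\Psi_0\rangle - \tr(\rho_D \hat O)| \le O(\sqrt{\varepsilon_D})$ for every local operator $\hat O$ of bounded norm, where $\varepsilon_D = \max_k \sum_{j \ge D+1} (\lambda^{[k]}_j)^2$ is the worst-case tail weight across all cuts. Next I would choose $D$ so that $O(\sqrt{\varepsilon_D}) \le \delta$, i.e.\ $\varepsilon_D \le c\delta^2$ for a sufficiently small absolute constant $c$; this forces $\log(1/\varepsilon_D) = 2\log(1/\delta) + O(1)$.

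Then I would apply Lemma \ref{lem4}, which (since an energy gap implies an area law on every cut, as invoked at the top of the subsection) says that the smallest $D$ achieving tail weight $\varepsilon_D$ on any given cut satisfies
\begin{equation}
D = 2^{\tilde O(1/\epsilon + \epsilon^{-1/4}\log^{3/4}(1/\varepsilon_D))}.
\end{equation}
Substituting $\log(1/\varepsilon_D) = \Theta(\log(1/\delta))$ yields $D = 2^{\tilde O(1/\epsilon + \epsilon^{-1/4}\log^{3/4}(1/\delta))}$, and the MPO bond dimension is $D^2$, which absorbs the factor of two into the $\tilde O$. Setting $\rho := \rho_D$ completes the proof.

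There is essentially no obstacle; the only minor point to be careful about is that Lemma \ref{lem4} controls $D$ versus $\varepsilon_D$ on a single cut, whereas Lemma \ref{lem3} uses the worst-case $\varepsilon_D$ over all cuts. This is harmless because the bound from Lemma \ref{lem4} depends on $\epsilon$ (not on the cut) and is uniform in the cut, so the same $D$ works simultaneously everywhere, and the maximum over cuts of the tail weights is still at most $c\delta^2$.
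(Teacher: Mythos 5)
Your proposal is correct and follows exactly the route the paper intends: the paper presents this corollary as an immediate consequence of Lemma \ref{lem4} combined with the MPO construction of Lemma \ref{lem3} (mirroring how Theorem \ref{thm1} combines Lemma \ref{lem3} with Lemma \ref{lem1}), which is precisely your argument of setting $O(\sqrt{\varepsilon_D})\le\delta$ and inverting the gap-dependent truncation bound. Your remark that the uniformity of Lemma \ref{lem4} over cuts handles the worst-case $\varepsilon_D$ in Lemma \ref{lem3} is a correct and worthwhile clarification of a point the paper leaves implicit.
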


\begin{corollary}
$\lambda(H)/n\pm\delta$ can be computed in time $2^{2^{\tilde O(\epsilon^{-1/4}\log^{3/4}(1/\delta))}}$ with probability at least $0.999$.
\end{corollary}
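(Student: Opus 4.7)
The plan is to rerun the algorithm of Theorem~\ref{thm3} but replace the Renyi-entropy-based truncation bound of Lemma~\ref{lem1} by the sharper gap-dependent bound of Lemma~\ref{lem4}. First I would fix the target bond dimension $D$: Lemma~\ref{lem3} guarantees local-property approximation error $O(\sqrt{\varepsilon_D})$, so setting $\varepsilon_D=O(\delta^2)$ forces every two-local expectation value of the MPO $\rho_D$ to be accurate to precision $\delta$. Inverting Lemma~\ref{lem4} for this $\varepsilon_D$ yields
\begin{equation*}
D = 2^{\tilde O(1/\epsilon + \epsilon^{-1/4}\log^{3/4}(1/\delta))},
\end{equation*}
and under the standing assumption $1/\delta\gg 1/\epsilon$ the pure $1/\epsilon$ contribution is an $\epsilon$-only ``constant'' of the kind the introduction permits us to suppress, leaving $D=2^{\tilde O(\epsilon^{-1/4}\log^{3/4}(1/\delta))}$.

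Next I would reuse the proof of Theorem~\ref{thm2} verbatim: the MPO $\rho_D$ decomposes as a sum of density matrices of unnormalized pure MPS of bond dimension $D$, one of which (after normalization) produces an MPS $|\phi\rangle$ of bond dimension $D$ with $\langle\phi|H|\phi\rangle\le\lambda(H)+n\delta$. Then I would follow the block/sampling step of Theorem~\ref{thm3}: partition the chain into $n\delta/2$ blocks of $2/\delta$ sites, sample $m=O(1/\delta^2)$ blocks uniformly, and for each sampled block use two successive Schmidt decompositions of $|\phi\rangle$ at the block endpoints to exhibit an MPS of bond dimension $D$ supported on the block whose $H_i'$-energy is sandwiched between $\lambda(H_i')$ and $\langle\phi|H_i'|\phi\rangle$. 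Lemma~\ref{lem2} computes a number in this sandwich in time $(1/\delta)^{O(D^2)}$, and a Chernoff bound as in Lemma~\ref{lem0} converts the sampled block energies into an estimate of $\lambda(H)/n$ to precision $\delta$ with the required success probability.

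The total running time is $O(1/\delta^2)\,(1/\delta)^{O(D^2)} = 2^{2^{\tilde O(\epsilon^{-1/4}\log^{3/4}(1/\delta))}}$: writing $D^2=2^X$ with $X=\tilde O(\epsilon^{-1/4}\log^{3/4}(1/\delta))$, the outer $\log(1/\delta)$ factor becomes an additive $\log\log(1/\delta)$ inside the inner exponent, which is absorbed into the polylog hidden by $\tilde O$. The only step requiring any care is precisely this asymptotic bookkeeping --- tracking the nested $\tilde O$'s, the $\epsilon$-only constants of the form $2^{2^{\tilde O(1/\epsilon)}}$, and the extra $\log(1/\delta)$ factor from Lemma~\ref{lem2} --- so that nothing dominates the nominal $2^{2^{\tilde O(\epsilon^{-1/4}\log^{3/4}(1/\delta))}}$ scaling. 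All remaining ingredients are already in place from Section~\ref{area}.
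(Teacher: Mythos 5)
Your proposal is correct and follows exactly the route the paper intends: the paper declares this corollary "straightforward" from Lemma \ref{lem4} combined with the machinery of Section \ref{area}, i.e., substituting the gap-dependent truncation bound for Lemma \ref{lem1} in the chain Lemma \ref{lem3} $\rightarrow$ Theorem \ref{thm2} $\rightarrow$ Theorem \ref{thm3}, which is precisely what you do. Your bookkeeping of the nested exponents, the absorbed $\log\log(1/\delta)$ factor, and the suppressed $2^{2^{\tilde O(1/\epsilon)}}$ constant all match the paper's stated conventions.
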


Our goal is to improve the scaling in $1/\delta$. The main result of this section is

\begin{theorem} [FPRAS for computing ground-state energy density of 1D gapped Hamiltonians]
$\lambda(H)/n\pm O(\delta)$ can be computed in time $(1/\delta)^{O(1)}$ with probability at least $0.999$.
\end{theorem}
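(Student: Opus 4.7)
The plan is to bypass the MPS-based global approximations of Section~\ref{area} and exploit locality directly. For a gapped 1D system, the expectation $\langle\Psi_0|H_i|\Psi_0\rangle$ of each bond term should depend, up to exponentially decaying error, only on the restriction of $H$ to an $\ell$-sized window around bond $i$, with $\ell=O(\log(1/\delta)/\epsilon)$. Combined with Chernoff sampling over bonds (as in Lemma~\ref{lem0}) and brute-force diagonalization on each window, this reduces the task to a collection of $\poly(1/\delta)$-dimensional linear-algebra subproblems.

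Concretely, I would proceed in three steps. (i)~\emph{Local approximation}: for each bond $i$, let $H^{(i,\ell)}=\sum_{j:|j-i|\le\ell}H_j$ act on the window of $O(\ell)$ spins around bond $i$, and let $|\psi^{(i,\ell)}\rangle$ be its ground state. A local-perturbations-perturb-locally (LPPL) estimate, driven by the exponential clustering of correlations in 1D gapped ground states already established in \cite{Hua14}, should give
\begin{equation}
\bigl|\langle\Psi_0|H_i|\Psi_0\rangle-\langle\psi^{(i,\ell)}|H_i|\psi^{(i,\ell)}\rangle\bigr|\le\exp(-\Omega(\epsilon\ell)),
\end{equation}
so choosing $\ell=O(\log(1/\delta)/\epsilon)$ makes the right-hand side at most $\delta$. (ii)~\emph{Random sampling}: writing $\lambda(H)/n=\frac1n\sum_i\langle\Psi_0|H_i|\Psi_0\rangle$, draw $m=O(1/\delta^2)$ i.i.d.\ uniform random bonds $i_1,\dots,i_m$; the Chernoff bound controls the empirical mean to within additive error $\delta$ with probability at least $0.9995$. (iii)~\emph{Exact diagonalization}: for each sampled $i_k$, compute $\langle\psi^{(i_k,\ell)}|H_{i_k}|\psi^{(i_k,\ell)}\rangle$ to precision $\delta$ by diagonalizing $H^{(i_k,\ell)}$, a matrix of dimension $d^{O(\ell)}=(1/\delta)^{O(1/\epsilon)}$, in time cubic in this dimension. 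Under the convention that $\epsilon$ is treated as a constant, the three error contributions sum to $O(\delta)$, and the overall running time is $m\cdot(1/\delta)^{O(1)}=(1/\delta)^{O(1)}$.

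The main obstacle is the quantitative LPPL bound of step~(i), in which the decay in $\ell$ must be genuinely exponential: a merely stretched-exponential decay $\exp(-(\epsilon\ell)^c)$ with $c<1$ would force $\ell=\poly\log(1/\delta)$ and blow up the dimension in step~(iii) to quasi-polynomial in $1/\delta$, losing the FPRAS. Complicating matters, the restricted Hamiltonian $H^{(i,\ell)}$ need not inherit the spectral gap of $H$, so the standard two-gapped-Hamiltonian LPPL route is not directly applicable; instead one should bound the trace-norm distance between $\rho_{[i,i+1]}$ and its window counterpart by cutting at the window boundary and invoking exponential clustering of correlations in $|\Psi_0\rangle$---precisely the estimate driving the area-law proof of \cite{Hua14}. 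Once step~(i) is pinned down with truly exponential decay, steps~(ii) and~(iii) are routine and do not require any MPS/MPO technology at all.
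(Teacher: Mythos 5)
The weak link is exactly the one you flag at the end: step (i). The bound
\begin{equation*}
\bigl|\langle\Psi_0|H_i|\Psi_0\rangle-\langle\psi^{(i,\ell)}|H_i|\psi^{(i,\ell)}\rangle\bigr|\le\exp(-\Omega(\epsilon\ell))
\end{equation*}
is not established by anything you invoke, and your proposed route to it does not close the gap. Exponential clustering of correlations is a property of the single state $|\Psi_0\rangle$; it gives no control whatsoever over the ground state of the \emph{different} Hamiltonian $H^{(i,\ell)}$. The open-boundary window Hamiltonian need not inherit a gap, may acquire a (near-)degenerate low-energy space dominated by edge effects, and its minimizer can lower its energy near the window boundary using configurations that do not extend to low-energy global states, with no a priori guarantee that these boundary effects decay into the bulk. ``Cutting at the window boundary and invoking clustering'' at best exhibits \emph{one} low-energy state of $H^{(i,\ell)}$ whose bulk marginal is close to $\rho_{[i,i+1]}$; to conclude anything about $|\psi^{(i,\ell)}\rangle$ you would also need that \emph{every} low-energy state of $H^{(i,\ell)}$ has essentially the same bulk marginal, and that second half is precisely what fails without a gap for the window Hamiltonian. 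If this LPPL estimate held with a short proof, the entire support-set machinery of \cite{LVV13, Hua14, CF15} --- and of this paper --- would be superfluous: it is, in effect, the hard part of the problem restated.

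The paper's proof is engineered to avoid exactly this trap. It also localizes to a window, but replaces the exterior by the energy-truncated operators $H_L^{\le t},H_R^{\le t}$ (projections onto low-energy subspaces of the left and right exteriors), which provably preserves both the ground state (overlap $1-2^{-\Omega(t)}$) and a gap $\epsilon'\ge\epsilon/10$ by the cited lemma of \cite{Hua14}. The price is that $H^{\le t}$ still acts on the full chain, so it cannot be diagonalized in time $\poly(1/\delta)$; recovering efficiency then requires the iterative extension / cardinality-reduction / bond-truncation / error-reduction construction of support sets and a convex program per block. Note also that the paper only targets $\lambda(H_i')$ per block of $1/\delta$ spins up to an additive constant $3$ --- already enough for $O(\delta)$ per site --- which is a weaker local goal than your per-bond, $\delta$-accurate expectation value. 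Your steps (ii) and (iii) are fine and mirror Lemma \ref{lem0}, but without a proof of step (i) the argument does not go through.
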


\begin{proof}
Recall $H=\sum_{i=1}^{n-1}H_i$, where $0\le H_i\le1$ acts on the spins $i$ and $i+1$. Following the proof of Lemma \ref{lem0}, we divide $H$ the system into $n\delta$ blocks, each of which consists of $1/\delta$ spins. In particular, define $H'=\sum_{i=1}^{n\delta}H_i'$ with $H_i'=\sum_{j=1}^{1/\delta-1}H_{i/\delta+j}$ acting on the spins $i/\delta+1,i/\delta+2,\ldots,(i+1)/\delta$. Then, it suffices to prove Lemma \ref{lem5}.
\end{proof}

Let $\mathcal H$ be the Hilbert space $(\mathbf C^d)^{\otimes n}$ of the system ($n$ qudits) and $\mathcal H_{[i_l,i_r]}=(\mathbf C ^d)^{\otimes(i_r-i_l+1)}$ be the Hilbert space of the spins with indices in the interval $[i_l,i_r]$. Let $\epsilon_L=\lambda(\sum_{j=1}^{i_l-s-1}H_j)$ and $\epsilon_R=\lambda(\sum_{j=i_r+s+1}^{n-1}H_j)$. Define
\begin{equation} 
H_L^{\le t}:=\left(\sum_{j=1}^{i_l-s-1}H_j-\epsilon_L\right)P^{\le t}_L+t(1-P^{\le t}_L),
\end{equation}
where $P^{\le t}_L$ is the projection onto the subspace spanned by the eigenstates of $\sum_{j=1}^{i_l-s-1}H_j$ with energies at most $\epsilon_L+t$; similarly we define $H_R^{\le t}$. Let
\begin{equation} \label{truncated}
H^{\le t}:=H_L^{\le t}+H_{i_l-s}+H_{i_l-s+1}+\cdots+ H_{i_r+s}+H_R^{\le t}\le i_r-i_l+2s+2t+1.
\end{equation}
be a 1D Hamiltonian which is local on $\mathcal H_{[i_l-s+1,i_r+s]}$. Let $|\Psi_0'\rangle$ and $\epsilon'$ be the ground state and the energy gap of $H^{\le t}$, respectively.

\begin{lemma} [\cite{Hua14}]
For $t\ge O(\log\epsilon^{-1})$,\\
(a) $0\le\lambda(H)-\lambda(H^{\le t})-\epsilon_L-\epsilon_R\le2^{-\Omega(t)}$;\\
(b) $|\langle\Psi_0'|\Psi_0\rangle|\ge1-2^{-\Omega(t)}$;\\
(c) $\epsilon'\ge\epsilon/10$.
\end{lemma}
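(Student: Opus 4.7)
My plan is to exploit a concentration bound: the ground state $|\Psi_0\rangle$ of $H$ should have weight at most $2^{-\Omega(t)}$ on eigenstates of the subsystem Hamiltonians $H_L := \sum_{j=1}^{i_l-s-1} H_j$ and $H_R := \sum_{j=i_r+s+1}^{n-1} H_j$ whose energies exceed $\epsilon_L + t$ and $\epsilon_R + t$, respectively. Granted such concentration, all three parts follow by fairly clean arguments.

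For the easy direction of (a), I would observe that $H_L^{\le t} \le H_L - \epsilon_L$ and $H_R^{\le t} \le H_R - \epsilon_R$ as operators, since capping eigenvalues above $t$ by $t$ can only decrease them; hence $H^{\le t} \le H - \epsilon_L - \epsilon_R$ and taking ground-state energies gives $\lambda(H^{\le t}) \le \lambda(H) - \epsilon_L - \epsilon_R$. For the hard direction, the variational state $|\phi\rangle := P_L^{\le t} P_R^{\le t} |\Psi_0\rangle / \|P_L^{\le t} P_R^{\le t} |\Psi_0\rangle\|$ lies in the joint low-energy subspace, on which $H^{\le t}$ agrees with $H - \epsilon_L - \epsilon_R$. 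The concentration bound forces $|\phi\rangle$ to be $2^{-\Omega(t)}$-close to $|\Psi_0\rangle$ in norm, giving $\lambda(H^{\le t}) \le \langle \phi | H^{\le t} | \phi \rangle \le \lambda(H) - \epsilon_L - \epsilon_R + 2^{-\Omega(t)}$.

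For part (c), any eigenstate of $H^{\le t}$ whose energy is well below $t$ must itself be essentially supported on the joint low-energy subspace (otherwise the cap contribution would push its energy above $t$), and on that subspace $H^{\le t}$ coincides with $H - \epsilon_L - \epsilon_R$. Combining this with the min-max principle and the concentration bound applied to the true low-lying eigenstates of $H$, the two lowest eigenvalues of $H^{\le t}$ track those of $H - \epsilon_L - \epsilon_R$ up to $2^{-\Omega(t)}$ perturbations, so the gap survives up to the constant $1/10$. Part (b) is then the standard gap-to-overlap argument applied to $H^{\le t}$: since $|\Psi_0\rangle$ has energy within $2^{-\Omega(t)}$ of $\lambda(H^{\le t})$ (by the hard half of (a), read in reverse) and $H^{\le t}$ has gap $\epsilon' \ge \epsilon/10$, the component of $|\Psi_0\rangle$ orthogonal to $|\Psi_0'\rangle$ has squared weight at most $2^{-\Omega(t)}/\epsilon' = 2^{-\Omega(t)}$ once $t \gtrsim \log \epsilon^{-1}$.

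The main obstacle is clearly the concentration bound itself. The natural tool is an approximate ground-state projector: a low-degree Chebyshev polynomial of $H$, of degree $\tilde O(1/\epsilon)$, that fixes $|\Psi_0\rangle$ and exponentially suppresses eigenstates of $H$ with energy $\ge \lambda(H) + t$. Translating suppression for excitations of $H$ into suppression for excitations of $H_L$ or $H_R$ requires the geometric fact that $H - H_L - H_R$ is supported on a bounded middle region of width $O(s)$, so a high-energy excitation of $H_L$ can be deformed into a high-energy excitation of $H$ at the cost of an $O(1)$ shift in energy. Tracking this shift against the Chebyshev-induced suppression rate $\Omega(\sqrt{\epsilon})$ per application is exactly what forces the hypothesis $t \ge O(\log \epsilon^{-1})$, and is the delicate step borrowed from the area-law machinery of [Hua14].
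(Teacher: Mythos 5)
The paper offers no proof of this lemma to compare against: it is imported verbatim from \cite{Hua14}. Judged on its own, your proposal has a concrete gap at its central step: the ``hard direction'' of (a) is argued in the wrong direction. The second inequality of (a), $\lambda(H)-\lambda(H^{\le t})-\epsilon_L-\epsilon_R\le2^{-\Omega(t)}$, is a \emph{lower} bound $\lambda(H^{\le t})\ge\lambda(H)-\epsilon_L-\epsilon_R-2^{-\Omega(t)}$. Your variational state $|\phi\rangle=P^{\le t}_LP^{\le t}_R|\Psi_0\rangle/\|P^{\le t}_LP^{\le t}_R|\Psi_0\rangle\|$ can only yield an \emph{upper} bound $\lambda(H^{\le t})\le\langle\phi|H^{\le t}|\phi\rangle$, which is a weaker restatement of the direction you already got from the operator inequality $H^{\le t}\le H-\epsilon_L-\epsilon_R$. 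The actual content of (a) is to rule out that the ground state of $H^{\le t}$ lowers its energy by sitting on eigenstates of $H_L$ or $H_R$ above the cap: writing $H^{\le t}=H-\epsilon_L-\epsilon_R-(H_L-\epsilon_L-t)(1-P^{\le t}_L)-(H_R-\epsilon_R-t)(1-P^{\le t}_R)$ shows one must control the weight of the ground state of the \emph{truncated} Hamiltonian outside the low-energy subspaces against correction operators of norm $\Theta(n)$. Your concentration bound is stated only for $|\Psi_0\rangle$, and even there an overlap of $1-2^{-\Omega(t)}$ does not by itself control $\langle\phi|H|\phi\rangle-\lambda(H)$ without picking up a factor of $\|H\|=\Theta(n)$.

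The gap propagates: (b) explicitly invokes ``the hard half of (a), read in reverse,'' and (c) as sketched suffers from the same two problems. A low-energy eigenstate of $H^{\le t}$ with energy $E$ has weight at most $E/t$ outside the joint low-energy subspace, which is only polynomially small in $t$ (and here $E$ can be as large as $O(1/\delta+s)$, not $O(1)$); converting ``mostly supported where $H^{\le t}$ equals $H-\epsilon_L-\epsilon_R$'' into eigenvalue tracking via min--max again costs $\|H\|=\Theta(n)$ if done naively. This is precisely why \cite{Hua14} and \cite{AKLV13} route the argument through approximate ground-state projectors and the truncation lemma (quoted later in this paper with its explicit $100\cdot2^{-t'/20}$ decay) rather than through perturbation of eigenvalues. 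Your easy half of (a), the gap-to-overlap step in (b) conditional on (a) and (c), and your identification of the Chebyshev AGSP machinery as the source of the concentration bound are all sound, but as written the proposal establishes neither the second inequality of (a) nor (c).
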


\begin{definition} [support set]
For any interval $[i_l,i_r]$, $S\subseteq\mathcal H_{[i_l,k]}$ with $k\le i_r$ is a $(k,s,b,\Delta_e~\mathrm{or}~\Delta_f)$-support set if there exists a state $|\psi\rangle\in\mathcal H$ (called a witness for $S$) such that\\
(i) the reduced density matrix of $|\psi\rangle$ on $\mathcal H_{[i_l,k]}$ is supported on $\Span S$;\\
(ii) $|S|\le s$;\\
(iv) all elements in $S$ are MPS of bond dimension at most $b$;\\
(iii) $|\langle\psi|\Psi_0'\rangle|\ge1-\Delta_f$ or $\langle\psi|H^{\le t}|\psi\rangle\le\lambda(H^{\le t})+\Delta_e\epsilon'$ (depending on the context either $\Delta_e$ or $\Delta_f$ is used as the precision parameter).
\end{definition}

\begin{lemma} \label{lem5}
A number $e_i$ between $\lambda(H_i')$ and $\lambda(H_i')+3$ can be computed in time $(1/\delta)^{O(1)}$.
\end{lemma}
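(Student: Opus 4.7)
The plan is to combine the truncation machinery of \cite{Hua14} with a viable-set / AGSP algorithm of the type that underlies Lemma \ref{lem2} to produce, in time $(1/\delta)^{O(1)}$, an MPS $|\phi\rangle$ of bond dimension $(1/\delta)^{O(1)}$ whose reduced density matrix on block $i$ is within trace distance $\delta$ of that of the global ground state $|\Psi_0\rangle$. I would then set $e_i:=\langle\phi|H_i'|\phi\rangle$; the lower bound $e_i\ge\lambda(H_i')$ is automatic, and the upper bound reduces to a short variational statement about $|\Psi_0\rangle$ itself.

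The variational statement is this: writing $H=H_i'+H_{\mathrm{rest}}+(H_{i/\delta}+H_{(i+1)/\delta})$, where $H_{\mathrm{rest}}$ acts outside block $i$ and its two adjacent bonds, the two boundary terms are bounded by $1$ in operator norm and nonnegative, so $\lambda(H)\le\lambda(H_i')+\lambda(H_{\mathrm{rest}})+2$. Combined with $\langle\Psi_0|H_{\mathrm{rest}}|\Psi_0\rangle\ge\lambda(H_{\mathrm{rest}})$ and $\langle\Psi_0|H_{i/\delta}+H_{(i+1)/\delta}|\Psi_0\rangle\ge 0$, this forces $\langle\Psi_0|H_i'|\Psi_0\rangle\le\lambda(H_i')+2$. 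Coupled with $\|H_i'\|\le 1/\delta$ and the trace-distance guarantee, one gets $e_i\le\langle\Psi_0|H_i'|\Psi_0\rangle+1\le\lambda(H_i')+3$.

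To construct $|\phi\rangle$, center the buffered region $\mathcal H_{[i_l-s+1,i_r+s]}$ on block $i$ and choose the buffer $s=(\log(1/\delta))^{O(1)}/\poly(\epsilon)$ and energy cutoff $t=O(\log(1/\delta))$ so that the cited lemma from \cite{Hua14} guarantees (i) $H^{\le t}$ is a 1D local Hamiltonian on $\poly(1/\delta)$ spins with gap $\epsilon'\ge\epsilon/10$, and (ii) $|\langle\Psi_0'|\Psi_0\rangle|\ge 1-1/\poly(1/\delta)$. Then build support sets $S_k\subseteq\mathcal H_{[i_l-s+1,k]}$ iteratively: start from the trivial $S_{i_l-s}$, and at each step enlarge by one spin, apply an AGSP built from a Chebyshev polynomial in $H^{\le t}$ to sharpen toward $|\Psi_0'\rangle$, and trim by Schmidt truncation across the new cut. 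Tuning the Chebyshev degree so that the AGSP shrinking factor $\Delta$ times the Schmidt-rank growth factor $B$ satisfies $\Delta B<1/2$, a standard bootstrapping argument keeps $|S_k|$ and the bond dimensions of its elements at $\poly(1/\delta)$ throughout the $\poly(1/\delta)$ steps. Invoking Lemma \ref{lem2} on the final set yields an MPS $|\phi\rangle$ with $\langle\phi|H^{\le t}|\phi\rangle\le\lambda(H^{\le t})+1/\poly(1/\delta)$, and the gap $\epsilon'$ converts this into $|\langle\phi|\Psi_0'\rangle|^2\ge 1-1/\poly(1/\delta)$, hence $|\langle\phi|\Psi_0\rangle|^2\ge 1-1/\poly(1/\delta)$, which yields the desired trace-distance bound after tracing out everything outside block $i$.

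The hard part will be the joint calibration of $s$, $t$, the Chebyshev AGSP degree, and the Schmidt cutoff so that (a) the shrinking-vs-growth product $\Delta B$ stays strictly below one, (b) all resulting expressions in $1/\delta$ remain genuinely polynomial rather than super-polynomial once the $\epsilon$-dependent constants from the truncation lemma are folded in, and (c) the final fidelity error, amplified by $\|H_i'\|=O(1/\delta)$, is still below $1$, so that the upper bound $e_i\le\lambda(H_i')+3$ actually closes. Everything else — the cut-and-paste variational estimate, the fidelity-to-expectation-value conversion, and the total-runtime accounting — is then routine.
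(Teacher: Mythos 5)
Your overall strategy is the same as the paper's: pass to the truncated window Hamiltonian $H^{\le t}$ of \cite{Hua14}, run a local variant of the viable/support-set algorithm of \cite{LVV13} to polynomial cost in $1/\delta$, and close the upper bound with the cut-and-paste inequality $\lambda(H)\le\lambda(H_i')+\lambda(H_{\mathrm{rest}})+2$ (which is correct, and is essentially the paper's $|\phi_L\phi_i\phi_R\rangle$ computation). The divergence is in how the local data is extracted and how the error is accounted, and this is where there are two genuine gaps.

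First, your route from the algorithm's output to the bound $e_i\le\lambda(H_i')+3$ goes through global fidelity, then trace distance of reduced density matrices, then multiplication by $\|H_i'\|=O(1/\delta)$. An energy guarantee $\langle\phi|H^{\le t}|\phi\rangle\le\lambda(H^{\le t})+\eta\epsilon'$ converts (via the gap) to $|\langle\phi|\Psi_0'\rangle|^2\ge1-\eta$ and hence to an error of order $\sqrt{\eta}/\delta$ in $\langle H_i'\rangle$; to keep this below $1$ you need $\eta=O(\delta^2)$, i.e.\ the error-reduction step must be pushed to accuracy $\Delta_e=O(\delta^2)$ rather than the constant accuracy that the support-set machinery naturally delivers (and that the paper uses, $\Delta_e=c\epsilon'^{14}$). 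This is probably salvageable at polynomial cost, but you have not noticed that it is forced; the paper avoids the issue entirely by never multiplying by $\|H_i'\|$: it takes the witness $|\psi_i\rangle$ with $\langle\psi_i|H^{\le t}|\psi_i\rangle\le\lambda(H^{\le t})+1$ and applies your own cut-and-paste inequality directly to $|\psi_i\rangle$ (splitting $H^{\le t}=H_L^++H_i'+H_R^+$) to get $\langle\psi_i|H_i'|\psi_i\rangle\le\lambda(H_i')+3$, so only $O(1)$ energy accuracy is ever needed. Second, and more seriously, you propose to output an MPS $|\phi\rangle$ with $|\langle\phi|\Psi_0'\rangle|^2\ge1-1/\poly(1/\delta)$. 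But $H^{\le t}$ is not a Hamiltonian on $\poly(1/\delta)$ spins: $H_L^{\le t}$ and $H_R^{\le t}$ are bounded in norm but act on the entire left and right tails, so $|\Psi_0'\rangle$ is a state on all $n$ spins, and an algorithm whose running time is independent of $n$ cannot produce a global state with high fidelity to it. Starting the iteration "from the trivial $S_{i_l-s}$" simply ignores the tails. The paper's formulation is designed around exactly this obstruction: the support set is only required to support the \emph{reduced density matrix} of some existential global witness, the tails are absorbed by the boundary contractions $C_L,C_R$ and the $\epsilon$-nets $N_l,N_r$ in the cardinality-reduction step, and the final answer is the convex program $e_i=\min\tr(\sigma H_i')$ over density matrices on $\Span S_{i_r}$ rather than an expectation in a computed global state. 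Your sketch needs to be reorganized along these lines before the calibration issues you flag in (a)--(c) can even be posed.
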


\begin{proof} [Proof sketch]
Set $i_l=i/\delta+1$ and $i_r=(i+1)/\delta$. Our algorithm is a ``local'' variant of and thus very similar to the polynomial-time algorithm for computing the ground-state wave function in 1D gapped systems \cite{LVV13, Hua14, CF15}. In particular, it iteratively constructs a $(k,p_1p_3,p_2p_3,\Delta_e=c\epsilon'^{14})$-support set $S_k$ for $k=i_l,i_l+1,\ldots,i_r$, where $p_1,p_2,p_3$ are (upper bounded by) $k$-independent and $\epsilon$-dependent polynomials in $1/\delta$, and $c$ is a sufficiently small absolute constant. After the last iteration, we solve the convex program $e_i=\min\tr(\sigma H'_i)$, whose variable $\sigma$ is a density matrix on $\Span S_{i_r}$. This convex program is of polynomial (in $1/\delta$) size because (i) $\Span S_{i_r}$ is of polynomial dimension; (ii) any element in $\Span S_{i_r}$ is an MPS of polynomial bond dimension. Clearly, $e_i\ge\lambda(H_i')$. By definition, there exists a state $|\psi_i\rangle\in\mathcal H$ such that (i) $\langle\psi_i|H^{\le t}|\psi_i\rangle\le\lambda(H^{\le t})+c\epsilon'^7\le\lambda(H^{\le t})+1$; (ii) its reduced density matrix on $\mathcal H_{[i_l,i_r]}$ is supported on $\Span S_{i_r}$. Property (ii) implies $e_i\le\langle\psi_i|H'_i|\psi_i\rangle$ and (i) implies $\langle\psi_i|H'_i|\psi_i\rangle\le\lambda(H'_i)+3$. Indeed, let $|\phi_L\rangle,|\phi_i\rangle,|\phi_R\rangle$ be the ground states of $H^+_L=H_L^{\le t}+\sum_{j=i_l-s}^{i_l-2}H_j,H'_i,H^+_R=\sum_{j=i_r+1}^{i_r+s}H_j+H_R^{\le t}$, respectively. Then,
\begin{eqnarray}
&&\langle\phi_L|H^+_L|\phi_L\rangle+\langle\phi_i|H'_i|\phi_i\rangle+\langle\phi_R|H^+_R|\phi_R\rangle+2\ge\langle\phi_L\phi_i\phi_R|H^{\le t}|\phi_L\phi_i\phi_R\rangle\ge\lambda(H^{\le t})\nonumber\\
&&\ge\langle\psi_i|H^{\le t}|\psi_i\rangle-1\ge\langle\psi_i|H^+_L|\psi_i\rangle+\langle\psi_i|H'_i|\psi_i\rangle+\langle\psi_i|H^+_R|\psi_i\rangle-1\nonumber\\
&&\ge\langle\phi_L|H^+_L|\phi_L\rangle+\langle\psi_i|H'_i|\psi_i\rangle+\langle\phi_R|H^+_R|\phi_R\rangle-1.
\end{eqnarray}

Each iteration consists of four steps: \textbf{extension}, \textbf{cardinality reduction}, \textbf{bond truncation}, and \textbf{error reduction}. Table \ref{t} summarizes the evolution of the parameters $s,b,\Delta_f$ or $\Delta_e$ in each iteration of our algorithm.

The analysis below gives
\begin{equation}
p_1=2^{2^{\tilde O(1/\epsilon)}},~p_2=O(1/\delta)2^{\tilde O(\epsilon^{-1/4}\log^{3/4}(1/\delta))},~p_3=(1/\delta)^{O(1)}.
\end{equation}
The  running time of the algorithm is a polynomial in $p_1,p_2,p_3$ so that Lemma \ref{lem5} follows. The remainder of this section explains each step in detail.
\end{proof}

\begin{table}
\caption{Evolution of the parameters in each iteration. The asterisks mark the parameter that is reduced at every step.}
\begin{tabular}{cccccc}
\hline \hline
& $k$ & $s$ & $b$ & $\Delta_f$ & $\Delta_e$\\
\hline
start & $k-1$ & $p_1p_3$ &$p_2p_3$& n/a &$c\epsilon'^{14}$ \\
\textbf{extension} & $k$ & $ dp_1p_3$ &$p_2p_3$ & n/a &$c\epsilon'^{14}$ \\
\textbf{cardinality reduction} & $k$ & $p_1$* &  $dp_1p_2p_3^2$ & $1/1000$ & $1/1000$ \\
\textbf{bond truncation} & $k$ & $p_1$& $p_2$*& $1/20$ & n/a\\
\textbf{error reduction} & $k$ & $p_1p_3$& $p_2p_3$ & n/a & $c\epsilon'^{14}$*\\ 
\hline \hline
\end{tabular} \label{t}
\end{table}

\subsection{Preliminaries}

\begin{lemma} \label{claim:overlap}
(a) $\langle\psi|H^{\le t}|\psi\rangle\le\lambda(H^{\le t})+\eta\epsilon'$ implies $|\langle\psi|\Psi_0'\rangle|\ge|\langle\psi|\Psi_0'\rangle|^2\ge1-\eta$;\\
(b) $|\langle\psi|\phi_1\rangle|\ge1-\eta_1$ and $|\langle\psi|\phi_2\rangle|\ge1-\eta_2$ imply $|\langle\phi_1|\phi_2\rangle|\ge1-2(\eta_1+\eta_2)$;\\
(c) $|\langle\psi|\phi\rangle|\ge1-\eta$ implies $|\langle\psi|\hat O|\psi\rangle-\langle\phi|\hat O|\phi\rangle|\le2\sqrt{2\eta}$ for any operator $\hat O$ with $\|\hat O\|\le1$.
\end{lemma}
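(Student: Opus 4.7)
The plan is to dispatch the three claims separately, each by a short textbook argument on overlaps of pure states. For (a), I would expand $|\psi\rangle$ in an orthonormal eigenbasis $\{|\Psi_j'\rangle\}_{j\ge0}$ of $H^{\le t}$, writing $|\psi\rangle=\sum_j c_j|\Psi_j'\rangle$ with eigenvalues $E_0=\lambda(H^{\le t})$ and $E_j\ge\lambda(H^{\le t})+\epsilon'$ for $j\ge 1$ by the definition of the gap $\epsilon'$. The variational identity
\begin{equation*}
\langle\psi|H^{\le t}|\psi\rangle=\sum_j|c_j|^2 E_j\ge|c_0|^2\lambda(H^{\le t})+(1-|c_0|^2)\bigl(\lambda(H^{\le t})+\epsilon'\bigr)=\lambda(H^{\le t})+(1-|c_0|^2)\epsilon'
\end{equation*}
compared with the hypothesis yields $1-|c_0|^2\le\eta$, i.e.\ $|\langle\psi|\Psi_0'\rangle|^2\ge 1-\eta$. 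The first inequality $|\langle\psi|\Psi_0'\rangle|\ge|\langle\psi|\Psi_0'\rangle|^2$ is then immediate from $|\langle\psi|\Psi_0'\rangle|\le 1$.

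For (b), after absorbing phases I would decompose $|\phi_i\rangle=\alpha_i|\psi\rangle+\beta_i|\chi_i\rangle$ with $\langle\psi|\chi_i\rangle=0$ and $\alpha_i=\langle\psi|\phi_i\rangle\ge 1-\eta_i$ chosen real and nonnegative, so that $|\beta_i|^2=1-\alpha_i^2\le 1-(1-\eta_i)^2\le 2\eta_i$. Expanding the overlap and applying Cauchy--Schwarz to $|\langle\chi_1|\chi_2\rangle|\le 1$ together with AM--GM $2\sqrt{\eta_1\eta_2}\le\eta_1+\eta_2$ gives
\begin{equation*}
|\langle\phi_1|\phi_2\rangle|\ge\alpha_1\alpha_2-|\beta_1||\beta_2|\ge(1-\eta_1)(1-\eta_2)-2\sqrt{\eta_1\eta_2}\ge 1-2(\eta_1+\eta_2),
\end{equation*}
where the nonnegative cross term $\eta_1\eta_2$ is simply dropped.

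For (c), I would invoke the standard pure-state trace-distance/fidelity identity $\tfrac{1}{2}\bigl\||\psi\rangle\langle\psi|-|\phi\rangle\langle\phi|\bigr\|_1=\sqrt{1-|\langle\psi|\phi\rangle|^2}$ combined with H\"older's inequality between operator and trace norms:
\begin{equation*}
\bigl|\langle\psi|\hat O|\psi\rangle-\langle\phi|\hat O|\phi\rangle\bigr|=\bigl|\tr\bigl(\hat O(|\psi\rangle\langle\psi|-|\phi\rangle\langle\phi|)\bigr)\bigr|\le\|\hat O\|\,\bigl\||\psi\rangle\langle\psi|-|\phi\rangle\langle\phi|\bigr\|_1\le 2\sqrt{1-(1-\eta)^2}\le 2\sqrt{2\eta}.
\end{equation*}

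Since each claim reduces to linear algebra in a two-dimensional subspace, there is no real obstacle. The only points requiring a little care are the phase conventions in (b) (ensuring the cross term $\beta_1^*\beta_2\langle\chi_1|\chi_2\rangle$ is controlled by $|\beta_1||\beta_2|$ uniformly in phases) and, for (c), remembering to take the cleaner trace-distance route rather than term-by-term expansion, which would produce an unwanted additional $O(\eta)$ remainder rather than the sharp $2\sqrt{2\eta}$ stated.
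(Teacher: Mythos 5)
Your proof is correct and complete. The paper offers no argument for this lemma beyond the remark ``Just a few lines of algebra,'' and your three derivations --- the variational/spectral-gap bound giving $1-|c_0|^2\le\eta$ for (a), the two-dimensional decomposition with $|\beta_i|^2\le 2\eta_i$ plus Cauchy--Schwarz and AM--GM for (b), and the pure-state identity $\||\psi\rangle\langle\psi|-|\phi\rangle\langle\phi|\|_1=2\sqrt{1-|\langle\psi|\phi\rangle|^2}$ combined with H\"older for (c) --- are precisely the standard few lines the author intended, with the constants matching the stated bounds.
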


\begin{proof}
Just a few lines of algebra.
\end{proof}

\begin{definition} [truncation]
Let $|\psi\rangle=\sum_{j\ge1}\lambda_j|l_j\rangle|r_j\rangle$ be the Schmidt decomposition of a state $|\psi\rangle\in\mathcal H$ across the cut $i|i+1$, where the Schmidt coefficients are in nonascending order: $\lambda_1\ge\lambda_\ge\cdots>0$. Define $\trunc_D|\psi\rangle=\sum_{j=1}^D\lambda_j|l_j\rangle|r_j\rangle$.
\end{definition}

\begin{lemma} [Eckart-Young theorem] \label{lem:eckart}
The state $|\psi'\rangle=\trunc_D|\psi\rangle/\|\trunc_D|\psi\rangle\|$ satisfies $\langle\psi'|\psi\rangle\ge|\langle\phi|\psi\rangle|$ for any state $|\phi\rangle\in\mathcal H$ of Schmidt rank $D$ (across the cut $i|i+1$).
\end{lemma}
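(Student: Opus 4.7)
The plan is to recast the statement as the standard Eckart--Young optimality for rank-$D$ matrix approximations under the Hilbert--Schmidt inner product, which is the natural language since the Schmidt decomposition across the cut $i|i+1$ is precisely the singular value decomposition of the bipartite coefficient matrix. Fix orthonormal bases of $\mathcal H_{[1,i]}$ and $\mathcal H_{[i+1,n]}$ and identify any state $|\xi\rangle\in\mathcal H$ with its matrix of coefficients $\Xi$, so that $\langle\phi|\psi\rangle=\tr(\Phi^\dagger\Psi)$ and $\||\xi\rangle\|=\|\Xi\|_F$. Under this identification, the Schmidt decomposition $|\psi\rangle=\sum_{j\ge1}\lambda_j|l_j\rangle|r_j\rangle$ becomes the SVD $\Psi=\sum_j\lambda_j|l_j\rangle\langle r_j|^{*}$ (with $\lambda_j$ in nonascending order), Schmidt rank corresponds to matrix rank, and $\trunc_D$ corresponds to keeping the top $D$ singular components.

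First I would compute the right-hand side directly. Since $\|\trunc_D|\psi\rangle\|=\sqrt{\sum_{j=1}^D\lambda_j^2}$, a one-line calculation gives $\langle\psi'|\psi\rangle=\sqrt{\sum_{j=1}^D\lambda_j^2}$, and this quantity is manifestly non-negative, so the absolute value on the right-hand side is not needed there. The task is therefore to show $|\tr(\Phi^\dagger\Psi)|\le\sqrt{\sum_{j=1}^D\lambda_j^2}$ for every unit-norm $\Phi$ with $\mathrm{rank}(\Phi)\le D$.

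Next I would introduce the orthogonal projector $P_\Phi$ onto the column space of $\Phi$, which has rank at most $D$ and satisfies $\Phi=P_\Phi\Phi$. Then $\tr(\Phi^\dagger\Psi)=\tr(\Phi^\dagger P_\Phi\Psi)=\langle\Phi,P_\Phi\Psi\rangle_{HS}$, and Cauchy--Schwarz in the Hilbert--Schmidt inner product gives
\begin{equation}
|\langle\phi|\psi\rangle|^2\le\|\Phi\|_F^{2}\,\|P_\Phi\Psi\|_F^{2}=\tr(P_\Phi\Psi\Psi^\dagger).
\end{equation}
Since $\Psi\Psi^\dagger=\sum_j\lambda_j^2|l_j\rangle\langle l_j|$ in the left Schmidt basis, we get $\tr(P_\Phi\Psi\Psi^\dagger)=\sum_j\lambda_j^2\langle l_j|P_\Phi|l_j\rangle$. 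The weights $p_j:=\langle l_j|P_\Phi|l_j\rangle$ lie in $[0,1]$ and sum to $\tr P_\Phi\le D$, so a standard rearrangement argument (the linear functional $\sum_j\lambda_j^2p_j$ subject to these constraints is maximized by setting $p_1=\cdots=p_D=1$, using that $\{\lambda_j^2\}$ is nonincreasing) yields $\sum_j\lambda_j^2p_j\le\sum_{j=1}^D\lambda_j^2$. Combining the two inequalities gives $|\langle\phi|\psi\rangle|\le\sqrt{\sum_{j=1}^D\lambda_j^2}=\langle\psi'|\psi\rangle$, as required.

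The proof has no real obstacle; the only mildly delicate point is the rearrangement step, which I would justify explicitly to avoid any ambiguity about why the rank constraint on $|\phi\rangle$ forces the weights onto the top $D$ Schmidt values. Everything else is a direct translation of the bipartite state to its matrix representation plus a single application of Cauchy--Schwarz.
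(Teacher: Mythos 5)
Your proof is correct. The paper gives no proof of this lemma at all---it simply invokes the classical Eckart--Young theorem by name---and your argument is the standard self-contained derivation (identify the Schmidt decomposition with the SVD of the coefficient matrix, apply Cauchy--Schwarz against the projector $P_\Phi$ onto the column space, and use the rearrangement bound $\sum_j\lambda_j^2p_j\le\sum_{j=1}^D\lambda_j^2$); the only nitpick is that $\sum_jp_j$ equals $\tr P_\Phi$ only if the left Schmidt vectors form a complete orthonormal basis, but the inequality $\sum_jp_j\le\tr P_\Phi\le D$ holds in general and is all the rearrangement step needs.
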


\begin{lemma} [\cite{LVV13}] \label{l:trim2}
Suppose $|\phi\rangle\in\mathcal H$ is a state of Schmidt rank $D$ (across the cut $i|i+1$).
\begin{equation}
|\langle\trunc_{D/\eta}\psi|\phi\rangle|\ge|\langle\psi|\phi\rangle|-\eta,~\forall\eta>0,\psi\in\mathcal H.
\end{equation}
\end{lemma}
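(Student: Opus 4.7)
The plan is to isolate the truncation error and bound it using the Schmidt-rank restriction on $|\phi\rangle$. Setting $D':=\lceil D/\eta\rceil$, decompose
\[
|\psi\rangle \;=\; \trunc_{D'}|\psi\rangle + |\psi_\perp\rangle, \qquad |\psi_\perp\rangle:=\sum_{j>D'}\lambda_j|l_j\rangle|r_j\rangle,
\]
so that $\langle\trunc_{D'}\psi|\phi\rangle = \langle\psi|\phi\rangle - \langle\psi_\perp|\phi\rangle$. By the reverse triangle inequality, it suffices to show $|\langle\psi_\perp|\phi\rangle|\le\eta$.

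The key observation is that $|\phi\rangle$ lives in a low-dimensional subspace on the left half of the cut: since its Schmidt rank is $D$, its left reduced density matrix has rank at most $D$. Let $P$ denote the orthogonal projection onto the support of that reduced density matrix, a rank-$D$ projector on the left Hilbert space. Then $(P\otimes I)|\phi\rangle = |\phi\rangle$, whence by the Cauchy--Schwarz inequality
\[
|\langle\psi_\perp|\phi\rangle| \;=\; |\langle\psi_\perp|(P\otimes I)|\phi\rangle| \;\le\; \|(P\otimes I)|\psi_\perp\rangle\|\cdot\|\phi\| \;=\; \|(P\otimes I)|\psi_\perp\rangle\|.
\]

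I would then evaluate the right-hand side in the Schmidt basis of $|\psi\rangle$. A short calculation using orthonormality of $\{|r_j\rangle\}$ yields
\[
\|(P\otimes I)|\psi_\perp\rangle\|^{2} \;=\; \sum_{j>D'}\lambda_j^{2}\,\langle l_j|P|l_j\rangle,
\]
and this I bound using two elementary facts: (i) $\sum_{j}\langle l_j|P|l_j\rangle=\tr P = D$ with each term in $[0,1]$; (ii) monotonicity of the Schmidt coefficients gives $D'\lambda_{D'+1}^{2}\le\sum_{j\le D'}\lambda_j^{2}\le 1$, so $\lambda_{D'+1}^{2}\le 1/D'$. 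A rearrangement/extremal argument (the sum is largest when the weights $\langle l_j|P|l_j\rangle$ are concentrated on the $D$ tail indices $j=D'+1,\ldots,D'+D$ where $\lambda_j$ is biggest) then gives
\[
\sum_{j>D'}\lambda_j^{2}\,\langle l_j|P|l_j\rangle \;\le\; D\cdot \lambda_{D'+1}^{2} \;\le\; D/D' \;=\; \eta,
\]
which plugs back into the previous display to finish the proof.

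The main delicacy I anticipate is pinning down the exact dependence on $\eta$: the one-sided projection plus a single Cauchy--Schwarz naturally controls $\|(P\otimes I)|\psi_\perp\rangle\|^2$ by $\eta$, i.e. the norm itself by $\sqrt{\eta}$, so to recover the cleaner bound $\eta$ exactly as stated one may need either to enlarge $D'$ by a constant power of $1/\eta$ or to apply a two-sided projection using rank-$D$ projectors on both halves together with the Schmidt expansion $|\phi\rangle=\sum_{k=1}^{D}\mu_k|l'_k\rangle|r'_k\rangle$ and a second Cauchy--Schwarz over $k$. Beyond this bookkeeping, the ingredients -- projecting onto the support of a low-rank reduced density matrix, and the extremal bound $D'\lambda_{D'+1}^{2}\le 1$ -- are standard.
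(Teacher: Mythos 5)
The paper gives no proof of this lemma (it is quoted from \cite{LVV13}), but your route --- split off $|\psi_\perp\rangle=\sum_{j>D'}\lambda_j|l_j\rangle|r_j\rangle$, project onto the rank-$D$ support $P$ of the left marginal of $|\phi\rangle$, and compute $\|(P\otimes I)|\psi_\perp\rangle\|^2=\sum_{j>D'}\lambda_j^2\langle l_j|P|l_j\rangle\le D\lambda_{D'+1}^2\le D/D'$ --- is exactly the standard argument. (The ``rearrangement/extremal'' step is unnecessary: $\lambda_j^2\le\lambda_{D'+1}^2$ termwise together with $\sum_j\langle l_j|P|l_j\rangle\le\tr P=D$ already gives the bound.) The issue you flag at the end is, however, not mere bookkeeping: the argument genuinely yields only $|\langle\psi_\perp|\phi\rangle|\le\sqrt{D/D'}=\sqrt\eta$, and your second proposed repair (two-sided projection plus a second Cauchy--Schwarz over the Schmidt decomposition $|\phi\rangle=\sum_{k=1}^D\mu_k|l_k'\rangle|r_k'\rangle$) does not close the gap: it gives $|\langle\psi_\perp|\phi\rangle|\le\lambda_{D'+1}\sum_k\mu_k\le\lambda_{D'+1}\sqrt D\le\sqrt{D/D'}$ again, because $\sum_k\mu_k$ can be as large as $\sqrt D$.

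In fact no argument can do better, because the lemma as printed is false. Take $D=1$, left and right spaces of dimension $N=1/\eta+1$, $|\psi\rangle$ essentially maximally entangled with the $N$th Schmidt coefficient perturbed to be strictly the smallest, and $|\phi\rangle=|l_N\rangle|r_N\rangle$. Then $|\langle\psi|\phi\rangle|=\lambda_N\approx\sqrt{\eta/(1+\eta)}$, while $\trunc_{1/\eta}|\psi\rangle$ drops the $N$th term, so $|\langle\trunc_{D/\eta}\psi|\phi\rangle|=0<|\langle\psi|\phi\rangle|-\eta$ for small $\eta$. The correct statement (and the form in \cite{LVV13}) truncates to $D/\eta^2$ to lose $\eta$ --- equivalently, truncating to $D/\eta$ loses $\sqrt\eta$ --- so your first proposed fix is the right one: set $D'=\lceil D/\eta^2\rceil$ and your proof is complete. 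The downstream use in the bond-truncation step survives with $p_2=O(r/\delta^2)$ in place of $O(r/\delta)$, which is still polynomial in $1/\delta$ and changes nothing qualitatively.
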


Fix a cut $i|i+1$ with $i_l\le i\le i_r-1$ in the definition (\ref{truncated}) of $H^{\le t}$.

\begin{lemma} \label{constantbondapprox}
Set $s\ge\tilde O(\epsilon^{-1})+O(\epsilon^{-1/4}\log^{3/4}(1/\eta))$. Then, $\langle\psi|\Psi_0'\rangle\ge1-\eta$ for $|\psi\rangle=\trunc_{B_\eta}|\Psi_0'\rangle/\|\trunc_{B_{\eta}}|\Psi_0'\rangle\|$, where $B_\eta=2^{\tilde O(1/\epsilon+\epsilon^{-1/4}\log^{3/4}(1/\eta))}$.
\end{lemma}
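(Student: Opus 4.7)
The plan is to combine the area-law bound of Lemma \ref{lem4} with a direct calculation of the overlap between $|\Psi_0'\rangle$ and its Schmidt truncation. Concretely, once the Schmidt coefficients of $|\Psi_0'\rangle$ at the cut $i|i+1$ satisfy the quantitative decay predicted by the area law, the claimed overlap follows from a one-line computation.

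First I would establish that Lemma \ref{lem4} (and more precisely its proof in \cite{Hua14}) applies to $|\Psi_0'\rangle$ across the cut $i|i+1$. The Hamiltonian $H^{\le t}$ has gap $\epsilon'\ge\epsilon/10$, and on the bulk region $[i_l-s,i_r+s]$ it coincides with the original local Hamiltonian; the nonlocal boundary terms $H_L^{\le t}$ and $H_R^{\le t}$ are supported on $\mathcal H_{[1,i_l-s-1]}$ and $\mathcal H_{[i_r+s+1,n]}$, at distance at least $s$ from the cut. The AGSP construction underlying Lemma \ref{lem4} is a polynomial of $H^{\le t}$ of degree $\tilde O(\epsilon^{-1}) + O(\epsilon^{-1/4}\log^{3/4}(1/\eta))$; its effective support around the cut grows by at most this many sites. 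The hypothesis on $s$ is chosen precisely so that the AGSP support never touches the region where $H_L^{\le t}$ or $H_R^{\le t}$ lives, so the nonlocal boundary terms commute with the entire AGSP expansion relevant to the cut and drop out of the bound. The remainder of the analysis then proceeds as in \cite{Hua14}, yielding $\varepsilon_D\le\eta$ at $D=B_\eta=2^{\tilde O(1/\epsilon+\epsilon^{-1/4}\log^{3/4}(1/\eta))}$, up to a trivial rescaling of constants in the exponent.

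Given this, the overlap calculation is immediate. Writing $|\Psi_0'\rangle=\sum_j\lambda_j|l_j\rangle|r_j\rangle$ in Schmidt form, $\trunc_{B_\eta}|\Psi_0'\rangle=\sum_{j=1}^{B_\eta}\lambda_j|l_j\rangle|r_j\rangle$ has norm $\sqrt{1-\varepsilon_{B_\eta}}$, so
\begin{equation}
\langle\psi|\Psi_0'\rangle=\frac{\sum_{j=1}^{B_\eta}\lambda_j^2}{\sqrt{1-\varepsilon_{B_\eta}}}=\sqrt{1-\varepsilon_{B_\eta}}\ge1-\varepsilon_{B_\eta}\ge1-\eta.
\end{equation}

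The main obstacle is the first step: verifying that the area-law proof genuinely survives the replacement of the two end blocks of the chain by the projector-based operators $H_L^{\le t}$ and $H_R^{\le t}$, which have operator norm up to $t$ and are not 2-local. What makes this go through is that these modified operators are supported on tensor factors disjoint from the cut and from every local term of the Hamiltonian within the buffer; hence the nested commutator expansion used to build the AGSP sees them only through terms that already have support reaching into the far boundary, which requires polynomial degree larger than $s$. The lower bound on $s$ is therefore exactly the "light-cone" condition that guarantees the AGSP produced in \cite{Hua14} has the same quantitative properties for $H^{\le t}$ as for a standard 1D gapped local Hamiltonian of gap $\epsilon'$, at which point Lemma \ref{lem4} can be invoked verbatim.
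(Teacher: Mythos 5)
Your proposal takes essentially the same route as the paper, whose entire proof of this lemma is the one-line citation ``this is a by-product of the proof of the area law for entanglement \cite{Hua14}'': you invoke the same area-law machinery for the gapped truncated Hamiltonian $H^{\le t}$ and then add the (correct, elementary) overlap computation $\langle\psi|\Psi_0'\rangle=\sqrt{1-\varepsilon_{B_\eta}}\ge1-\varepsilon_{B_\eta}\ge1-\eta$. Your light-cone reading of the hypothesis on $s$ (that the AGSP degree, which is of the same order as $\log B_\eta$, must not reach the nonlocal boundary operators $H_L^{\le t},H_R^{\le t}$) is a plausible and consistent elaboration of what the citation leaves implicit, so there is no substantive divergence to report.
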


\begin{proof}
This is a by-product of the proof of the area law for entanglement \cite{Hua14}.
\end{proof}

\begin{lemma} [\cite{DMRG}] \label{l:2}
$\langle\psi|H^{\le t}|\psi\rangle\le\lambda(H^{\le t})+\eta\epsilon'$ and $\eta\le1/10$ imply $\langle\psi'|H^{\le t}|\psi'\rangle\le\lambda(H^{\le t})+25\sqrt{\eta}$ for $|\psi'\rangle=\trunc_{B_\eta}|\psi\rangle/\|\trunc_{B_\eta}|\psi\rangle\|$.
\end{lemma}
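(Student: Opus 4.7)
The plan is to chain the energy hypothesis $\langle\psi|H^{\le t}|\psi\rangle\le\lambda(H^{\le t})+\eta\epsilon'$ into a fidelity statement, use the low-bond-dimension approximability of $|\Psi_0'\rangle$ (Lemma \ref{constantbondapprox}) together with Eckart--Young (Lemma \ref{lem:eckart}) to show that the rank-$B_\eta$ state $|\psi'\rangle$ retains most of $|\psi\rangle$, and then translate back to an energy bound by splitting $H^{\le t}$ around the Schmidt cut. The naive ``close states have close expectations'' bound (Lemma \ref{claim:overlap}(c)) is too weak on its own because $\|H^{\le t}\|$ can be polynomially large in $1/\delta$, but the obstruction evaporates after the cut-splitting, since only the single nearest-neighbor term $H_i$ actually crosses the cut.

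For the fidelity chain, I would apply Lemma \ref{claim:overlap}(a) to obtain $|\langle\psi|\Psi_0'\rangle|\ge 1-\eta$, then set $|\phi\rangle:=\trunc_{B_\eta}|\Psi_0'\rangle/\|\trunc_{B_\eta}|\Psi_0'\rangle\|$, which by Lemma \ref{constantbondapprox} has Schmidt rank $\le B_\eta$ and $\langle\phi|\Psi_0'\rangle\ge 1-\eta$. Combining the two overlaps via Lemma \ref{claim:overlap}(b) gives $|\langle\psi|\phi\rangle|\ge 1-4\eta$, and since $|\phi\rangle$ is a rank-$B_\eta$ competitor, Lemma \ref{lem:eckart} yields $\langle\psi'|\psi\rangle\ge|\langle\psi|\phi\rangle|\ge 1-4\eta$. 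Hence the Schmidt truncation error satisfies $\varepsilon:=1-\langle\psi'|\psi\rangle^2\le 8\eta$.

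For the energy step, observe that $|\psi'\rangle$ shares the same Schmidt basis as $|\psi\rangle$ across the cut $i|i+1$, with the kept Schmidt weights rescaled by $(1-\varepsilon)^{-1/2}$. Hence for any positive-semidefinite operator $O$ supported strictly on one side of the cut,
\begin{equation}
\langle\psi'|O|\psi'\rangle\le\frac{\langle\psi|O|\psi\rangle}{1-\varepsilon}.
\end{equation}
Now split $H^{\le t}=A+H_i+B$, where $A:=H_L^{\le t}+\sum_{j<i}H_j$ and $B:=\sum_{j>i}H_j+H_R^{\le t}$ are strictly block-supported on opposite sides of the cut (this uses $s>0$ in the definition of $H^{\le t}$). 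Applying the above inequality to $O=A-\lambda(A)$ and $O=B-\lambda(B)$, handling the single cut-crossing term via Lemma \ref{claim:overlap}(c) to get $|\langle\psi'|H_i|\psi'\rangle-\langle\psi|H_i|\psi\rangle|\le 4\sqrt{2\eta}$, and using the frustration bound $0\le\lambda(H^{\le t})-\lambda(A)-\lambda(B)\le\|H_i\|\le 1$, one arrives at
\begin{equation}
\langle\psi'|H^{\le t}|\psi'\rangle-\lambda(H^{\le t})\le\frac{a\varepsilon}{1-\varepsilon}+4\sqrt{2\eta}+\eta\epsilon',
\end{equation}
where $a:=\langle\psi|A+B|\psi\rangle-\lambda(A)-\lambda(B)\le\eta\epsilon'+1$. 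Under $\varepsilon\le 8\eta$ and $\eta\le 1/10$, the right-hand side is $O(\sqrt\eta)$, and the advertised constant $25$ absorbs the numerical factors.

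The main obstacle is ensuring that the multiplicative factor $(1-\varepsilon)^{-1}$ in front of the block contributions is absorbed into an additive $O(\sqrt\eta)$ error rather than into a polynomial-in-$1/\delta$ distortion. This is possible only because the block energies above their local ground states are themselves $O(1)$ by the cheap frustration bound, which in turn relies on only the nearest-neighbor term $H_i$ straddling the cut. Without the strict left--right block structure ensured by $s>0$, the caps $H_L^{\le t}$ and $H_R^{\le t}$ could straddle the cut and this approach would fail.
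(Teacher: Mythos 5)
The paper does not prove Lemma \ref{l:2} at all; it is imported from \cite{DMRG}, so there is no internal proof to compare against. Your argument is correct and is essentially the standard one from \cite{LVV13, DMRG}: the chain Lemma \ref{claim:overlap}(a) $\rightarrow$ Lemma \ref{constantbondapprox} $\rightarrow$ Lemma \ref{claim:overlap}(b) $\rightarrow$ Lemma \ref{lem:eckart} gives $\langle\psi'|\psi\rangle\ge1-4\eta$, and splitting $H^{\le t}=A+H_i+B$ with only the single term $H_i$ crossing the cut is exactly how one avoids paying the $O(1/\delta)$ norm of $H^{\le t}$. The one point worth flagging is that your residual terms $\eta\epsilon'$ and $a\le1+\eta\epsilon'$ give the advertised constant $25$ only under the implicit normalization $\epsilon'=O(1)$; but that normalization is already forced by the lemma statement itself (take $|\psi\rangle$ of Schmidt rank at most $B_\eta$, so $|\psi'\rangle=|\psi\rangle$ and the conclusion demands $\eta\epsilon'\le25\sqrt{\eta}$), so this is a presentational caveat rather than a gap.
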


\subsection{Algorithm and analysis}

In the $(k-i_l)$th iteration the algorithm constructs a $(k,p_1p_3,p_2p_3,\Delta_e=c\epsilon'^{14})$-support set $S_k$ from a $(k-1,p_1p_3,p_2p_3,\Delta_e=c\epsilon'^{14})$-support set $S_{k-1}$ returned in the previous iteration. \textbf{Extension} is trivial (see \cite{LVV13}), and constructs a $(k,dp_1p_3,p_2p_3,\Delta_e=c\epsilon'^{14})$-support set $S_k^{(1)}$.

\subsubsection{Cardinality reduction}

Let $\tr_{[i,j]}\rho$ denote the partial trace over $\mathcal H_{[i,j]}$ of a density matrix $\rho$ on $\mathcal H$.

\begin{definition} [boundary contraction]
Let $|\psi\rangle=\sum_{j=1}^B\lambda_j^{[i_l]}|l_j^{[i_l]}\rangle|r_j^{[i_l]}\rangle=\sum_{j=1}^B\lambda_j^{[k+1]}|l_j^{[k+1]}\rangle|r_j^{[k+1]}\rangle$ be the Schmidt decompositions of a state $|\psi\rangle\in\mathcal H$ across the cuts $i_l-1|i_l$ and $k|k+1$, respectively. Let $\{|j\rangle\}_{j=1}^B$ be the computational basis of $\mathbf C^B$. Let $U_R(\psi):\mathbf C^B\rightarrow\mathcal{H}_{[k+1,n]}$ be the isometry specified by $U_R(\psi)|j\rangle=|r_j^{[k+1]}\rangle$ such that $U^{-1}_R(\psi)|\psi\rangle=\sum_{j=1}^B\lambda_j^{[k+1]}|l_j^{[k+1]}\rangle|j\rangle\in\mathcal H_{[1,k]}\otimes\mathbf C^B$. The right boundary contraction $C_R(\psi)$ is a density matrix on $\mathcal H_{[k,k]}\otimes\mathbf C^B$:
\begin{equation}
C_R(\psi):=U_R^{-1}(\psi)\tr_{[1,k-1] }(|\psi\rangle\langle\psi|)U_R(\psi).
\end{equation}
Similarly, let $U_L(\psi):\mathbf C^B\rightarrow\mathcal{H}_{[1,i_l-1]}$ be the isometry specified by $U_L(\psi)|j\rangle=|l_j^{[i_l]}\rangle$ such that $U_L^{-1}(\psi)|\psi\rangle=\sum_{j=1}^B\lambda_j^{[i_l]}|j\rangle|r_j^{[i_l]}\rangle\in\mathbf C^B\otimes\mathcal H_{[i_l,n]}$. The left boundary contraction $C_L(\psi)$ is a density matrix on $\mathbf C^B\otimes\mathcal H_{[i_l,i_l]}$:
\begin{equation}
C_L(\psi):=U_L^{-1}(\psi)\tr_{[i_l+1,n] }(|\psi\rangle\langle\psi|)U_L(\psi).
\end{equation}
\end{definition}

Let $H_L:=H_L^{\le t}+\sum_{j=i_l-s}^{i_l-2}H_j,H_M:=\sum_{j=i_l}^{k-1}H_j,H_R:=\sum_{j=k+1}^{i_r+s}H_j+H_R^{\le t}$. Define $H'_L=H_L-\lambda(H_L),H'_M:=H_M-\lambda(H_M),H'_R:=H_R-\lambda(H_R)$ so that $\lambda(H'_L)=\lambda(H'_M)=\lambda(H'_R)=0$.

\begin{lemma} \label{l:gluing}
Let $\rho$ be a density matrix on $\mathbf C^B\otimes\mathcal H_{[i_l,k]}\otimes\mathbf C^B$, and $|\psi\rangle=\sum_{j=1}^B\lambda_j^{[i_l]}|l_j^{[i_l]}\rangle|r_j^{[i_l]}\rangle=\sum_{j=1}^B\lambda_j^{[k+1]}|l_j^{[k+1]}\rangle|r_j^{[k+1]}\rangle$ be the Schmidt decompositions of a state $|\psi\rangle\in\mathcal H$ across the cuts $i_l-1|i_l$ and $k|k+1$, respectively. The density matrix $\rho':=U_L(\psi)U_R(\psi)\rho U_R^{-1}(\psi)U_L^{-1}(\psi)$ on $\mathcal H$ has energy
\begin{eqnarray}
&&\tr(\rho' H^{\le t})\le\tr(\rho H_M)+\langle\psi|(H_L+H_{i_l-1}+H_k+H_R)|\psi\rangle\nonumber\\
&&+\|\tr_{[i_l+1,k]\otimes B}\rho-C_L(|\psi\rangle)\|_1\left(1+\max_{|l\rangle\in\Span\left\{\left|l_j^{[i_l]}\right\rangle\right\}}\|H'_L|l\rangle\|\right)\nonumber\\
&&+\|\tr_{B\otimes[i_l,k-1]}\rho-C_R(|\psi\rangle)\|_1\left(1+\max_{|r\rangle\in\Span\left\{\left|r_j^{[k+1]}\right\rangle\right\}}\|H'_R|r\rangle\|\right),
\end{eqnarray}
where $\tr_{[i_l+1,k]\otimes B}\rho$ denotes the partial trace over $\mathcal H_{[i_l+1,k]}\otimes\mathbf C^B$, and similarly for $\tr_{B\otimes[i_l,k-1]}\rho$.
\end{lemma}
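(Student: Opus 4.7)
The plan is to decompose $H^{\le t}=H_L+H_{i_l-1}+H_M+H_k+H_R$ (which follows from the definitions preceding the lemma) and to analyze $\tr(\rho'H^{\le t})$ term by term. The middle piece is essentially free: because the isometries $U_L(\psi)\otimes I_{\mathcal H_{[i_l,k]}}\otimes U_R(\psi)$ defining $\rho'$ act trivially on $\mathcal H_{[i_l,k]}$ and $U_L,U_R$ are isometries, tracing out the two $\mathbf C^B$ factors yields $\tr(\rho'H_M)=\tr(\rho H_M)$ directly. The two boundary contributions $H_L+H_{i_l-1}$ and $H_k+H_R$ are symmetric, so I describe only the left one.

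The key observation is that the reduced density matrices of $\rho'$ and of $|\psi\rangle\langle\psi|$ on $\mathcal H_{[1,i_l]}$ are images, under the \emph{same} map $(U_L\otimes I_{[i_l,i_l]})(\,\cdot\,)(U_L^{-1}\otimes I_{[i_l,i_l]})$, of two density matrices on $\mathbf C^B\otimes\mathcal H_{[i_l,i_l]}$. For $\rho'$ this matrix is $\sigma_L:=\tr_{[i_l+1,k]\otimes B}\rho$ (the $U_R$ factor drops out after tracing over $\mathcal H_{[k+1,n]}$, because $U_R$ is an isometry), and for $|\psi\rangle\langle\psi|$ it is $C_L(\psi)$ by definition. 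Since $H_L+H_{i_l-1}$ is supported on $\mathcal H_{[1,i_l-1]}\otimes\mathcal H_{[i_l,i_l]}$, cyclicity of the trace yields
\begin{equation*}
\tr\!\bigl(\rho'(H_L+H_{i_l-1})\bigr)-\langle\psi|(H_L+H_{i_l-1})|\psi\rangle=\tr\!\bigl((\sigma_L-C_L(\psi))\,\widetilde H_L\bigr),
\end{equation*}
where $\widetilde H_L:=(U_L^{-1}\otimes I)(H_L+H_{i_l-1})(U_L\otimes I)$ is an operator on $\mathbf C^B\otimes\mathcal H_{[i_l,i_l]}$.

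Two moves finish the left-boundary bound. First, because $\tr(\sigma_L-C_L(\psi))=0$ and $U_L^{-1}U_L=I_{\mathbf C^B}$, adding a multiple of the identity to $H_L$ inside $\widetilde H_L$ does not change the expression, so I replace $H_L$ by $H_L'=H_L-\lambda(H_L)I$; without this shift the operator-norm estimate below would pick up $\|H_L\|=O(s)$, which is too large to drive the cardinality-reduction induction in the next subsection. Second, H\"older together with the triangle inequality gives
\begin{equation*}
\bigl|\tr((\sigma_L-C_L(\psi))\widetilde H_L)\bigr|\le\|\sigma_L-C_L(\psi)\|_1\bigl(1+\|(U_L^{-1}\otimes I)H_L'(U_L\otimes I)\|\bigr),
\end{equation*}
where the constant $1$ absorbs $\|H_{i_l-1}\|\le1$. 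Because $U_L$ is an isometry whose image is $\Span\{|l_j^{[i_l]}\rangle\}$ and $H_L'$ acts trivially on $\mathcal H_{[i_l,i_l]}$, the remaining operator norm is bounded by $\max_{|l\rangle\in\Span\{|l_j^{[i_l]}\rangle\},\,\|l\|=1}\|H_L'|l\rangle\|$.

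Running the same argument on the right with $U_R$, $C_R(\psi)$, $H_R'=H_R-\lambda(H_R)I$, and $\|H_k\|\le1$, and using $a-b\le|a-b|$ to pass from the absolute-value bound to the one-sided inequality stated in the lemma, completes the proof. I expect the delicate step to be the shift $H_L\mapsto H_L'$ (and similarly for $H_R$): it converts an a priori $O(s)$ operator-norm factor into the subspace-restricted quantity $\max_{|l\rangle}\|H_L'|l\rangle\|$, which is the only form small enough to make the resulting bound useful.
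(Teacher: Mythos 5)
Your proof is correct and follows essentially the same route as the paper's: split $H^{\le t}$ into $H_M$ plus the two boundary pieces, use the isometry property to equate the $H_M$ terms, shift $H_L,H_R$ to $H'_L,H'_R$ at no cost because the difference of the two density matrices is traceless, reduce to the boundary contractions via the $U_{L/R}$ conjugation, and finish with H\"older plus the restriction of the operator norm to $\Span\{|l_j^{[i_l]}\rangle\}$ (resp.\ $\Span\{|r_j^{[k+1]}\rangle\}$). The paper writes out the right boundary and says ``similarly'' for the left, whereas you do the reverse, but the argument is the same.
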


\begin{proof}
Write $\tr(\rho'H^{\le t})=\tr(\rho'H_M)+\tr(\rho'(H_L+H_{i_l-1}))+\tr(\rho'(H_k+H_R))$. For the first term, $\tr(\rho' H_M)=\tr(\rho H_M)$ as $U_L(\psi),U_R(\psi)$ are isometries. For the third term,
\begin{eqnarray}
&&\tr(\rho'(H_k+H_R))-\langle\psi|(H_k+H_R)|\psi\rangle=\tr[(\rho'-|\psi\rangle\langle\psi|)(H_k+H_R)]\nonumber\\
&&=\tr[(\rho'-|\psi\rangle\langle\psi|)(H_k+H'_R)]=\tr[\tr_{[1,k-1]}(\rho'-|\psi\rangle\langle\psi|)(H_k+H'_R)]\nonumber\\
&&=\tr[U_R^{-1}(\psi)\tr_{[1,k-1]}(\rho'-|\psi\rangle\langle\psi|)U_R(\psi)U_R^{-1}(\psi)(H_k+H'_R)U_R(\psi)]\nonumber\\
&&=\tr[(\tr_{B\otimes[i_l,k-1]}\rho-C_R(\psi))U_R^{-1}(\psi)(H_k+H'_R)U_R(\psi)]\nonumber\\
&&\le\|\tr_{B\otimes[i_l,k-1]}\rho-C_R(\psi)\|_1\|U_R^{-1}(\psi)H_kU_R(\psi)+U_R^{-1}(\psi)H'_RU_R(\psi)\|\nonumber\\
&&\le\|\tr_{B\otimes[i_l,k-1]}\rho-C_R(\psi)\|_1\left(1+\max_{|r\rangle\in\Span\left\{\left|R_j^{[k+1]}\right\rangle\right\}}\|H'_R|r\rangle\|\right).
\end{eqnarray}
Similarly, for the second term,
\begin{equation}
\tr(\rho'(H_L+H_{i_l-1}))-\langle\psi|(H_L+H_{i_l-1})|\psi\rangle\le\|\tr_{[i_l+1,k]\otimes B}\rho-C_L(\psi)\|_1\left(1+\max_{|l\rangle\in\Span\left\{\left|l_j^{[i_l]}\right\rangle\right\}}\|H'_L|l\rangle\|\right).
\end{equation}
\end{proof}

Let $N_l$ and $N_r$ be, respectively, $\xi$-nets with $\xi=\tilde\Omega(\epsilon)$ for the trace norm over the space of left and right boundary contractions of bond dimension $B_{10\sqrt c\epsilon'^6}=2^{\tilde O(1/\epsilon)}$ such that $|N_l|=|N_r|=(B/\xi)^{O(B)}=2^{2^{\tilde O(1/\epsilon)}}$. It is straightforward to construct $N_l$ and $N_r$ in time $\poly|N_l|=\poly|N_r|=2^{2^{\tilde O(1/\epsilon)}}$. 

\noindent
=======================================================\\
\textbf{Cardinality reduction} convex program and \textbf{bond truncation}\\
--------------------------------------------------------------------------------------------------------------------------------\\ 
0. Let the variable $\rho$ be a density matrix on $\mathbf C^{B_{10\sqrt c\epsilon'^6}}\otimes\Span S_i^{(1)}\otimes\mathbf C^{B_{10\sqrt c\epsilon'^6}}\subseteq\mathbf C^{B_{10\sqrt c\epsilon'^6}}\otimes\mathcal H_{[i_l,k]}\otimes\mathbf C^{B_{10\sqrt c\epsilon'^6}}$.\\
1. For each $X_l\in N_l$ and each $X_r\in N_r$, solve the convex program:
\begin{equation} \label{covpro}
\min~\tr(\rho H_M);~\mathrm{s.~t.}~\|\tr_{[i_l+1,k]\otimes B}\rho-X_l\|_1\le\xi,~\|\tr_{B\otimes[i_l,k-1]}\rho-X_r\|_1\le\xi,~\tr\rho=1,~\rho\ge0.
\end{equation} 
2. Let $|\varphi\rangle=\sum_{j_l,j_r}|j_l\rangle|\varphi_{j_l,j_r}\rangle|j_r\rangle$ be the eigenvector of the solution $\rho$ with the largest eigenvalue.\\
3. Let $|\varphi'\rangle=\sum_{j_l,j_r}|j_l\rangle|\varphi'_{j_l,j_r}\rangle|j_r\rangle$ be the state obtained by truncating each bond (in whatever order) of $|\varphi\rangle$ to $p_2$.\\
4. $S_k^{(3)}$ consists of the MPS representations of all $|\varphi'_{j_l,j_r}\rangle$.\\
=======================================================

Let $P_{t'}$ be the projection onto the subspace $(\mathcal H_{[1,i_l-1]}\otimes\mathcal H_{[k+1,n]})^{\le t'}$ spanned by the eigenvectors of $H'_L+H'_R$ with eigenvalues at most $t'$, and let $Q_{t'}$ be the projection onto the subspace spanned by the eigenvectors of $H'_L+H'_R+H'_M$ with eigenvalues at most $t'$.

\begin{lemma} [truncation lemma]
\begin{equation}
\|(1-P_{t'})|\Psi_0'\rangle\|\le\|(1-Q_{t'})|\Psi_0'\rangle\|\le100\cdot2^{-t'/20}.
\end{equation}
\end{lemma}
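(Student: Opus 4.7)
I split into the two inequalities.

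\emph{First inequality.} Because $H'_L+H'_R$ is supported on $\mathcal H_{[1,i_l-1]}\otimes\mathcal H_{[k+1,n]}$ while $H'_M$ is supported on the disjoint region $\mathcal H_{[i_l,k]}$, the operators $H'_L+H'_R$ and $H'_L+H'_R+H'_M$ commute and admit a simultaneous eigenbasis of product states $|e_a\rangle|f_b\rangle$, with respective eigenvalues $\alpha_a$ and $\alpha_a+\beta_b$, where $\beta_b\ge 0$ since $H'_M\ge 0$. Any joint eigenvector in the range of $Q_{t'}$ (i.e.\ $\alpha_a+\beta_b\le t'$) therefore also lies in the range of $P_{t'}$ (i.e.\ $\alpha_a\le t'$), so $P_{t'}\ge Q_{t'}$ as operators and the first inequality follows from $\langle\Psi_0'|(1-P_{t'})|\Psi_0'\rangle\le\langle\Psi_0'|(1-Q_{t'})|\Psi_0'\rangle$.

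\emph{Second inequality.} Set $A:=H'_L+H'_M+H'_R$ and $B:=H_{i_l-1}+H_k$, so that $H^{\le t}=A+B+E$ with $\lambda(A)=0$ and $0\le B\le 2$. Since $B\ge 0$,
\[
\langle\Psi_0'|A|\Psi_0'\rangle=\lambda(H^{\le t})-E-\langle\Psi_0'|B|\Psi_0'\rangle\le\lambda(H^{\le t})-E\le 2,
\]
so $|\Psi_0'\rangle$ has only $O(1)$ energy with respect to $A$. This alone yields merely a polynomial (Markov) bound; to boost to the stated exponential rate I would invoke the AKLV-style truncation argument of \cite{Hua14} (the one that drives the area-law proof there). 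Concretely, build a Chebyshev polynomial $p_\ell$ of degree $\ell=\Theta(\sqrt{t'})$, rescaled so that $p_\ell(A)^2$ approximates the ground-state projector of $A$ on its low-eigenvalue side and suppresses the $(A>t')$ subspace by $2^{-\Omega(t')}$. Then
\[
\|(1-Q_{t'})|\Psi_0'\rangle\|^2\le 2^{-\Omega(t')}\langle\Psi_0'|p_\ell(A)^2|\Psi_0'\rangle,
\]
and I would control the right-hand side by expanding $A=H^{\le t}-E-B$ inside $p_\ell(A)$, repeatedly using $(H^{\le t}-E)|\Psi_0'\rangle=(\lambda(H^{\le t})-E)|\Psi_0'\rangle$ with $|\lambda(H^{\le t})-E|\le 2$ to peel a scalar off the ket, and resumming the commutator corrections generated by the bounded perturbation $B$ ($\|B\|\le 2$). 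Optimising $\ell$ against $t'$ then produces the stated constant $100\cdot 2^{-t'/20}$.

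\emph{Main obstacle.} The first inequality is a one-line commutativity observation. The work is in the second: while the overall AGSP/Chebyshev strategy is well-trodden (Lemma~\ref{lem4} uses a similar device), extracting the specific exponent $1/20$ requires careful bookkeeping of the many non-commuting $(H^{\le t}-E)/B$ strings produced by the polynomial expansion, and tuning the Chebyshev degree and rescaling against $t'$ to avoid loss.
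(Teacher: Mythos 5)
Your treatment of the first inequality is exactly the paper's: $[H'_L+H'_R,H'_M]=0$ together with $H'_M\ge0$ gives $P_{t'}\ge Q_{t'}$, hence $\|(1-P_{t'})|\Psi_0'\rangle\|\le\|(1-Q_{t'})|\Psi_0'\rangle\|$. Nothing to add there. For the second inequality, note first that the paper does not prove it: it is imported verbatim from \cite{AKLV13} (it is the ``truncation lemma'' of Arad--Kitaev--Landau--Vazirani), so the only content of the paper's proof is the citation.

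Your attempt to prove the second inequality from scratch has a genuine gap: everything after ``I would invoke'' is a plan rather than an argument, and the plan has two concrete problems. (i) The degree count is off. A Chebyshev polynomial rescaled to the spectrum of $A:=H'_L+H'_M+H'_R$ separates the $\{A>t'\}$ subspace from the low-energy part by a factor $\exp(\Theta(\ell\sqrt{t'/\|A\|}))$, and here $\|A\|=O(1/\delta+s+t)$, which is much larger than $t'=O(\log(1/\epsilon))$; so $\ell=\Theta(\sqrt{t'})$ does not give the required $2^{\Omega(t')}$ separation, and with the larger degree $\ell=\Theta(\sqrt{t'\|A\|})$ the moment bound $\langle\Psi_0'|p_\ell(A)^2|\Psi_0'\rangle=O(1)$ becomes the entire difficulty. (ii) That moment bound --- ``peel a scalar off the ket and resum the commutator corrections'' --- is precisely where the exponent $1/20$ must come from, and it is not carried out; the constant $100\cdot2^{-t'/20}$ is asserted, not derived. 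For comparison, the argument in \cite{AKLV13} is more elementary and avoids polynomials entirely: $B:=H_{i_l-1}+H_k$ is a sum of two norm-$\le1$ terms, each of which commutes with all of $A$ except an $O(1)$-norm piece, so $B$ maps the spectral subspace $\{A\le s\}$ into $\{A\le s+O(1)\}$; inserting this into the eigenvalue equation $(A+B)|\Psi_0'\rangle=(\lambda(H^{\le t})-E)|\Psi_0'\rangle$ and projecting onto $\{A\ge s\}$ yields a recursion of the form $s\,g(s)\le2g(s)+2g(s-O(1))$ for $g(s):=\|P^A_{\ge s}|\Psi_0'\rangle\|$, whose iteration gives the stated exponential decay with explicit constants. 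To make your proof complete you should either reproduce that recursion or simply cite \cite{AKLV13}, as the paper does.
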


\begin{proof}
The first inequality is obvious: $P_{t'}\ge Q_{t'}$ as $[H'_L+H'_R,H'_M]=0$ and $H'_M\ge0$. The second inequality was proved in \cite{AKLV13}.
\end{proof}

Let $t'=O(\log(1/\epsilon))$ such that $100\cdot2^{-t'/20}\le c\epsilon'^{14}$.

\begin{lemma}
There exists a state $|\psi\rangle\in\mathcal H_{[1,i_l-1]}^{\le t'}\otimes\Span S_k^{(1)}\otimes\mathcal H_{[k+1,n]}^{\le t'}$ such that (i) its Schmidt ranks across the cuts $i_l-1|i_l$ and $k|k+1$ are upper bounded by $B_{10\sqrt c\epsilon'^6}$; (ii) $\langle\psi|H^{\le t}|\psi\rangle\le\lambda(H^{\le t})+250c^{1/8}\epsilon'$.
\end{lemma}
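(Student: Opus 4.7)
Let $|\phi\rangle$ be the witness for $S_k^{(1)}$, satisfying $\langle\phi|H^{\le t}|\phi\rangle\le\lambda(H^{\le t})+c\epsilon'^{15}$ with reduced density matrix on $\mathcal H_{[i_l,k]}$ supported on $\Span S_k^{(1)}$; Lemma~\ref{claim:overlap}(a) then gives $|\langle\phi|\Psi_0'\rangle|^2\ge 1-c\epsilon'^{14}$. The plan is to construct $|\psi\rangle$ from $|\phi\rangle$ in three stages: Schmidt truncation at the cut $i_l-1|i_l$, Schmidt truncation at the cut $k|k+1$, and projection onto the low-energy $L$- and $R$-subspaces.

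For the first truncation I would apply Lemma~\ref{l:2} with $\eta_0=10\sqrt c\,\epsilon'^6$ (admissible because the starting excess $c\epsilon'^{15}$ is well below $\eta_0\epsilon'$), realized as a rank-$B_{10\sqrt c\epsilon'^6}$ spectral projection on $[1,i_l-1]$. For the second truncation I would apply Lemma~\ref{l:2} at the cut $k|k+1$ with a correspondingly tuned $\eta_1$, realized as a projection on $[k+1,n]$. Because the two truncation projections act on disjoint regions they commute, and because neither touches $[i_l,k]$ the middle reduced density matrix remains supported on $\Span S_k^{(1)}$. The two applications of Lemma~\ref{l:2} cap the Schmidt rank at each cut by $B_{10\sqrt c\epsilon'^6}$ (slack absorbed into the $\tilde O$) and bring the excess energy to $O(c^{1/8}\epsilon')$, comfortably within the target budget.

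Finally I would set $|\psi\rangle=\Pi_\mathrm{LE}|\phi_2\rangle/\|\Pi_\mathrm{LE}|\phi_2\rangle\|$ with $\Pi_\mathrm{LE}:=P_L^{\le t'}\otimes I\otimes P_R^{\le t'}$ the tensor-factorwise projection onto $\mathcal H_{[1,i_l-1]}^{\le t'}\otimes\mathcal H_{[k+1,n]}^{\le t'}$. Since $\Pi_\mathrm{LE}$ acts outside $[i_l,k]$ the middle support is preserved, and since each of $P_L^{\le t'},P_R^{\le t'}$ is one-sided at its cut and applied on the same side as the Stage-1/2 truncation, the Schmidt rank bounds persist. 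Thus $|\psi\rangle\in\mathcal H_{[1,i_l-1]}^{\le t'}\otimes\Span S_k^{(1)}\otimes\mathcal H_{[k+1,n]}^{\le t'}$ with the required bond dimensions, so requirement (i) follows.

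The main obstacle is requirement (ii), bounding the energy of $|\psi\rangle$. Because $P_{t'}$-image lies in $\Pi_\mathrm{LE}$-image, the truncation lemma together with Lemma~\ref{claim:overlap}(a) give $\|(1-\Pi_\mathrm{LE})|\phi_2\rangle\|=O(c^{1/16})$; however the crude estimate $|\langle\psi|H^{\le t}|\psi\rangle-\langle\phi_2|H^{\le t}|\phi_2\rangle|\le 2\|H^{\le t}\|\cdot\|(1-\Pi_\mathrm{LE})|\phi_2\rangle\|$ costs a factor $\|H^{\le t}\|=O(s+t)$ that is far too large. My plan is a Cauchy--Schwarz argument against the operator inequality $H^{\le t}-\lambda(H^{\le t})\ge 0$: writing $|\phi_2\rangle=\alpha|\psi\rangle+\beta|\psi^\perp\rangle$ produces $h_{\phi_2}=\alpha^2 h_\psi+\beta^2 h_\perp+2\alpha\beta\,\mathrm{Re}\langle\psi|(H^{\le t}-\lambda)|\psi^\perp\rangle$ with the cross term bounded in magnitude by $2\alpha\beta\sqrt{h_\psi h_\perp}$, so completing the square yields $h_\psi\le O(h_{\phi_2}/\alpha^2+\beta^2 h_\perp/\alpha^2)$. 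The nontrivial remaining task is to bound $h_\perp$ by $O(1/\delta)$ rather than $O(s+t)$, which I would attempt by decomposing $(1-\Pi_\mathrm{LE})|\phi_2\rangle=(1-\Pi_\mathrm{LE})|\Psi_0'\rangle+(1-\Pi_\mathrm{LE})(|\phi_2\rangle-|\Psi_0'\rangle)$, observing that the first summand has norm $\le c\epsilon'^{14}$, and using a gap argument (the ratio $h_{\phi_2-\Psi_0'}/\||\phi_2-\Psi_0'\rangle\|^2$ is $O(\epsilon')$ because $|\phi_2\rangle-|\Psi_0'\rangle$ is supported on excited states) to conclude that the dominant contribution to $h_\perp$ comes from a state of locally bounded excitation energy. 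Carrying this through should close the loop at $h_\psi\le 250c^{1/8}\epsilon'$.
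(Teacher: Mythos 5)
Your three building blocks (the witness, Schmidt truncation via Lemma~\ref{l:2} at the two cuts, and projection onto the low-energy left/right subspaces) are the same as the paper's, but you apply them in the opposite order, and that order is not incidental: the paper projects with $P_{t'}$ \emph{first}, while the witness $|\phi\rangle$ still has overlap $1-c\epsilon'^{14}$ with $|\Psi_0'\rangle$, and only then truncates the two bonds. The reason is how the energy cost of the projection is controlled. The paper splits $H^{\le t}$ (up to ground-state shifts) into $H'_L+H'_M+H'_R$, which \emph{commutes} with $P_{t'}$, plus the two boundary terms $H_{i_l-1}+H_k$ of norm at most $2$. For the commuting part the energy decomposes exactly, $\langle\phi|(H'_L+H'_M+H'_R)|\phi\rangle=\langle\phi|P_{t'}(\cdots)P_{t'}|\phi\rangle+\langle\phi|(1-P_{t'})(\cdots)(1-P_{t'})|\phi\rangle$, with no cross terms, and the second summand is at least $t'\|(1-P_{t'})|\phi\rangle\|^2$ with $t'$ much larger than the total energy, which forces the normalized energy of $P_{t'}|\phi\rangle$ to be no larger than that of $|\phi\rangle$. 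The boundary terms are handled by Lemma~\ref{claim:overlap}(c), which costs $O(\sqrt{c\epsilon'^{14}})=O(\sqrt c\,\epsilon'^{7})$ \emph{because the overlap with $|\Psi_0'\rangle$ is still $1-O(c\epsilon'^{14})$ at that stage}. Your proposal does not use the commutation structure at all, and this is where it breaks.

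Concretely, two steps of your energy analysis fail. First, your ``gap argument'' is backwards: for a vector supported on excited states the Rayleigh quotient $h_X/\|X\|^2$ is bounded \emph{below} by $\epsilon'$, not above by $O(\epsilon')$; it can be as large as $\|H^{\le t}\|=O(1/\delta)$, and the weight $\beta^2=\|(1-\Pi_{\mathrm{LE}})|\phi_2\rangle\|^2$ is only $O(c^{1/8})$ after the two truncations, which is nowhere near enough to cancel a $1/\delta$. (Relatedly, the term $\|(1-\Pi_{\mathrm{LE}})|\Psi_0'\rangle\|^2\|H^{\le t}\|\le(c\epsilon'^{14})^2O(1/\delta)$ is not small, since $\delta$ may be arbitrarily small compared to any power of $\epsilon$ with $t'$ fixed at $O(\log(1/\epsilon))$.) Second, even the repaired version of your idea (using $\mu^2 h_X=h_{\phi_2}$ to control the excited part) cannot recover the factor of $\epsilon'$ in the target: once the state is only $O(c^{1/16})$-close to $|\Psi_0'\rangle$ in norm, the projection generically shifts the energy by an amount of order $c^{1/8}$ with no accompanying $\epsilon'$, whereas the lemma requires $250c^{1/8}\epsilon'$ (and the downstream analysis in (\ref{eq:leading-1}) genuinely needs the excess to be $o(\epsilon')$). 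So the missing idea is precisely the paper's: perform the low-energy projection while the infidelity is still $\poly(\epsilon')$-small, exploit $[P_{t'},H'_L+H'_M+H'_R]=0$ plus the $t'$ energy floor on the complement to show the projection does not raise the commuting part of the energy, and charge only the $O(1)$-norm boundary terms to the overlap lemma; the two applications of Lemma~\ref{l:2} then come last and give exactly the $10\sqrt c\,\epsilon'^{7}\to100c^{1/4}\epsilon'^{3}\to250c^{1/8}\epsilon'$ cascade.
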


\begin{proof}
Let $|\phi\rangle$ be a witness for $S_k^{(1)}$. Since $S_k^{(1)}$ is a $(k,dp_1p_3,p_2p_3,\Delta_e=c\epsilon'^{14})$-support set. Lemma \ref{claim:overlap}(a) implies $|\langle\phi|\Psi_0'\rangle|\ge1-c\epsilon'^{14}$. Lemma \ref{claim:overlap}(c) implies
\begin{eqnarray}
&&\langle\phi|(H_{i_l-1}+H_k)|\phi\rangle\ge\langle\Psi_0'|(H_{i_l-1}+H_k)|\Psi_0'\rangle-4\sqrt{2c}\epsilon'^7\nonumber\\
&&\Rightarrow\langle\phi|(H'_L+H'_M+H'_R)|\phi\rangle\le\langle\Psi_0'|(H'_L+H'_M+H'_R)|\Psi_0'\rangle+4\sqrt{2c}\epsilon'^7+c\epsilon'^{14}.
\end{eqnarray}
Let $|\phi'\rangle=P_{t'}|\phi\rangle/\|P_{t'}|\phi\rangle\|$ so that $|\phi'\rangle\in(\mathcal H_{[1,i_l-1]}\otimes\mathcal H_{[k+1,n]})^{\le t'}\otimes\Span S_k^{(1)}\subseteq\mathcal H_{[1,i_l-1]}^{\le t'}\otimes\Span S_k^{(1)}\otimes\mathcal H_{[k+1,n]}^{\le t'}$ by construction. 
\begin{eqnarray}
&&|\langle\phi'|\Psi_0'\rangle|\ge|\langle\phi|P_{t'}|\Psi_0'\rangle|\ge|\langle\phi|\Psi_0'\rangle|-|\langle\phi|(1-P_{t'})|\Psi_0'\rangle|\ge1-c\epsilon'^{14}-\|(1-P_{t'})|\Psi_0'\rangle\|\nonumber\\
&&\ge1-c\epsilon'^{14}-100\cdot2^{-t'/20}\ge1-2c\epsilon'^{14}\nonumber\\
&&\Rightarrow\langle\phi'|(H_{i_l-1}+H_k)|\phi'\rangle\le\langle\Psi_0'|(H_{i_l-1}+H_k)|\Psi_0'\rangle+4\sqrt{c}\epsilon'^7\nonumber\\
&&\langle\Psi_0'|(H'_L+H'_M+H'_R)|\Psi_0'\rangle+6\sqrt c\epsilon'^7\ge\langle\phi|(H'_L+H'_M+H'_R)|\phi\rangle\nonumber\\
&&=\langle\phi|P_{t'}(H'_L+H'_M+H'_R)P_{t'}|\phi\rangle+\langle\phi|(1-P_{t'})(H'_L+H'_M+H'_R)(1-P_{t'})|\phi\rangle\nonumber\\
&&\ge\langle\phi'|(H'_L+H'_M+H'_R)|\phi'\rangle\|P_{t'}|\phi\rangle\|^2+t'\|(1-P_{t'})|\phi\rangle\|^2
\end{eqnarray}
Clearly, $\lambda(H'_L+H_{i_l-1}+H'_M+H_k+H'_R)\le\|H_{i_l-1}+H_k\|\le2$. Hence, $\langle\Psi_0'|(H'_L+H'_M+H'_R)|\Psi_0'\rangle\le2$ as $H_{i_l-1},H_k\ge0$, and
\begin{eqnarray}
&&t'\gg\langle\Psi_0'|(H'_L+H'_R+H'_M)|\Psi_0'\rangle+6\sqrt c\epsilon'^7\nonumber\\
&&\Rightarrow\langle\phi'|(H'_L+H'_M+H'_R)|\phi'\rangle\le\langle\Psi_0'|(H'_L+H'_M+H'_R)|\Psi_0'\rangle+6\sqrt c\epsilon'^7\nonumber\\
&&\Rightarrow\langle\phi'|H^{\le t}|\phi'\rangle\le\lambda(H^{\le t})+6\sqrt c\epsilon'^7+4\sqrt c\epsilon'^7\le\lambda(H^{\le t})+10\sqrt c\epsilon'^7.
\end{eqnarray}
Lemma \ref{l:2} implies that the state $|\phi''\rangle:=\trunc_{B_{10\sqrt c\epsilon'^6}}|\phi'\rangle/\|\trunc_{B_{10\sqrt c\epsilon'^6}}|\phi'\rangle\|\in\mathcal H^{\le t'}_{[1,i_l-1]}\otimes\Span S_k^{(1)}\otimes\mathcal H^{\le t'}_{[k+1,n]}$ (the truncation is across the cut $i_l-1|i_l$) has energy $\langle\phi''|H^{\le t}|\phi''\rangle\le\lambda(H^{\le t})+100c^{1/4}\epsilon'^3$. Once again, the state $|\psi\rangle:=\trunc_{B_{100c^{1/4}\epsilon'^2}}|\phi''\rangle/\|\trunc_{B_{100c^{1/4}\epsilon'^2}}|\phi''\rangle\|\in\mathcal H^{\le t'}_{[1,i_l-1]}\otimes\Span S_k^{(1)}\otimes\mathcal H^{\le t'}_{[k+1,n]}$ (the truncation is across the cut $k|k+1$) has energy $\langle\psi|H^{\le t}|\psi\rangle\le\lambda(H^{\le t})+250c^{1/8}\epsilon'$.
\end{proof}

\begin{lemma}
$S_k^{(2)}$ is a $(k,p_1,dp_1p_2p_3^2,\Delta_e=1/1000)$-support set, where $S_k^{(2)}$ consists of the MPS representations of all $|\varphi_{j_l,j_r}\rangle$.
\end{lemma}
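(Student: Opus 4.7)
The plan is to use the previous lemma to supply a low-energy state $|\psi\rangle$, feed its boundary contractions into the convex program to obtain a feasible point with small objective, glue the program's solution back into $\mathcal H$ via the Schmidt-basis isometries of $|\psi\rangle$, and then invoke the energy gap to upgrade the resulting average-energy bound on the glued density matrix into a bound on its principal eigenvector, which will furnish the witness. Concretely, the previous lemma supplies $|\psi\rangle\in\mathcal H_{[1,i_l-1]}^{\le t'}\otimes\Span S_k^{(1)}\otimes\mathcal H_{[k+1,n]}^{\le t'}$ with Schmidt rank $\le B:=B_{10\sqrt c\epsilon'^6}$ across both relevant cuts and $\langle\psi|H^{\le t}|\psi\rangle\le\lambda(H^{\le t})+250c^{1/8}\epsilon'$. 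Set $\rho_\psi:=U_L^{-1}(\psi)U_R^{-1}(\psi)|\psi\rangle\langle\psi|U_R(\psi)U_L(\psi)$: this is a density matrix on $\mathbf C^B\otimes\Span S_k^{(1)}\otimes\mathbf C^B$ (the middle factor is correct because the reduced density matrix of $|\psi\rangle$ on $\mathcal H_{[i_l,k]}$ is supported on $\Span S_k^{(1)}$), with $\tr_{[i_l+1,k]\otimes B}\rho_\psi=C_L(\psi)$ and $\tr_{B\otimes[i_l,k-1]}\rho_\psi=C_R(\psi)$. Pick $X_l^*\in N_l, X_r^*\in N_r$ within trace distance $\xi$ of $C_L(\psi), C_R(\psi)$; then $\rho_\psi$ is feasible for the convex program at $(X_l^*,X_r^*)$, so its solution $\rho$ at this pair satisfies $\tr(\rho H_M)\le\langle\psi|H_M|\psi\rangle$. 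Interpreting the algorithm as accumulating the principal-eigenvector components across all net pairs into $S_k^{(2)}$ (so $|S_k^{(2)}|\le|N_l||N_r|B^2=2^{2^{\tilde O(1/\epsilon)}}=p_1$), it suffices to analyze this one pair.

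I would then apply Lemma \ref{l:gluing} to $\rho':=U_L(\psi)U_R(\psi)\rho U_R^{-1}(\psi)U_L^{-1}(\psi)$ on $\mathcal H$. The triangle inequality yields $\|\tr_{[i_l+1,k]\otimes B}\rho-C_L(\psi)\|_1\le 2\xi$ and similarly on the right, while the Schmidt vectors $|l_j^{[i_l]}\rangle$ of $|\psi\rangle$ lie in $\mathcal H_{[1,i_l-1]}^{\le t'}$, so $\|H'_L|l\rangle\|\le t'$ on their span and symmetrically for $H'_R$. The gluing lemma then gives
\begin{align*}
\tr(\rho'H^{\le t}) &\le \tr(\rho H_M)+\langle\psi|(H_L+H_{i_l-1}+H_k+H_R)|\psi\rangle+O(\xi t')\\
&\le \langle\psi|H^{\le t}|\psi\rangle+O(\xi t')\le\lambda(H^{\le t})+250c^{1/8}\epsilon'+O(\xi t'),
\end{align*}
using that $H_L+H_{i_l-1}+H_M+H_k+H_R=H^{\le t}$. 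With $t'=O(\log\epsilon^{-1})$, $\epsilon'\ge\epsilon/10$, and $\xi$ a free parameter that can be made as small as a polynomial in $\epsilon$ (only at the cost of enlarging the $\xi$-nets, which stay within the $p_1$ budget), fix $c$ and $\xi$ small enough that the right-hand side is at most $\lambda(H^{\le t})+(1/3000)\epsilon'$.

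The main hurdle, and the only non-bookkeeping step, is converting this average-energy bound into a bound on a single pure state. The gap does it: from $H^{\le t}-\lambda(H^{\le t})I\ge\epsilon'(I-|\Psi_0'\rangle\langle\Psi_0'|)$ and $\tr((H^{\le t}-\lambda)\rho')\le(1/3000)\epsilon'$ we get $\langle\Psi_0'|\rho'|\Psi_0'\rangle\ge 1-1/3000$, and the variational characterization $p_1=\max_{|\phi\rangle}\langle\phi|\rho'|\phi\rangle$ gives the top eigenvalue $p_1\ge 1-1/3000$. Writing $\rho'=p_1|\tilde\psi_1\rangle\langle\tilde\psi_1|+(1-p_1)\sigma$ with $\sigma$ a density matrix (so $\tr(\sigma H^{\le t})\ge\lambda$),
\begin{equation*}
\langle\tilde\psi_1|H^{\le t}|\tilde\psi_1\rangle-\lambda\le(1/3000)\epsilon'/p_1\le(1/1000)\epsilon',
\end{equation*}
which is the claimed $\Delta_e=1/1000$. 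Since the eigenvectors of $\rho'$ are the images of those of $\rho$ under $U_L(\psi)U_R(\psi)$, the principal eigenvector takes the form $|\tilde\psi_1\rangle=\sum_{j_l,j_r}|l_{j_l}^{[i_l]}\rangle|\varphi_{j_l,j_r}\rangle|r_{j_r}^{[k+1]}\rangle$, whose reduced density matrix on $\mathcal H_{[i_l,k]}$ equals $\sum_{j_l,j_r}|\varphi_{j_l,j_r}\rangle\langle\varphi_{j_l,j_r}|$ and is supported on $\Span S_k^{(2)}$; this is the witness. Finally, each $|\varphi_{j_l,j_r}\rangle\in\Span S_k^{(1)}$ is a linear combination of $|S_k^{(1)}|=dp_1p_3$ MPS of bond dimension $p_2p_3$, so it is itself an MPS of bond dimension at most $dp_1p_2p_3^2$, completing the four support-set properties.
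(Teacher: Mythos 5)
Your proposal is correct and follows essentially the same route as the paper: exhibit $U_L^{-1}(\psi)U_R^{-1}(\psi)|\psi\rangle\langle\psi|U_R(\psi)U_L(\psi)$ as a feasible point for the net pair closest to the boundary contractions of the state $|\psi\rangle$ from the preceding lemma, glue the program's solution back via Lemma \ref{l:gluing} to bound $\tr(\rho'H^{\le t})$, and take the principal eigenvector as the witness. Your final step merely makes explicit, via the operator inequality $H^{\le t}-\lambda(H^{\le t})\ge\epsilon'(1-|\Psi_0'\rangle\langle\Psi_0'|)$, the paper's ``Markov's inequality'' argument that the low-energy eigenstate is unique and carries the largest eigenvalue (just beware that you overload the symbol $p_1$ for that eigenvalue, clashing with the cardinality parameter).
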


\begin{proof}
Since $N_l$ and $N_r$ are $\xi$-nets, there are elements $X_l\in N_l$ and $X_r\in N_r$ such that $\|-C_L(\psi)-X_l\|_1\le\xi$ and $\|C_R(\psi)-X_r\|_1\le\xi$, respectively. Clearly, $\tr(\rho H_M)\le\langle\psi|H_M|\psi\rangle$ as $U_L(\psi)U_R(\psi)|\psi\rangle\langle\psi|U_R^{-1}(\psi)U_L^{-1}(\psi)$ is a feasible solution to the convex program (\ref{covpro}). Let $\sigma=U_L(\psi)U_R(\psi)\rho U_R^{-1}(\psi)U_L^{-1}(\psi)$, and set $t=\tilde\Omega(\epsilon)$ such that $4\xi(1+t')\le\epsilon'/4000$. Lemma \ref{l:gluing} implies
\begin{eqnarray} \label{eq:leading-1}
&&\tr(\sigma H^{\le t})\le\tr(\rho H_M)+\langle\psi|(H_L+H_{i_l-1}+H_k+H_R)|\psi\rangle\nonumber\\
&&+\|\tr_{[i_l+1,k]\otimes B}\rho-C_L(\psi)\|_1\left(1+\max_{|l\rangle\in\Span\left\{\left|l_j^{[i_l]}\right\rangle\right\}}\|H'_L|l\rangle\|\right)\nonumber\\
&&+\|\tr_{B\otimes[i_l,k-1]}\rho-C_R(\psi)\|_1\left(1+\max_{|r\rangle\in\Span\left\{\left|r_j^{[k+1]}\right\rangle\right\}}\|H'_R|r\rangle\|\right)\nonumber\\
&&\le\langle\psi|(H_L+H_{i_l-1}+H_M+H_k+H_R)|\psi\rangle+2\xi\left(2+\max_{|l\rangle\in\mathcal H_{[1,i_l-1]}^{\le t'}}\|H'_L|l\rangle\|+\max_{|r\rangle\in\mathcal H_{[k+1,n]}^{\le t'}}\|H'_R|r\rangle\|\right)\nonumber\\
&&\le\langle\psi|H^{\le t}|\psi\rangle+4\xi(1+t')\le\lambda(H^{\le t})+250c^{1/8}\epsilon'+\epsilon'/4000\le\lambda(H^{\le t})+\epsilon'/2000
\end{eqnarray}
for sufficiently small constant $c$. We observe that\\
(1) there exists at least an eigenstate of $\sigma$ with energy (with respect to $H^{\le t}$) at most $\lambda(H^{\le t})+\epsilon'/1000$;\\
(2) there is at most one such eigenstate as Lemma \ref{claim:overlap}(a) implies that such an eigenstate is close to $|\Psi_0'\rangle$;\\
(3) this eigenstate (denoted by $|\Phi\rangle$) has the largest eigenvalue due to Markov's inequality in probability theory;\\
(4) $|\Phi\rangle=U_L(\psi)U_R(\psi)|\varphi\rangle\in\mathcal H_{[1,i_l-1]}^{\le t'}\otimes\Span S_k^{(1)}\otimes\mathcal H_{[k+1,n]}^{\le t'}$ is a witness for $S_k^{(2)}$ as a $(k,p_1,dp_1p_2p_3^2,\Delta_e=1/1000)$-support set with $p_1=B_{10\sqrt c\epsilon'^6}^2|N|=2^{2^{\tilde O(1/\epsilon)}}$.
\end{proof}

\subsubsection{Bond truncation}

\begin{lemma}
There exists a state $|\psi\rangle$ such that (i) $|\langle\psi|\Psi_0'\rangle|>1- 1/1000$; (ii) the Schmidt rank of $|\psi\rangle$ across the cut $j|j+1$ is upper bounded by $r=2^{\tilde O(\epsilon^{-1/4}\log^{3/4}(1/\delta))}$ for any $i_l\le j\le k-1$.
\end{lemma}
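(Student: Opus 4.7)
The plan is to construct $|\psi\rangle$ by successively projecting $|\Psi_0'\rangle$ onto the top-$r$ left Schmidt subspaces at each of the $k-i_l\le 1/\delta$ interior cuts. For a single cut $j|j+1$ with $i_l\le j\le k-1$, Lemma \ref{constantbondapprox} applied with parameter $\eta=c_0\delta^2$ produces a truncation rank $B_\eta=2^{\tilde O(1/\epsilon+\epsilon^{-1/4}\log^{3/4}(1/\delta))}$ such that the top-$B_\eta$ Schmidt truncation of $|\Psi_0'\rangle$ has overlap at least $1-\eta$ with $|\Psi_0'\rangle$; equivalently, the tail sum $\varepsilon_r^{(j)}:=\sum_{i\ge r+1}(\lambda^{(j)}_i)^2\le 2c_0\delta^2$ for $r:=B_\eta$. (The $2^{\tilde O(1/\epsilon)}$ factor is an $\epsilon$-dependent constant absorbed as discussed in the preamble of this section.) Let $P_j$ denote the projector onto the span of the top-$r$ left Schmidt vectors of $|\Psi_0'\rangle$ across cut $j|j+1$; then $P_j$ is supported on spins $[1,j]$ (acts as identity on $[j+1,n]$), its image on $\mathcal{H}_{[1,j]}$ is an $r$-dimensional subspace, and $\|(I-P_j)|\Psi_0'\rangle\|=\sqrt{\varepsilon_r^{(j)}}\le\sqrt{2c_0}\,\delta$.

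Set $|\psi'\rangle:=P_{i_l}P_{i_l+1}\cdots P_{k-1}|\Psi_0'\rangle$, with $P_{k-1}$ applied first and $P_{i_l}$ last, and $|\psi\rangle:=|\psi'\rangle/\|\psi'\|$. For (ii), I induct on the order of application. After $P_{k-1}$ acts, the state lies in the rank-$r$ image of $P_{k-1}$, so has Schmidt rank $\le r$ across cut $k-1|k$. For any subsequent $P_j$ with $j<k-1$, its support $[1,j]$ lies entirely on the left of every already-processed cut $j'|j'+1$ with $j'\ge j$, so $P_j$ takes the form $A\otimes I_{[j'+1,n]}$ relative to that cut and cannot increase the Schmidt rank there; simultaneously, $P_j$ restricts the state to a rank-$r$ subspace on the left of cut $j|j+1$. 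Inductively, $|\psi\rangle$ has Schmidt rank $\le r$ across every cut $j|j+1$ with $i_l\le j\le k-1$, proving (ii).

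For (i), the telescoping identity $I-P_{i_l}\cdots P_{k-1}=\sum_{j=i_l}^{k-1}P_{i_l}\cdots P_{j-1}(I-P_j)$ together with $\|P_j\|\le 1$ yields
\[
\||\Psi_0'\rangle-|\psi'\rangle\|\le\sum_{j=i_l}^{k-1}\|(I-P_j)|\Psi_0'\rangle\|\le(k-i_l)\max_j\sqrt{\varepsilon_r^{(j)}}\le(1/\delta)\sqrt{2c_0}\,\delta=\sqrt{2c_0}.
\]
Choosing $c_0$ small enough makes $\||\psi'\rangle-|\Psi_0'\rangle\|\le 1/2000$. Combining $\|\psi'\|\le 1$ with $\|\psi'\|\ge 1-1/2000$ (triangle inequality) and the identity $\||\psi'\rangle-|\Psi_0'\rangle\|^2=\|\psi'\|^2+1-2\mathrm{Re}\langle\psi'|\Psi_0'\rangle$ gives $\mathrm{Re}\langle\psi'|\Psi_0'\rangle\ge 1-O(1/2000)$, and dividing by $\|\psi'\|\le 1$ yields $|\langle\psi|\Psi_0'\rangle|>1-1/1000$, proving (i). The main subtlety is the ordering of the $P_j$: they do not commute in general, and applying them in the wrong order (e.g., leftmost-first) would allow later projectors to rotate the state out of previously established low-rank subspaces. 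The left-supported structure of each $P_j$ together with the rightmost-first ordering is exactly what makes the single-pass induction go through without any error amplification across iterations.
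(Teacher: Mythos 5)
Your proof is correct, but it takes a genuinely different route from the paper's. The paper expresses $|\Psi_0'\rangle$ as an MPS in the canonical form (\ref{canonical}), inserts the rank-$r$ bond projector $P$ at every bond between $i_l$ and $k$, and controls the overlap through the transfer map $E^{[i]}(X)=\sum_{j}A_j^{[i]\dag}XA_j^{[i]}$: its nonexpansiveness in trace norm gives $1-\langle\psi'|\Psi_0'\rangle\le(k-i_l)\varepsilon_r$, i.e.\ a loss of only $\varepsilon_r$ per cut, so $\varepsilon_r=\delta/1000$ suffices, and the same transfer-map computation yields $\|\psi'\|\le1$. You instead apply the physical projectors onto the top-$r$ left Schmidt subspaces of $|\Psi_0'\rangle$ sequentially (rightmost first) and telescope, which costs $\sqrt{\varepsilon_r}$ per cut and forces $\varepsilon_r=O(\delta^2)$; since the area-law bound (Lemma \ref{constantbondapprox}, equivalently Lemma \ref{lem4}) makes $r$ depend on $\varepsilon_r$ only through $\log^{3/4}(1/\varepsilon_r)$, this costs nothing asymptotically, and your absorption of the $2^{\tilde O(1/\epsilon)}$ factor matches the paper's stated conventions. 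Your version is more elementary --- no MPS machinery, only the Schmidt decomposition --- and it makes explicit both the rank bookkeeping and why the ordering of the non-commuting projectors is essential, points the paper's canonical-form calculation handles implicitly; the paper's version is quantitatively tighter in the per-cut error, which would matter if the truncation-error bound were polynomial rather than polylogarithmic in $1/\varepsilon_r$. Both arguments establish (i) and (ii) as stated.
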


\begin{proof}
We express $|\Psi_0'\rangle$ exactly as an MPS (\ref{MPS}) in the canonical form (\ref{canonical}) with bond dimension $\exp(O(n))$. Let $P=\diag\{1,1,\ldots,1,0,0,\ldots\}$ with $\tr P=r$. We define the unnormalized state $|\psi'\rangle$ as
\begin{equation}
|\psi'\rangle=\sum_{j_1,j_2,\ldots,j_n=1}^dA_{j_1}^{[1]}A_{j_2}^{[2]}\cdots A_{j_{i_l-1}}^{[i_l-1]}A_{j_{i_l}}^{[i_l]}PA_{j_{i_l+1}}^{[i_l+1]}P\cdots PA_{j_{k-1}}^{[k-1]}PA_{j_{k}}^{[k]}A_{j_{k+1}}^{[k+1]}\cdots A_{j_n}^{[n]}|j_1j_2\cdots j_n\rangle.
\end{equation}
Clearly, the Schmidt rank of $|\psi'\rangle$ across the cut $j|j+1$ is upper bounded by $r$ for any $i_l\le j\le k-1$. Let $E^{[i]}(X):=\sum_{j=1}^dA_j^{[i]\dag}XA_j^{[i]}$ be a completely positive trace-preserving (CPTP) linear map. Then,
\begin{equation}
\langle\psi'|\Psi_0'\rangle=\mathrm{tr}X_{k+1},~X_{j+1}:=E^{[j]}(X_jP),~j=i_l+1,i_l+2,\ldots,k,~X_{i_l+1}:=\Lambda^{[i_l+1]}.
\end{equation}
As a CPTP linear map is nonexpansive with respect to the trace norm,
\begin{eqnarray}
&&\|E^{[i]}(X)\|_1\le\|X\|_1\Rightarrow\|\Lambda^{[j+1]}-X_{j+1}\|_1\le\|\Lambda^{[j]}-X_{j}P\|_1\le\|\Lambda^{[j]}(1-P)\|_1+\|\Lambda^{[j]}-X_{j}\|_1\nonumber\\
&&\Rightarrow1-\langle\psi'|\Psi_0'\rangle=\tr(\Lambda^{[k+1]}-X_{k+1})\le\|\Lambda^{[k+1]}-X_{k+1}\|_1\le\varepsilon_r(k-i_l)\le\varepsilon_r/\delta\le1/1000,
\end{eqnarray}
where $\varepsilon_r=\delta/1000$ for $r=2^{\tilde O(\epsilon^{-1/4}\log^{3/4}\delta^{-1})}$. Similarly,
\begin{eqnarray}
&&\langle\psi'|\psi'\rangle=\mathrm{tr}X'_{k+1},~X'_{j+1}:=E^{[j]}(PX'_jP),~j=i_l+1,i_l+2,\ldots,k,~X'_{i_l+1}:=\Lambda^{[i_l+1]}\nonumber\\
&&\Rightarrow\langle\psi'|\psi'\rangle\le\|X'_{k+1}\|_1\le\cdots\le\|X'_{i_l+1}\|_1=1.
\end{eqnarray}
Hence $|\psi\rangle:=|\psi'\rangle/\||\psi\rangle\|$ satisfies $\langle\psi|\Psi_0'\rangle\ge999/1000$.
\end{proof}

\begin{lemma} 
$S_k^{(3)}$ is a $(k,p_1,p_2,\Delta_f=1/20)$-support set.
\end{lemma}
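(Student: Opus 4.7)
The plan is to start from the witness $|\Phi\rangle=U_L(\psi)U_R(\psi)|\varphi\rangle$ for $S_k^{(2)}$ produced in the cardinality-reduction analysis, and argue that after step 3 it becomes a witness for $S_k^{(3)}$ with fidelity at least $1-1/20$ with respect to $|\Psi_0'\rangle$. First I would combine the previous lemma (which asserts that $S_k^{(2)}$ is a $(k,p_1,dp_1p_2p_3^2,\Delta_e=1/1000)$-support set with witness $|\Phi\rangle$) with Lemma \ref{claim:overlap}(a) to obtain $|\langle\Phi|\Psi_0'\rangle|\ge 1-1/1000$. The immediately preceding lemma also supplies an auxiliary comparison state $|\chi\rangle$ with $|\langle\chi|\Psi_0'\rangle|\ge 999/1000$ whose Schmidt rank across every internal cut $j|j+1$, $i_l\le j\le k-1$, is at most $r=2^{\tilde O(\epsilon^{-1/4}\log^{3/4}(1/\delta))}$; Lemma \ref{claim:overlap}(b) then gives $|\langle\Phi|\chi\rangle|\ge 1-4/1000$.

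Next I would iterate Lemma \ref{l:trim2} over the $N=k-i_l\le 1/\delta$ internal cuts of $|\Phi\rangle$. At each such cut $|\chi\rangle$ has Schmidt rank at most $r$, so truncating the current (possibly unnormalized) state to bond dimension $p_2$ corresponds to the choice $\eta=r/p_2$ in Lemma \ref{l:trim2} and reduces the overlap with $|\chi\rangle$ by at most $r/p_2$. The boundary cuts $i_l-1|i_l$ and $k|k+1$ need no truncation because $|\Phi\rangle$ already has Schmidt rank at most $B_{10\sqrt c\epsilon'^6}\le p_2$ there. After all $N$ truncations the overlap with $|\chi\rangle$ is at least $1-4/1000-Nr/p_2$; the hidden constant in $p_2=O(r/\delta)$ can be chosen so that $Nr/p_2\le 1/50$. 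Renormalizing can only enlarge the magnitude of this inner product (the unnormalized truncation has norm at most $1$), and one more application of Lemma \ref{claim:overlap}(b) converts overlap with $|\chi\rangle$ into the desired bound $|\langle\Phi'|\Psi_0'\rangle|\ge 1-1/20$ for the normalized state $|\Phi'\rangle=U_L(\psi)U_R(\psi)|\varphi'\rangle$.

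Finally I would verify the other three support-set conditions. Because $U_L(\psi)$ and $U_R(\psi)$ are isometries into $\mathcal H_{[1,i_l-1]}$ and $\mathcal H_{[k+1,n]}$ respectively, the reduced density matrix of $|\Phi'\rangle$ on $\mathcal H_{[i_l,k]}$ is supported on $\Span\{|\varphi'_{j_l,j_r}\rangle\}=\Span S_k^{(3)}$; the cardinality is $|S_k^{(3)}|\le B_{10\sqrt c\epsilon'^6}^2=2^{2^{\tilde O(1/\epsilon)}}=p_1$; and by construction each $|\varphi'_{j_l,j_r}\rangle$ is an MPS of bond dimension at most $p_2$. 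The main obstacle I expect is the careful bookkeeping in the iterative application of Lemma \ref{l:trim2}, in particular checking that the lemma still yields the right bound when applied to the unnormalized intermediate states and that the concluding renormalization does not degrade the fidelity estimate. Once those points are settled the remaining steps are routine.
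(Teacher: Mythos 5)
Your proposal is correct and follows essentially the same route as the paper: it uses Lemma \ref{claim:overlap}(a) on the witness $|\Phi\rangle$ for $S_k^{(2)}$, introduces the low-Schmidt-rank comparison state from the preceding lemma (your $|\chi\rangle$, the paper's $|\psi\rangle$), iterates Lemma \ref{l:trim2} over the $\le 1/\delta$ internal cuts with per-cut loss $r/p_2$, and closes with Lemma \ref{claim:overlap}(b). The only differences are cosmetic choices of constants (the paper takes $p_2=1000r/\delta$, giving total truncation loss $1/1000$ and final overlap $\ge 49/50$), and your explicit remarks about the unnormalized intermediate states and the boundary cuts, which the paper leaves implicit.
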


\begin{proof} 
Since $|\Phi\rangle$ is a witness for $S_k^{(2)}$ with energy $|\langle\Phi|H^{\le t}|\Phi\rangle|\le\lambda(H^{\le t})+\epsilon'/1000$. Lemma \ref{claim:overlap}(a) implies $|\langle\Phi|\Psi_0'\rangle|\ge999/1000$. Lemma \ref{claim:overlap}(b) implies $|\langle\Phi|\psi\rangle|\ge249/250$. Let $\Phi'$ be the state obtained by truncating the each bond $j|j+1$ for $j=i_l,\dots,k-1$ of $\Psi$ to $p_2=1000r/\delta$. Then Lemma \ref{l:trim2} implies $|\langle\psi|\Phi'\rangle|\ge199/200$. Lemma \ref{claim:overlap}(b) implies $|\langle\Psi_0'|\Phi'\rangle|\ge49/50$. Hence $|\Phi'\rangle=U_L(\psi)U_R(\psi)|\varphi'\rangle$ is a witness for $S_k^{(3)}$ as a $(k,p_1,p_2,\Delta_f=1/20)$-support set.
\end{proof}

\subsubsection{Error reduction}

Let $\eta=c\epsilon'^{14}$. Assume for the moment that we have an estimate $\epsilon'_0$ of $\lambda(H^{\le t})$ in the sense that $\xi:=|\epsilon'_0-\lambda(H^{\le t})|\le\epsilon'/\sqrt q\le\epsilon'/2$, where $q=4\log(1/\eta)+24$. Let
\begin{equation}
A:=\exp\left(-\frac{q(H^{\le t}-\epsilon'_0)^2}{2\epsilon'^2}\right)=\frac{\epsilon'}{\sqrt{2\pi q}}\int_{-\infty}^{+\infty}\exp\left(-\frac{\epsilon'^2\tau^2}{2q}-i(H^{\le t}-\epsilon'_0)\tau\right)\mathrm{d}\tau.
\end{equation}
Recall that $|\Phi'\rangle$ is a witness for $S_k^{(3)}$ such that $|\langle\Phi_0'|\Phi'\rangle|\ge19/20$.

\begin{lemma} [\cite{DMRG}] \label{l1}
$\langle\phi|H^{\le t}|\phi\rangle\le\lambda(H^{\le t})+\eta\epsilon'/100$ for $|\phi\rangle=A|\Phi'\rangle/\|A|\Phi'\rangle\|$.
\end{lemma}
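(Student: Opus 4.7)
The plan is to view $A$ as a Gaussian spectral filter of width $\epsilon'/\sqrt q$ centred at a point $\epsilon_0'$ within $\xi\le\epsilon'/\sqrt q$ of $\lambda(H^{\le t})$. Since $\epsilon'/\sqrt q$ is much smaller than the spectral gap $\epsilon'$ of $H^{\le t}$, the filter $A$ should boost the ground-state component of $|\Phi'\rangle$ while exponentially suppressing every excited component, and the task is to verify that it does so by enough to overwhelm the modest initial overlap bound $|\langle\Phi'|\Psi_0'\rangle|\ge 19/20$.

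First I would diagonalize $H^{\le t}$ and decompose $|\Phi'\rangle=\alpha|\Psi_0'\rangle+\beta\sum_j c_j|j\rangle$, where $\{|j\rangle\}$ are the excited eigenstates with eigenvalues $E_j\ge\lambda(H^{\le t})+\epsilon'$, $\sum_j|c_j|^2=1$, and $|\alpha|^2\ge(19/20)^2$. Setting $a_j:=\exp(-q(E_j-\epsilon_0')^2/(2\epsilon'^2))$ and defining $a_0$ analogously at the ground-state eigenvalue, the spectral decomposition of $A$ expresses the energy of $|\phi\rangle$ as the Rayleigh quotient
\[
\langle\phi|H^{\le t}|\phi\rangle-\lambda(H^{\le t})=\frac{|\beta|^2\sum_j|c_j|^2 a_j^2\bigl(E_j-\lambda(H^{\le t})\bigr)}{|\alpha|^2 a_0^2+|\beta|^2\sum_j|c_j|^2 a_j^2}.
\]

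Next I would bound the numerator and denominator separately. The denominator is easy: $\xi\le\epsilon'/\sqrt q$ forces $a_0^2\ge e^{-1}$, so the denominator is at least $(19/20)^2 e^{-1}$. For the numerator, fix an excited eigenvalue and set $y:=E_j-\lambda(H^{\le t})\ge\epsilon'$; since $\xi\le\epsilon'/2\le y/2$, we have $|E_j-\epsilon_0'|\ge y-\xi\ge y/2$, hence $a_j^2\bigl(E_j-\lambda(H^{\le t})\bigr)\le y\exp(-qy^2/(4\epsilon'^2))$. On $y\ge\epsilon'$ this function is decreasing (its unconstrained maximum sits at $y=\epsilon'\sqrt{2/q}<\epsilon'$), so the whole numerator is at most $\epsilon'e^{-q/4}$.

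Combining the bounds yields $\langle\phi|H^{\le t}|\phi\rangle-\lambda(H^{\le t})\le (20/19)^2 e\cdot\epsilon' e^{-q/4}$, and substituting $q=4\log(1/\eta)+24$ gives $e^{-q/4}=e^{-6}\eta$, so the right-hand side is at most $e^{-5}(20/19)^2\eta\epsilon'<\eta\epsilon'/100$. The one subtlety I anticipate is precisely the monotonicity argument on the excited branch: one might initially fear that eigenvalues $E_j$ far above the gap contribute more than the edge $E_j=\lambda(H^{\le t})+\epsilon'$, but the Gaussian factor decays fast enough that the supremum is attained at the edge, which is what allows the proof to avoid any dependence on $\|H^{\le t}\|$ and makes the modest constant $24$ in the definition of $q$ sufficient to absorb the denominator factor $e^{-1}$, the overlap factor $(19/20)^{-2}$, and the numerical margin $1/100$ simultaneously.
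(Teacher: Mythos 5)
Your proof is correct: the Rayleigh-quotient decomposition, the lower bound $a_0^2\ge e^{-1}$ from $\xi\le\epsilon'/\sqrt q$, the bound $|E_j-\epsilon_0'|\ge y/2$ on the excited branch, and the monotonicity of $y\,e^{-qy^2/(4\epsilon'^2)}$ for $y\ge\epsilon'$ (since $q\ge24>2$) all check out, and the final constant $(20/19)^2e^{-5}<1/100$ closes the estimate. The paper itself gives no proof here -- it imports the lemma from \cite{DMRG} -- and the argument there is the same Gaussian spectral-filter computation, so you have supplied a correct, self-contained version of essentially the intended proof.
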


Let $K$ be an MPO of bond dimension $D$ across the cuts $j|j+1$ for $j=i_l-1,i_l,\ldots,k$ (its bond dimension across other cuts may be very large). It is straightforward to decompose $K$ as a sum of $D^2$ terms:
\begin{equation} \label{dec}
K=\sum_{j_l,j_r=1}^DK_{j_l}^L\otimes K_{j_l,j_r}^M\otimes K_{j_r}^R,
\end{equation}
where $K_{j_l}^L,K_{j_r}^R$ are MPO on $\mathcal H_{[1,i_l-1]},\mathcal H_{[k+1,n]}$, respectively, and $K_{j_l,j_r}^M$ is an MPO of bond dimension $D$ on $\mathcal H_{[i_l,k]}$.

\begin{lemma} \label{lem18}
There is an MPO $K_T$ such that (i) $\|K_T-\exp(-iH^{\le t}T)\|\le\xi'$; (ii) the bond dimension across the cuts $j|j+1$ for $i_l-1\le j\le k$ is upper bounded by $2^{O(T)}/\poly(\delta\xi')$; (iii) $K_{j_l,j_r}^M$'s in the decomposition (\ref{dec}) for $K_T$ can be constructed in time $2^{O(T)}/\poly(\delta\xi')$.
\end{lemma}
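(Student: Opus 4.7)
The plan is to approximate $e^{-iH^{\le t}T}$ by an MPO built from local time evolutions, and to bound its bond dimension across the relevant cuts via a Lieb-Robinson light-cone argument. Decompose $H^{\le t}=H_L^{\le t}+H_{\mathrm{mid}}+H_R^{\le t}$, where $H_{\mathrm{mid}}=\sum_{j=i_l-s}^{i_r+s}H_j$ is a sum of nearest-neighbor two-local terms on the middle block, and the two nonlocal boundary operators $H_L^{\le t},H_R^{\le t}$ are supported on the outer blocks and each overlaps with $H_{\mathrm{mid}}$ on only a single spin, so that Lieb-Robinson applies to $H^{\le t}$ with an effective velocity $v=O(1)$. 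First, I would approximate $e^{-iH^{\le t}T}$ by a high-order Suzuki-Trotter product formula $S_p(\tau)^{T/\tau}$ with $p,\tau$ chosen so that the total operator-norm Trotter error is at most $\xi'/2$. Each Trotter step is a product of brick-wall layers of nearest-neighbor two-site unitaries on the middle block, multiplied by two boundary exponentials $e^{-iH_L^{\le t}\tau}$ and $e^{-iH_R^{\le t}\tau}$; these latter exponentials act trivially on the middle block $[i_l-1,k+1]$, so in the decomposition (\ref{dec}) they enter only the factors $K_{j_l}^L$ and $K_{j_r}^R$ and contribute nothing to the bond dimension across cuts in $[i_l-1,k]$.

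Second, and crucially, I would reduce the bond dimension to $2^{O(T)}/\poly(\delta\xi')$ by a light-cone truncation. For each cut $j|j+1$ in $[i_l-1,k]$, the Lieb-Robinson bound yields an approximate factorization $\|e^{-iH^{\le t}T}-U_L^{(j)}\otimes V^{(j)}\otimes U_R^{(j)}\|\le\xi'/(2(k-i_l+2))$, where $V^{(j)}$ is supported on the $r=O(T+\log(1/(\delta\xi')))$ sites closest to the cut (using the exponential Lieb-Robinson tails that already underlie the area-law proof in \cite{Hua14}). The operator Schmidt rank of $V^{(j)}$ across the cut is then at most $d^{2r}=2^{O(T)}\poly(1/(\delta\xi'))$. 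Translating this factorization back into the brick-wall product (by absorbing gates lying outside the window into $U_L^{(j)}$ and $U_R^{(j)}$) shows that the cut-crossing portion of $K_T$ can be truncated to bond dimension $2^{O(T)}/\poly(\delta\xi')$. A union bound over the $O(1/\delta)$ cuts in $[i_l-1,k]$ keeps the cumulative truncation error below $\xi'/2$, giving $\|K_T-e^{-iH^{\le t}T}\|\le\xi'$ and establishing (i) and (ii).

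Part (iii) follows by reading off the middle tensors of $K_T$ between the boundary bonds $i_l-1|i_l$ and $k|k+1$: each $K_{j_l,j_r}^M$ is an MPO on $O(1/\delta)$ sites with bond dimension $2^{O(T)}/\poly(\delta\xi')$, so writing all of them out takes time $2^{O(T)}/\poly(\delta\xi')$. The main obstacle is the light-cone truncation itself: one has to make the Lieb-Robinson factorization $U_L^{(j)}\otimes V^{(j)}\otimes U_R^{(j)}$ sufficiently uniform across the $O(1/\delta)$ cuts—incurring only a $\poly(1/\delta)$ factor in the precision budget—and reconcile it with the brick-wall MPO structure of step one, so that a single tensor network $K_T$ simultaneously satisfies the bond-dimension bound at every cut in $[i_l-1,k]$. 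Once that is in hand, the remaining ingredients (standard Suzuki-Trotter error bounds and elementary MPO tensor manipulations) are routine.
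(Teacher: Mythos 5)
Your proposal assembles the right raw ingredients (Lieb--Robinson locality of $e^{-iH^{\le t}T}$, a window of width $O(T+\log(1/\xi'))$ per cut, boundary terms $H_{L}^{\le t},H_R^{\le t}$ confined to the outer blocks), but the step you yourself flag as ``the main obstacle'' --- producing a \emph{single} tensor network with the claimed bond dimension at \emph{every} cut in $[i_l-1,k]$ simultaneously --- is precisely the content of the lemma, and your proposal does not actually supply it. A per-cut statement of the form $\|e^{-iH^{\le t}T}-U_L^{(j)}V^{(j)}U_R^{(j)}\|\le\xi'/\poly(1/\delta)$ (note it is a product, not a tensor product: $e^{-iH^{\le t}T}$ is nowhere near a tensor product across a cut) gives a different approximant for each $j$, with $U_L^{(j)},U_R^{(j)}$ themselves global evolutions of huge Schmidt rank at the other cuts; and truncating a fixed brick-wall Trotter circuit cut by cut does not work either, since the causal cone of a cut in a depth-$(T/\tau)$ circuit already forces Schmidt rank $d^{\Theta(T/\tau)}$ there (with $\tau$ necessarily tiny), and ``absorbing gates outside the window'' neither reduces this nor comes with an efficient compression procedure. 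The paper closes exactly this gap by applying Osborne's decoupling \emph{sequentially}: for $j=i_l-1+2ms'$, $m=0,\ldots,m_0$, it writes $e^{iH^{\le t}T}$ as (decoupled left evolution)$\times$(decoupled right evolution)$\times V'(T)$ with $V'(T)$ supported on a window of width $2s'$ around $j$, the error $T e^{-\Omega(s')}$ coming from comparing the interaction-picture generators $L(\tau)$ and $L'(\tau)$; iterating peels the full evolution into $K_T=\prod_m U_m\prod_m V_m$, where every $U_m$ and $V_m$ acts on an interval of length $O(s')$ with $s'=O(T+\log(m_0/\xi'))$, so each cut is crossed by $O(1)$ factors and the bond dimension is $2^{O(s')}$ everywhere at once; each factor is an explicit exponential of a block Hamiltonian, which also yields item (iii). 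Your Suzuki--Trotter layer is not needed in this route and does not substitute for the sequential decoupling, so as written the proof has a genuine gap at its central step.
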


\begin{proof}
We give a constructive proof so that (iii) will be apparent. Recall that $H^{\le t}$ is local on $\mathcal{H}_{[i_l-s+1,i_r+s]}$ and has bounded norm on the left and the right of this region. Following \cite{Osb06}, for each $i_l-s/2\le j\le i_r+s/2$ consider the unitary operator
\begin{equation}
V(T):=\exp\left(-iH_L^{\le T}T-i\sum_{l=i_l-s}^{j-1}H_lT\right)\exp\left(-i\sum_{l=j+1}^{i_r+s}H_lT-iH_R^{\le T}T\right)\exp(iH^{\le t}T).
\end{equation}
It is straightforward to verify that $V(T)=\mathcal{T}\exp\left(-i\int_0^TL(\tau)\mathrm{d}\tau\right)$ is the time-evolution operator of the ``Hamiltonian'' $L(\tau)=\exp(-iH^{\le t}\tau)H_j\exp(iH^{\le t}\tau)$, where $\mathcal T$ is the time-ordering operator. Define $V'(T)=\mathcal{T}\exp\left(-i\int_0^TL'(\tau)\mathrm{d}\tau\right)$ with
\begin{equation}
L'(\tau)=\exp\left(-i\sum_{l=j-s'+1}^{j+s'-1}H_l\tau\right)H_j\exp\left(i\sum_{l=j-s'+1}^{j+s'-1}H_l\tau\right),~s'\le s/4.
\end{equation}
Since the ``Hamiltonian'' $L'(\tau)$ acts on $\mathcal H_{[j-s'+1,j+s']}$, $V'(T)$ also acts on $\mathcal H_{[j-s'+1,j+s']}$. Moreover, as a consequence of the Lieb-Robinson bound (see Ref. 24 in \cite{Osb06} for a simple direct proof)
\begin{equation}
\|L(T)-L'(T)\|=\exp(-\Omega(s'))\Rightarrow\|V(T)-V'(T)\|=T\exp(-\Omega(s'))=\exp(-\Omega(s'))
\end{equation}
for $s'>O(T)$. Hence,
\begin{equation}
\exp(iH^{\le t}T)=\exp\left(iH_L^{\le T}T+i\sum_{l=i_l-s}^{j-1}H_lT\right)\exp\left(i\sum_{l=j+1}^{i_r+s}H_lT+iH_R^{\le T}T\right)V'(T)+\exp(-\Omega(s')).
\end{equation}
Using this decomposition sequentially for $j=i_l-1+2ms',~m=0,1,2,\ldots,m_0$ with $m_0=\left[\frac{i_r-i_l}{2s'}\right]+1$, we obtain
\begin{eqnarray}
&&\exp(iH^{\le t}T)=\prod_{m=0}^{m_0+1}U_m\prod_{m=0}^{m_0}V_m+m_0\exp(-\Omega(s')),\nonumber\\
&&U_{1\le m\le m_0}=\exp\left(i\sum_{l=i_l+2(m-1)s'}^{i_l-2+2ms'}H_lT\right),\nonumber\\
&&U_0=\exp\left(iH_L^{\le T}T+i\sum_{l=i_l-s}^{i_l-2}H_lT\right),~U_{m_0+1}=\exp\left(i\sum_{l=i_l+2m_0s'}^{i_r+s}H_lT+iH_R^{\le T}T\right),
\end{eqnarray}
and $V_m$ is an operator acting on $\mathcal{H}_{[i_l+(2m-1)s',i_l+(2m+1)s'-1]}$. Let $K_T=\prod_{m=0}^{m_0+1}U_m\prod_{m=0}^{m_0}V_m$ with $s'=O(T+\log(m_0/\xi'))$ such that (i) holds. Clearly, the bond dimension of $K_T$ across the cut $j|j+1$ for each $i_l-1\le j\le k$ is upper bounded by $2^{O(s')}=2^{O(T)}/\poly(\delta\xi')$.
\end{proof}

\begin{lemma} \label{l2}
There is an MPO $K$ such that (i) $\|K-A\|\le\eta\epsilon\delta/1000=:\eta'$; (ii) the bond dimension across the cuts $j|j+1$ for $i_l-1\le j\le k$ is upper bounded by $D=(1/\delta)^{O(1+\sqrt{\log(1/\eta)/\log(1/\delta)}/\epsilon)}$; (iii) $K_{j_l,j_r}^M$'s in the decomposition (\ref{dec}) for $K$ can be constructed in time $\poly(D)$. 
\end{lemma}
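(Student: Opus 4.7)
The plan is to exploit the Fourier/Gaussian integral representation
\[
A=\tfrac{\epsilon'}{\sqrt{2\pi q}}\int_{-\infty}^{+\infty}e^{-\epsilon'^2\tau^2/(2q)}\,e^{i\epsilon'_0\tau}\,e^{-iH^{\le t}\tau}\,d\tau
\]
displayed above Lemma \ref{l1}, and approximate it by a finite linear combination of MPOs, each produced by Lemma \ref{lem18}. Three sources of error have to be balanced: truncating the Gaussian tail, quadrature on the remaining interval, and the per-point MPO approximation.

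First I will truncate the integration range to $[-T_{\max},T_{\max}]$ with $T_{\max}=O(\sqrt{q\log(1/\eta')}/\epsilon')$ so that the Gaussian tail contributes at most $\eta'/3$ in operator norm. Next I will replace the truncated integral by a Riemann sum on a uniform grid of step $\Delta\tau$; since $\tau\mapsto\exp(-iH^{\le t}\tau)$ is $\|H^{\le t}\|$-Lipschitz in operator norm and $\|H^{\le t}\|=\poly(1/\delta,1/\epsilon)$, choosing $\Delta\tau=\Theta(\eta'/(\|H^{\le t}\|T_{\max}))$ keeps the quadrature error below $\eta'/3$ while requiring only $N=\poly(1/\delta,1/\eta,1/\epsilon)$ grid points. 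For each grid point $\tau_i$ I will invoke Lemma \ref{lem18} with precision $\xi'=\Theta(\eta'/N)$ to obtain an MPO $K_{\tau_i}$ of bond dimension $2^{O(T_{\max})}\poly(1/(\delta\xi'))$ across the cuts $j|j+1$ with $i_l-1\le j\le k$, and set
\[
K:=\sum_{i=1}^N\tfrac{\epsilon'}{\sqrt{2\pi q}}e^{-\epsilon'^2\tau_i^2/(2q)}e^{i\epsilon'_0\tau_i}\Delta\tau\cdot K_{\tau_i}.
\]
The triangle inequality combines the three error sources into $\|K-A\|\le\eta'$, establishing (i).

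The bond dimension of $K$ across each relevant cut is at most the direct-sum bound $N\cdot 2^{O(T_{\max})}\poly(1/(\delta\xi'))$. Substituting $q=O(\log(1/\eta))$, $\eta'=\Theta(\eta\epsilon\delta)$, $\epsilon'=\Theta(\epsilon)$, and using $\log(1/\xi')=O(\log(1/\delta))$ in the regime $1/\delta\gg1/\epsilon$, this reduces to $D=(1/\delta)^{O(1+\sqrt{\log(1/\eta)/\log(1/\delta)}/\epsilon)}$, matching (ii). For (iii), Lemma \ref{lem18}(iii) furnishes the middle block $K^M_{j_l,j_r}$ of each $K_{\tau_i}$ in the decomposition (\ref{dec}) in time polynomial in its bond dimension, and taking the direct sum of the $N$ decompositions yields the decomposition of $K$ in total time $\poly(D)$.

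The hard part will be the bookkeeping: verifying that the per-point precision $\xi'$ and the Lieb-Robinson radius $s'$ hidden inside Lemma \ref{lem18} both depend only polylogarithmically on $1/\eta'$ and $1/\delta$, so that $T_{\max}$ dominates $\log D$ rather than the quadrature overhead $\log N$ or the $\log(1/(\delta\xi'))$ factor. Once this is confirmed, the claimed scaling of $D$ is immediate by substitution.
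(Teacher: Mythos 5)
Your proposal is correct and follows essentially the same route as the paper: truncate the Gaussian integral at $T=O(\sqrt{q\log(1/\eta')}/\epsilon')$, discretize with step $\Theta(\eta'\delta/\sqrt{\log(1/\eta')})$, approximate each propagator via Lemma \ref{lem18}, and add the bond dimensions of the resulting MPOs. The only cosmetic difference is that you take per-point precision $\eta'/N$ where the paper uses $\eta'/3$ (relying on the quadrature weights summing to $O(1)$); either choice yields the same final scaling of $D$.
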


\begin{proof}
Following \cite{Has09}, we truncate and discretize the integral in the definition of $A$. Let $t_j=\tau j$ with $j=0,\pm1,\ldots,\pm T/\tau$, and replace the integral by a sum over $t_j$. The truncation error is of order $\exp\left(-\frac{\epsilon'^2T^2}{2q}\right)\le\eta'/3$ for $T=O(\sqrt{q\log(1/\eta')}/\epsilon')$. The discretization error is of order $\epsilon'\tau\|H^{\le t}\|T/\sqrt{2\pi q}\le\eta'/3$ for $\tau=O\left(\frac{\delta\eta'}{\sqrt{\log(1/\eta')}}\right)$. For each $t_j$, Lemma \ref{lem18} implies that the propagator $\exp(-i(H^{\le t}-\epsilon'_0)t_j)$ can be approximated to error $\eta'/3$ by an MPO of bond dimension $2^{O(t_j)}\poly(\delta/\eta')=2^{O(T)}\poly(\delta/\eta')=2^{O(\sqrt {q\log(1/\eta')}/\epsilon')}\poly(\delta/\eta')$ across the cuts $j|j+1$ for $i_l-1\le j\le k$. The number of terms is $2T/\tau+1=O\left(\frac{\sqrt q\log(1/\eta')}{\delta\eta'\epsilon'}\right)$. Hence the bond dimension across the cut $j|j+1$ of the MPO $K$ for $i_l-1\le j\le k$ is upper bounded by $D=(1/\delta)^{O(1+\sqrt{\log(1/\eta)/\log(1/\delta)}/\epsilon)}$. The efficiency of constructing $K_{j_l,j_r}^M$'s in the decomposition (\ref{dec}) for $K$ follows from Lemma \ref{lem18}(iii).
\end{proof}

Lemmas \ref{l1}, \ref{l2} imply $\langle\phi'|H^{\le t}|\phi'\rangle\le\lambda(H^{\le t})+\eta\epsilon'$ for $|\phi'\rangle=K|\Phi'\rangle/\|K|\Phi'\rangle\|$. It is easy to see that $S_k:=\{K_{j_l,j_r}^M|\psi\rangle:\forall j_l,j_r,|\psi\rangle\in S_k^{(3)}\}$ is a $(k,D^2p_1,Dp_2,\Delta_e=\eta)$-support set. Hence, $p_3=(1/\delta)^{O(1+\sqrt{\log(1/\eta)/\log(1/\delta)}/\epsilon)}=(1/\delta)^{O(1)}$.

We briefly comment on the assumption that we have an estimate $\epsilon'_0$ of $\lambda(H^{\le t})$ to error $\xi$. Since $0\le\lambda(H^{\le t})\le1/\delta+2s+2t+1=O(1/\delta)$, we run the whole algorithm (for $H^{\le t}$) with $\epsilon'_0=j\xi$ and obtain a ``candidate $e_i$'' (cf. Lemma \ref{lem5}) for each $j=0,1,\ldots,O(\delta^{-1}/\xi)$. The minimum of these candidates is identified as the ``final $e_i$.'' This completes our algorithm.

\section*{Acknowledgment}

We would like to thank Xie Chen and Aram W. Harrow for discussions. In particular, we learned Lemma \ref{lem0} from A.W.H some time ago. This work was supported by DARPA OLE.


\begin{thebibliography}{10}

\bibitem{AAI10}
D.~Aharonov, I.~Arad, and S.~Irani.
\newblock Efficient algorithm for approximating one-dimensional ground states.
\newblock {\em Physical Review A}, 82(1):012315, 2010.

\bibitem{AAV13}
D.~Aharonov, I.~Arad, and T.~Vidick.
\newblock Guest column: The quantum {PCP} conjecture.
\newblock {\em SIGACT News}, 44(2):47--79, 2013.

\bibitem{AGIK07}
D.~Aharonov, D.~Gottesman, S.~Irani, and J.~Kempe.
\newblock The power of quantum systems on a line.
\newblock In {\em Proceedings of the 48th Annual IEEE Symposium on Foundations
  of Computer Science}, pages 373--383, 2007.

\bibitem{AGIK09}
D.~Aharonov, D.~Gottesman, S.~Irani, and J.~Kempe.
\newblock The power of quantum systems on a line.
\newblock {\em Communications in Mathematical Physics}, 287(1):41--65, 2009.

\bibitem{AKLV13}
I.~Arad, A.~Kitaev, Z.~Landau, and U.~Vazirani.
\newblock An area law and sub-exponential algorithm for 1{D} systems.
\newblock arXiv:1301.1162v1.

\bibitem{BH13}
F.~G. S.~L. Brandao and A.~W. Harrow.
\newblock Product-state approximations to quantum ground states.
\newblock In {\em Proceedings of the 45th Annual ACM Symposium on Theory of
  Computing}, pages 871--880, 2013.

\bibitem{CF15}
C.~T. Chubb and S.~T. Flammia.
\newblock Computing the degenerate ground space of gapped spin chains in
  polynomial time.
\newblock arXiv:1502.06967.

\bibitem{FNW92}
M.~Fannes, B.~Nachtergaele, and R.~Werner.
\newblock Finitely correlated states on quantum spin chains.
\newblock {\em Communications in Mathematical Physics}, 144(3):443--490, 1992.

\bibitem{Has09}
M.~B. Hastings.
\newblock Quantum adiabatic computation with a constant gap is not useful in
  one dimension.
\newblock {\em Physical Review Letters}, 103(5):050502, 2009.

\bibitem{Hua14}
Y.~Huang.
\newblock Area law in one dimension: Degenerate ground states and {R}enyi
  entanglement entropy.
\newblock arXiv:1403.0327.

\bibitem{DMRG}
Y.~Huang.
\newblock A polynomial-time algorithm for the ground state of one-dimensional
  gapped {H}amiltonians.
\newblock arXiv:1406.6355.

\bibitem{LVV13}
Z.~Landau, U.~Vazirani, and T.~Vidick.
\newblock A polynomial-time algorithm for the ground state of 1{D} gapped local
  {H}amiltonians.
\newblock arXiv:1307.5143v1.

\bibitem{Osb06}
T.~J. Osborne.
\newblock Efficient approximation of the dynamics of one-dimensional quantum
  spin systems.
\newblock {\em Physical Review Letters}, 97(15):157202, 2006.

\bibitem{Osb12}
T.~J. Osborne.
\newblock Hamiltonian complexity.
\newblock {\em Reports on Progress in Physics}, 75(2):022001, 2012.

\bibitem{PVWC07}
D.~Perez-Garcia, F.~Verstraete, M.~M. Wolf, and J.~I. Cirac.
\newblock Matrix product state representations.
\newblock {\em Quantum Information and Computation}, 7(5):401--430, 2007.

\bibitem{SC10}
N.~Schuch and J.~I. Cirac.
\newblock Matrix product state and mean-field solutions for one-dimensional
  systems can be found efficiently.
\newblock {\em Physical Review A}, 82(1):012314, 2010.

\bibitem{SWVC08}
N.~Schuch, M.~M. Wolf, F.~Verstraete, and J.~I. Cirac.
\newblock Entropy scaling and simulability by matrix product states.
\newblock {\em Physical Review Letters}, 100(3):030504, 2008.

\bibitem{VC06}
F.~Verstraete and J.~I. Cirac.
\newblock Matrix product states represent ground states faithfully.
\newblock {\em Physical Review B}, 73(9):094423, 2006.

\bibitem{Whi92}
S.~R. White.
\newblock Density matrix formulation for quantum renormalization groups.
\newblock {\em Physical Review Letters}, 69(19):2863--2866, 1992.

\bibitem{Whi93}
S.~R. White.
\newblock Density-matrix algorithms for quantum renormalization groups.
\newblock {\em Physical Review B}, 48(14):10345--10356, 1993.

\end{thebibliography}
\end{document}